\title{\Hs-algebras and nonunital Frobenius algebras:\\
  first steps in infinite-dimensional\\categorical quantum mechanics}
\author{Samson Abramsky\thanks{Supported by an EPSRC Senior Fellowship and by ONR.} and Chris Heunen\thanks{Supported by the
    Netherlands Organisation for Scientific Research (NWO).}\\
  Oxford University Computing Laboratory}
\tikzset{bend angle=45, 
  baseline=(current bounding box.center), 
  inner sep=0ex,
  dot/.style={circle, draw=black, fill=gray, inner sep=.2ex},
  box/.style={inner sep=.3ex, draw, fill=white, rounded corners=.2ex},
  cross/.style={preaction={draw=white, -, line width=3pt}}, 
  every picture/.style=thick, 
  column sep=2ex, 
  row sep=2ex}
\theoremstyle{plain}
\newtheorem{theorem}{Theorem}
\newtheorem{lemma}[theorem]{Lemma}
\newtheorem{proposition}[theorem]{Proposition}
\newtheorem{definition}[theorem]{Definition}
\theoremstyle{nonumberplain} 
\newtheorem{proof}{Proof} 
\newcommand{\qedhere}{\tag*{$\Box$}}
\newcommand{\id}[1][]{\ensuremath{\mathrm{id}_{#1}}}
\newcommand{\Id}[1][]{\ensuremath{\mathrm{Id}}}
\newcommand{\after}{\circ}
\newcommand{\cat}[1]{\ensuremath{\mathbf{#1}}}
\newcommand{\Cat}[1]{\ensuremath{\mathbf{#1}}}
\newcommand{\FdHilb}{\Cat{fHilb}} 
\newcommand{\Hilb}{\Cat{Hilb}}
\newcommand{\Rel}{\Cat{Rel}} 
\newcommand{\FRel}{\Cat{fRel}}
\newcommand{\Mat}{\Cat{Mat}}
\newcommand{\LMat}[1][\field{C}]{\Mat_{\ell^2}(#1)}
\newcommand{\LMpos}{\LMat[\Rpos]}
\newcommand{\LBFRel}{\Cat{lbfRel}}
\newcommand{\CQ}{\Cat{canQuant}}
\newcommand{\field}[1]{\ensuremath{\mathbb{#1}}}
\newcommand{\Rpos}{\field{R}^{+}}
\newcommand{\inprod}[2]{\ensuremath{\langle #1\,|\,#2 \rangle}}
\newcommand{\ket}[1]{{|} #1\rangle}
\newcommand{\tensor}{\ensuremath{\otimes}}
\newcommand{\rrel}{\ensuremath{\xymatrix@1@C-2ex{\ar|(.4)@{|}[r]&}}}
\newcommand{\Hs}{H*} 
\newcommand{\ie}{\textit{i.e.}~} 
\newcommand{\eg}{\textit{e.g.}~} 
\begin{document}

\maketitle

\begin{abstract}
  A certain class of Frobenius algebras has been used to characterize
  orthonormal bases and observables on finite-dimensional Hilbert
  spaces. The presence of units in these algebras means that they can
  only be realized finite-dimensionally. We seek a suitable
  generalization, which will allow arbitrary bases, and therefore
  observables with discrete spectra, to be described within categorical
  axiomatizations of quantum mechanics. We develop a definition of
  H*-algebra that can be interpreted in any symmetric monoidal dagger
  category, reduces to the classical notion from functional analysis
  in the category of (possibly infinite-dimensional) Hilbert spaces,
  and hence  provides a categorical way to speak about orthonormal
  bases and quantum observables in arbitrary dimension. Moreover,
  these algebras reduce to the usual notion of Frobenius algebra in
  compact categories. We then investigate the relations between
  nonunital Frobenius algebras and H*-algebras. We give a number of
  equivalent conditions to characterize  when they coincide in the
  category of Hilbert spaces. We also show that they always coincide
  in categories of generalized relations and positive matrices. 
\end{abstract}

\section{Introduction}

The context for this paper comes from the ongoing work on \emph{categorical 
quantum mechanics}~\cite{abramskycoecke:categoricalsemantics,abramskycoecke:categoricalquantummechanics}. This work has
shown how large parts of quantum mechanics can be axiomatized in terms
of monoidal dagger categories and structures definable within them. 
This axiomatization can be used to perform high-level reasoning and
calculations relating to quantum information, using diagrammatic
methods~\cite{selinger:graphicallanguages}; and also as a basis for
exploring foundational issues in quantum mechanics and quantum
computation. In particular, a form of Frobenius algebras has been
used to give an algebraic axiomatization of \emph{orthonormal bases}
and \emph{observables}~\cite{coeckepavlovic:classicalobjects,
coeckepavlovicvicary:bases}. 

The structures used so far (\eg compact closure, Frobenius algebras)
have only finite-dimensional realizations in Hilbert spaces.  
This raises some interesting questions and challenges:
\begin{itemize}
  \item Find a good general notion of Frobenius structure which works
    in the infinite-dimensional case in $\Hilb$.
  \item Use this to characterize general bases and therefore general
    observables with discrete spectra.
  \item Similarly extend the analysis for other categories.
  \item Clarify the mathematics, and relate it to the wider literature. 
\end{itemize}
As we shall see, an intriguing problem remains open, but
much of this program of work has been accomplished.  

The further contents of the paper are as
follows. Section~\ref{sec:background} recalls some background on
monoidal dagger categories and Frobenius algebras, and poses the
problem. Section~\ref{sec:hstar} introduces the key notion of 
\Hs-algebra, in the general setting of symmetric monoidal dagger
categories. In Section~\ref{sec:hstarinhilb}, we prove our results
relating to $\Hilb$, the category of Hilbert spaces (of unrestricted
dimension). We show how \Hs-algebras provide exactly the right
algebraic notion to characterize orthonormal bases in arbitrary
dimension. We give several equivalent characterizations of when
\Hs-algebras and nonunital Frobenius algebras coincide in the category
of Hilbert spaces. Section~\ref{sec:hstarinrel} studies \Hs-algebras
in categories of generalized relations and positive matrices. We show
that in these settings, where no phenomena of `destructive
interference' arise, \Hs-algebras and nonunital Frobenius algebras
always coincide. Finally, Section~\ref{sec:outlook} provides an
outlook for future work.  

\section{Background}
\label{sec:background}

The basic setting is that of \emph{dagger symmetric monoidal
categories}. We briefly recall the definitions,
referring to~\cite{abramskycoecke:categoricalquantummechanics} for
further details and motivation.  

A \emph{dagger category} is a category $\cat{D}$ equipped with an
identity-on-objects, contravariant, strictly involutive
functor. Concretely, for each arrow $f \colon A \to B$, there is an arrow
$f^\dag \colon B \to A$, and this assignment satisfies
\[ 
  \id^\dag = \id, \qquad 
  (g \circ f)^\dag = f^\dag \circ g^\dag, \qquad 
  f^{\dag\dag} = f \, . 
\] 
An arrow $f \colon A \to B$ is \emph{dagger monic} when $f^\dag
\circ f = \id[A]$, and a \emph{dagger iso(morphism)} if both $f$ and
$f^\dag$ are dagger monics. 

A \emph{symmetric monoidal dagger category} is a dagger category with
a symmetric monoidal structure $(\cat{D}, \tensor, I, \lambda,
\rho, \alpha, \sigma)$ such that 
\[ 
  (f \otimes g)^\dag = f^\dag \otimes g^\dag
\]
and moreover the natural isomorphisms $\lambda$, $\rho$, $\alpha$,
$\sigma$ are componentwise dagger isomorphisms. 

\subsubsection*{Examples}
\label{ex:cats}
  \begin{itemize}
  \item The category $\Hilb$ of Hilbert spaces and continuous linear
    maps, and its (full) subcategory $\FdHilb$ of finite-dimensional
    Hilbert spaces. Here the dagger is the adjoint, and the tensor
    product has its standard interpretation for Hilbert spaces.
    Dagger isomorphisms are \emph{unitaries}, and dagger monics are
    \emph{isometries}.
  \item The category $\Rel$ of sets and relations. Here the dagger is
    relational converse, while the monoidal structure is given by the
    cartesian product. This generalizes to relations valued in a
    commutative quantale~\cite{rosenthal:quantales}, and to the
    category of relations of any regular
    category~\cite{carbonikasangianstreet:relations}. This has a full sub-category $\FRel$, of finite sets and relations.
  \item The category $\LBFRel$, of \emph{locally bifinite
    relations}. This is the subcategory of $\Rel$ comprising those
    relations which are image-finite, meaning that each element in the
    domain is related to only finitely many elements in the codomain,
    and whose converses are also image-finite. 
    This forms a monoidal dagger subcategory of $\Rel$. It serves as
    a kind of qualitative approximation of the passage from finite- to
    infinite-dimensional Hilbert spaces. For example, a set carries a
    compact structure in $\LBFRel$ if and only if it is finite. 
  \item A common generalization of $\FdHilb$ and $\FRel$ is obtained by forming the category
    $\Mat(S)$, where $S$ is a commutative semiring with a specified
    involution~\cite{heunen:hilbcatsembedding}. Objects of $\Mat(S)$ 
    are finite sets, and morphisms are maps \mbox{$X \times Y \to S$}, which
    we think of as `$X$ times $Y$ matrices'. Composition is by matrix
    multiplication, while the dagger is conjugate transpose, where
    conjugation of a matrix means elementwise application of the
    involution on $S$. The tensor product of $X$ and $Y$ is given by
    $X \times Y$, with the action on matrices given by componentwise
    multiplication, corresponding to the `Kronecker product' of
    matrices. If we take $S = \field{C}$, this yields a category
    equivalent to $\FdHilb$, while taking $S$ to be the Boolean
    semiring $\{0, 1\}$, with trivial involution, gives $\FRel$.
  \item An infinitary generalization of $\Mat(\field{C})$ is given by
    $\LMat$. This category has arbitrary sets as objects, and its
    morphisms $X \to Y$ are matrices \mbox{$M \colon X \times Y \to \field{C}$} such that
    for each $x \in X$, the family $\{ M(x, y) \}_{y \in Y}$ is
    $\ell^2$\-summable; and for each $y \in Y$, the family $\{ M(x, y)
    \}_{x \in X}$ is $\ell^2$\-summable. The category $\Hilb$ is
    equivalent to a (nonfull) subcategory of
    $\LMat$~\cite[Theorem~3.1.7]{blanketal:hilbert}.  
  \end{itemize}

\subsubsection*{Graphical Calculus}
We briefly recall the graphical calculus for symmetric monoidal dagger
categories \cite{selinger:graphicallanguages}. 
This can be seen as a two-dimensional version of
\emph{Dirac notation}, which allows equational reasoning
to be performed graphically in a sound and complete fashion.
A morphism $f \colon X \to Y$ is represented pictorially as
$
 \begin{tikzpicture}[scale=0.5,font=\scriptsize]%
  \node (X) at (0,-.75) {$X$};
\node (Y) at (0,.75) {$Y$};
\draw (X) to (Y);
\node [box] at (0,0) {$\:f\:$};
 \end{tikzpicture}%
$, the identity 
on $X$ simply becomes
$
 \begin{tikzpicture}[thick,scale=0.5,font=\scriptsize]%
  \node (X) at (0,-.5) {$X$};
\node (Y) at (0,.5) {$X$};
\draw (X) to (Y);
 \end{tikzpicture}%
$, and composition
and tensor products  appear as follows. 
\[
  
 \begin{tikzpicture}[scale=0.5]%
  \node (X) at (0,-1.5) {$X$};
\node (Z) at (0,1.5) {$Z$};
\draw (X) to (Z);
\node [box] at (0,0) {$\;g \after f\;$};
 \end{tikzpicture}%

  \;=\;
  
 \begin{tikzpicture}[scale=0.5]%
  \node (X) at (0,-1.5) {$X$};
\node (Y) at (0,0) {};
\node (Z) at (0,1.5) {$Z$};
\draw (X) to (Z);
\node [box] at (0,0.5) {$\;g\vphantom{f}\;$};
\node [box] at (0,-0.5) {$\;f\;$};
 \end{tikzpicture}%

  \qquad\qquad
  
 \begin{tikzpicture}[scale=0.5]%
  \node (WX) at (0,-1.5) {$W \tensor X$};
\node (YZ) at (0,1.5) {$Y \tensor Z$};
\draw (WX) to (YZ);
\node [box] at (0,0) {$\;f \tensor g\;$};
 \end{tikzpicture}%

  \;=\;
  
 \begin{tikzpicture}[scale=0.5]%
  \node (W) at (-.5,-1.5) {$W$};
\node (Y) at (-.5,1.5) {$Y$};
\draw (W) to (Y);
\node [box] at (-.5,0) {$\;f\;$};
\node (X) at (.75,-1.5) {$X$};
\node (Z) at (.75,1.5) {$Z$};
\draw (X) to (Z);
\node [box] at (.75,0) {$\;g\vphantom{f}\;$};
 \end{tikzpicture}%

\]
The symmetry isomorphism $\sigma$ is drawn as
$\vcenter{\hbox{
 \begin{tikzpicture}[scale=0.25]%
  \node (0) at (-1,1) {};
\node (1) at (1,1) {};
\node (2) at (-1,-1) {};
\node (3) at (1,-1) {};
\draw[in=90, out=270] (0) to (3);
\draw[cross, in=90, out=270] (1) to (2);
 \end{tikzpicture}%
}}$.
The dagger is represented graphically by a horizontal
reflection. 

\subsection{Dagger Frobenius algebras}

Frobenius algebras are a classic notion in
mathematics~\cite{nakayama:frobenius}. A particular form of such
algebras was introduced in the general setting of monoidal dagger
categories by Coecke and Pavlovi{\'c}
in~\cite{coeckepavlovic:classicalobjects}. In their version, a
\emph{dagger Frobenius structure} on an object $A$ in a dagger
monoidal category is a commutative comonoid 
$(\xymatrix@1{I & A \ar|-{\varepsilon}[l] \ar|-{\delta}[r] & A \tensor
A})$ satisfying certain additional equations:  
\begin{align}
      (\id[A] \otimes \delta) \after \delta 
  & = (\delta \otimes \id[A]) \after \delta, \tag{A} \\
      (\id[A] \tensor \varepsilon) \after \delta 
  & = \id[A], \tag{U} \\
      \sigma \after \delta 
  & = \delta, \tag{C} \\
      \delta^\dag \after \delta 
  & = \id[A], \tag{M} \\
      \delta\after \delta^\dag 
  & = (\delta^\dag \tensor \id[A]) \after (\id[A] \tensor \delta). \tag{F}
\end{align}
These equations become more perspicuous when represented
diagrammatically, as below. Here, we draw the comultiplication
$\delta$ as $\vcenter{\hbox{
 \begin{tikzpicture}[scale=0.2]%
  \node (3) at (-1,2) {};
\node [style=dot] (1) at (0,1) {};
\node (0) at (0,0) {};
\node (2) at (1,2) {};
\draw [bend right=45] (1) to (2);
\draw [bend right=45] (3) to (1);
\draw (1) to (0);
 \end{tikzpicture}%
}}$,
and the counit $\varepsilon$ as
$\vcenter{\hbox{
 \begin{tikzpicture}[scale=0.2]%
  \node (0) at (0,0) {};
\node [style=dot] (1) at (0,1) {};
\draw (0) to (1);
 \end{tikzpicture}%
}}$. 
\[
  
 \begin{tikzpicture}[]%
  \matrix[matrix of math nodes]
{    |(A)| &            & |(B)|      & |(C)| \\
           & |(D)[dot]| &            & |(E)| \\
           &            & |(F)[dot]|         \\
           &            & |(G)|              \\
};
\draw (F) to (G);
\draw (A) to [bend right] (D);
\draw (D) to [bend right] (B);
\draw (D) to [bend right] (F);
\draw (F) to [bend right] (E) -- (C);
 \end{tikzpicture}%
 \; \stackrel{\text{(A)}}{=} \; 
 \begin{tikzpicture}[]%
  \matrix[matrix of math nodes]
{    |(A)| & |(B)| &       & |(C)| \\
     |(D)| &       & |(E)[dot]| \\
	   & |(F)[dot]| \\
	   & |(G)| \\
};
\draw (B) to [bend right] (E);
\draw (E) to [bend right] (C);
\draw (E) to [bend left] (F);
\draw (F) to [bend left] (D) -- (A);
\draw (F) to (G);
 \end{tikzpicture}%

  \qquad\qquad\qquad
  
 \begin{tikzpicture}[]%
  \matrix[matrix of math nodes]
{ |(C)| &            & |(D)[dot]| \\
	& |(E)[dot]|              \\
	& |(F)|                   \\
};
\draw (C) to [bend right] (E);
\draw (E) to [bend right] (D);
\draw (E) to (F);
 \end{tikzpicture}%
 \; \stackrel{\text{(U)}}{=} \; 
 \begin{tikzpicture}[]%
  \matrix[matrix of math nodes]{ |(A)| \\ \\ \\ |(B)| \\};
\draw (A) to (B);
 \end{tikzpicture}%

  \qquad\qquad\qquad
  
 \begin{tikzpicture}[]%
  \matrix[matrix of math nodes]
{ |(A)| &            & |(B)| \\ 
  |(C)| &            & |(D)| \\ 
	& |(E)[dot]|         \\
	& |(F)|              \\
};
\draw (E) to (F);
\draw (E) to [bend right] (D) .. controls (B) and (C) .. (A);
\draw[cross] (E) to [bend left] (C) .. controls (A) and (D) .. (B);
 \end{tikzpicture}%
 \; \stackrel{\text{(C)}}{=} \; 
 \begin{tikzpicture}[]%
  \matrix[matrix of math nodes]
{ |(C)| &            & |(D)| \\
	& |(E)[dot]|         \\
	& |(F)|              \\
};
\draw (C) to [bend right] (E);
\draw (E) to [bend right] (D);
\draw (E) to (F);
 \end{tikzpicture}%

\]
\[ 
  
 \begin{tikzpicture}[]%
  \matrix[matrix of math nodes]
{       & |(A)|              \\
	& |(B)[dot]|         \\
  |(C)| &            & |(D)| \\
	& |(E)[dot]|         \\
	& |(F)|              \\
};
\draw (A) to (B);
\draw (B) to [bend right] (C) to [bend right] (E);
\draw (B) to [bend left] (D) to [bend left] (E);
\draw (E) to (F);
 \end{tikzpicture}%
 \; \stackrel{\text{(M)}}{=} \; 
 \begin{tikzpicture}[]%
  \matrix[matrix of math nodes]{ |(A)| \\ \\ \\ \\ |(B)| \\};
\draw (A) to (B);
 \end{tikzpicture}%

  \qquad\qquad\qquad
  
 \begin{tikzpicture}[]%
  \matrix[matrix of math nodes]
{          & |(A)|      &       &            & |(B)| \\
	 & |(C)[dot]| &       &                    \\
   |(D)| &            & |(E)| &            & |(F)| \\
	 &            &       & |(G)[dot]|         \\
   |(H)| &            &       & |(I)|              \\
};
\draw (C) to [bend right] (D) -- (H);
\draw (C) to (A);
\draw (C) to [bend left] (E) to [bend right] (G);
\draw (G) to (I);
\draw (G) to [bend right] (F) -- (B);
 \end{tikzpicture}%
 \; \stackrel{\text{(F)}}{=} \; 
 \begin{tikzpicture}[]%
  \matrix[matrix of math nodes]
{    |(A)| &            & |(B)| \\
	 & |(C)[dot]|         \\ 
	 & |(D)[dot]|         \\         
   |(E)| &            & |(F)| \\
};
\draw (A) to [bend right] (C);
\draw (C) to [bend right] (B);
\draw (C) to (D);
\draw (E) to [bend left] (D);
\draw (D) to [bend left] (F);
 \end{tikzpicture}%

\]
A `right-handed version' of the Frobenius law (F) follows from (C); in
the noncommutative case we should add this symmetric version (F') to
axiom (F). 

\subsection{Dagger Frobenius algebras in quantum mechanics}

Frobenius algebras provide a high-level algebraic way of talking about
\emph{orthonormal bases}, and hence can be seen as modeling quantum
mechanical \emph{observables}.  

To put this in context, we recall the \emph{no-cloning
theorem}~\cite{wootterszurek:nocloning}, which says that 
there is no quantum evolution (\ie unitary operator) $f 
\colon H \to H \tensor H$ such that, for any $\ket{\phi} \in H$,  
\[ 
  f\ket{\phi} = \ket{\phi} \tensor\ket{\phi}. 
\]
A general form of no-cloning holds for structural reasons in
categorical quantum mechanics~\cite{abramsky:nocloning}. In
particular, there is no \emph{natural}, \ie uniform or 
basis-independent, family of diagonal morphisms in a compact closed
category, unless the category collapses, so that endomorphisms are
scalar multiples of the identity.  

However, if we drop naturality, we \emph{can} define such maps in
$\Cat{Hilb}$ in a basis-dependent fashion. Moreover, it turns out that such
maps can be used to \emph{uniquely determine} bases. 
Firstly, consider \emph{copying maps}, which can be defined in
arbitrary dimension: for a given basis $\{\ket{i}\}_{i \in I}$ of $H$,
define $\delta \colon H \to H \tensor H$ by (continuous linear
extension of) $\ket{i} \mapsto \ket{ii}$.

For example, consider the map $\delta_{\text{std}} \colon \field{C}^2 \to
\field{C}^2 \tensor \field{C}^2$ defined by
\[ \ket{0} \mapsto \ket{00}, \qquad \ket{1} \mapsto \ket{11} . \]
By construction, this copies the elements of the
computational basis --- and  \emph{only} these, as in general
\[
  \delta_{\text{std}}(\alpha\ket{0}+\beta\ket{1}) =
  \alpha\ket{00}+\beta\ket{11} \neq (\alpha\ket{0}+\beta\ket{1})
  \tensor (\alpha\ket{0}+\beta\ket{1}).
\]
Next, consider \emph{deleting maps} $\varepsilon \colon H \to
\field{C}$ by linearly extending $\ket{e_i} \mapsto 1$. In
contrast to copying, these can be defined in \emph{finite dimension only}. 
It is straightforward to verify that these maps define a
dagger Frobenius structure on $H$. Moreover, the following result
provides a striking converse. 

\begin{theorem}
  \label{cpvth}
  \cite{coeckepavlovicvicary:bases}
  Orthonormal bases of a finite-dimensional Hilbert space $H$ are in one-to-one
  correspondence with dagger Frobenius structures on $H$.
  \qed
\end{theorem}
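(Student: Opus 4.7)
The plan is to construct maps in both directions and show they are mutually inverse. For the direction from bases to structures, given an orthonormal basis $\{\ket{i}\}_{i \in I}$ of $H$, I define $\delta \colon \ket{i} \mapsto \ket{ii}$ and $\varepsilon \colon \ket{i} \mapsto 1$, extended by linearity. One then verifies the axioms (A), (U), (C), (M), (F) by direct computation on basis vectors, using that $\delta^\dag$ acts as $\ket{ij} \mapsto \delta_{ij}\,\ket{i}$. For example, (F) reduces to checking that both composites send $\ket{ij}$ to $\delta_{ij}\,\ket{iii}$.

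For the converse, given a dagger Frobenius structure $(\delta,\varepsilon)$ on $H$, I would identify a canonical orthonormal basis as the set of \emph{copyable} vectors: those $v \in H$ satisfying $\delta(v) = v \tensor v$ and $\varepsilon(v) = 1$. The key intermediate object is the algebra $(H, \mu, \eta)$ with $\mu := \delta^\dag$ and $\eta := \varepsilon^\dag$: associativity is dual to (A), commutativity dual to (C), and the unit law dual to (U), so $H$ is a commutative, associative, unital, finite\-dimensional $\field{C}$\-algebra. The Frobenius law (F) together with (M) and the dagger provides a trace $\tau := \varepsilon \after \mu$ whose induced pairing agrees with the Hilbert space inner product up to conjugation, making $H$ into a finite\-dimensional Hilbert algebra.

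The central technical step is then to show this algebra is semisimple, i.e. has no nonzero nilpotents: if $x \neq 0$ were nilpotent, one uses the trace pairing together with positivity of $\inprod{-}{-}$ to track $\inprod{x^k}{x^k}$ and derive a contradiction. Semisimplicity then lets the Artin\-Wedderburn theorem decompose $H \cong \field{C}^n$ as a $*$\-algebra, with the primitive idempotents forming a mutually orthogonal family; axiom (M) normalizes them, producing an orthonormal basis whose elements are precisely the copyable vectors.

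Finally, I would check that the two constructions are mutually inverse: the copyable vectors of the copying/deleting structure on $\{\ket{i}\}$ are exactly $\{\ket{i}\}$, and the structure constructed from the copyable basis of $(\delta,\varepsilon)$ recovers $(\delta,\varepsilon)$ on each basis element. The main obstacle will be the semisimplicity step, where the positivity enforced by the dagger must be combined with (F) and (M) to exclude nilpotents; the axiom\-checking in the forward direction is mechanical, but it is this rigidity imposed by the dagger on a finite\-dimensional commutative Frobenius algebra that makes the converse work and, as foreshadowed by the paper's introduction, that will ultimately need to be replaced in infinite dimensions.
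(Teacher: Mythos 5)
Your proposal is correct and follows essentially the same route as the paper's own argument (sketched in Section~\ref{subsec:furtherconditions}): both hinge on using the dagger to equip the unital commutative Frobenius algebra with a compatible involution, deducing semisimplicity, invoking Wedderburn to split it as $\bigoplus_I \field{C}$, and identifying the resulting orthogonal primitive idempotents with the copyable vectors, normalized via (M). The only nit is that your nilpotent-exclusion step should iterate $y \mapsto y^{*}y$ (equivalently, track powers of the self-adjoint element $x^{*}x$) rather than powers of $x$ itself, since $\inprod{x^k}{x^k}$ can vanish for $k \geq 2$ without contradiction; with that adjustment it is exactly the standard positivity argument, and the approach is sound.
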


This result in fact follows easily from previous results in the literature on Frobenius algebras \cite{abrams:thesis}; we will give a short proof from
the established literature in Section~\ref{subsec:furtherconditions}.    

Another result provides a counterpart---at first sight displaying very
different looking behaviour---in the category $\Rel$.  

\begin{theorem}
  \cite{pavlovic:frobeniusinrel}
  Dagger Frobenius structures in the category $\Rel$ correspond
  to disjoint unions of abelian groups. 
  \qed
\end{theorem}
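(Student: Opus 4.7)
My plan is to unfold each Frobenius axiom pointwise in $\Rel$, extract a partial commutative group structure on the carrier, and recognize it as a disjoint union of abelian groups. A comultiplication $\delta$ on a set $X$ amounts to a ternary relation $R \subseteq X \times (X \times X)$; its adjoint $m := \delta^\dag$ can then be read as a partial, a priori multi-valued, binary operation, and the counit $\varepsilon$ selects a subset $U \subseteq X$ of intended units.

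First I would pass each of the five axioms through this dictionary. Axiom (M), $\delta^\dag \after \delta = \id$, forces $m$ to be a \emph{partial function}: every element has at least one splitting, and no pair $(y, z)$ can split two distinct elements. Axioms (A), (C), and (U) then say that $m$ is associative, commutative, and that each element has a unit drawn from $U$ acting trivially on it, so $(X, m, U)$ is already a commutative partial monoid.

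The substantive content then lies in the Frobenius law (F), which pointwise asserts that for all $a, b, c, d \in X$, the product $ab$ is defined with $ab = cd$ if and only if there exists $u \in X$ with $ud = b$ and $au = c$. The key observation is that instantiating (F) with $(a, b, c, d) = (a, e, e, a)$, where $e$ is a unit of $a$, immediately yields a $u$ with $au = e$ and $ua = e$, hence a two-sided inverse $a^{-1}$. Further instantiations of (F), always obtained by feeding units or inverses into appropriate slots, let me show that each element has a \emph{unique} unit and that the relation $x \sim y$ defined by ``$m(x, y)$ is defined'' is an equivalence on $X$.

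Once these facts are established, the equivalence classes of $\sim$, equipped with the restriction of $m$, are abelian groups with units in $U$ and inverses as above. Conversely, any disjoint union $\bigsqcup_i G_i$ of abelian groups defines a dagger Frobenius structure in $\Rel$ by taking $m$ to be the group operation within each component (undefined across components) and $U$ to be the set of identities; the five axioms then become a direct calculation. The expected obstacle is the combinatorial manipulation of (F) needed in the middle step, in particular transitivity of $\sim$, uniqueness of the component unit, and compatibility of the operation across classes; these all require carefully chosen instantiations of the Frobenius law. Once they are nailed down, the bijection between dagger Frobenius structures on $X$ and partitions of $X$ into abelian groups is immediate.
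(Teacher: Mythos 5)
Your proposal is essentially sound and would yield the theorem, but it takes a genuinely different route from the paper's proof --- in fact it reconstructs the older, unit-based argument that the paper explicitly sets out to avoid. Your key move is to feed the counit into the Frobenius law: instantiating (F) at $(a,e,e,a)$ with $e$ a unit of $a$ manufactures a two-sided inverse, and the group structure then follows from the partial-monoid structure given by (A), (C), (U), (M). The paper instead never touches $\varepsilon$ or (U): it defines $x \sim y$ by ``$xy$ is defined'' exactly as you do, but proves reflexivity and transitivity of $\sim$ using only (M), (C), (A) and (F) (via the splitting $a = a_1 a_2$ guaranteed by (M), and an auxiliary element $\bar b$ extracted from (F)), and then gets the group structure on each equivalence class from Huntington's characterization --- a semigroup $S$ with $aS = S = Sa$ for all $a$ is a group --- rather than by exhibiting units and inverses up front. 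What your approach buys is directness for $\Rel$ itself, where (U) is available and the inverse falls out of a single instantiation of (F); what the paper's approach buys is that the entire argument transfers verbatim to nonunital settings such as $\LBFRel$, quantale-valued relations, and positive $\ell^2$-matrices, which is the point of redoing Pavlovi\'c's theorem at all. Two smaller remarks: the combinatorial middle step you flag (transitivity of $\sim$, uniqueness of the component unit) is genuinely where the work lies and is not automatic even with units --- you should check, for instance, that ``$u$ acts trivially'' only holds where the product is defined, so uniqueness of $e$ needs its own (F)-instantiation; and your converse direction (a disjoint union of abelian groups yields a dagger Frobenius structure, with the counit picking out the set of identities) is correct and matches what the correspondence requires.
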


We shall provide a different proof of this result in Section~\ref{frobrelsec}, which makes no use of units, and hence generalizes to a wide range of other situations, such as locally bifinite and quantale-valued relations, and positive $\ell_2$-matrices.

\subsection{The problem}

The notion of Frobenius structure as defined above, which requires a
unit, limits us to the \emph{finite-dimensional case} in $\Hilb$, as the following lemma shows. 

\begin{lemma}
  A Frobenius algebra in $\Cat{Hilb}$ is unital if and only if it is
  finite-dimensional. 
\end{lemma}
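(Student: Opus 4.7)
The substantive content is the implication $(\Rightarrow)$ that unital $\Rightarrow$ finite-dimensional; the converse is easy, since on any finite-dimensional Hilbert space one may pick an orthonormal basis and obtain a unital dagger Frobenius algebra via the copying/deleting construction recalled just before Theorem~\ref{cpvth}. For the substantive direction, the plan is to use the unit to equip $A$ with a self-dual compact structure, and then to invoke the classical fact that in $\Hilb$ only the finite-dimensional spaces are dualizable.

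Concretely, given a unital dagger Frobenius algebra $(A, \delta, \varepsilon)$, I would define
\[
  \eta := \delta \after \varepsilon^\dag \colon I \to A \tensor A,
  \qquad
  \varepsilon_A := \varepsilon \after \delta^\dag \colon A \tensor A \to I,
\]
and verify the snake identity $(\varepsilon_A \tensor \id[A]) \after (\id[A] \tensor \eta) = \id[A]$. After expanding, the Frobenius law (F) rewrites the inner composite as $\delta \after \delta^\dag$; the two outer pieces then collapse to $\id[A]$ via the unit axiom (U) and its dagger, using commutativity (C) on the side that (U) does not treat directly. The other snake is obtained symmetrically. Hence $A$ carries a compact structure making it its own dual.

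The one nontrivial input — and the main obstacle of the proof — is that a self-dual object in $\Hilb$ must be finite-dimensional. The cleanest justification is that $\eta(1) \in A \tensor A$ corresponds, under the standard isomorphism between the Hilbert tensor product and the Hilbert–Schmidt operators on $A$, to a Hilbert–Schmidt operator $T$; the snake identity then forces $\id[A]$ itself to factor through $T$ as a Hilbert–Schmidt map, so $\id[A]$ is Hilbert–Schmidt, which is possible only when $\dim A < \infty$. Equivalently, the categorical dimension $\varepsilon_A \after \sigma \after \eta \colon I \to I$ must be a finite scalar.
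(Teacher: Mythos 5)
Your forward direction (unital $\Rightarrow$ finite-dimensional) is sound, and it is essentially the argument the paper itself displays immediately after the lemma: $\eta = \delta \after \varepsilon^\dag$ and $\varepsilon \after \delta^\dag$ yield a self-dual compact structure via (C), (F) and (U), and a dualizable object of $\Hilb$ must be finite-dimensional. (The paper's proof environment instead cites Kaplansky for this direction, but the compact-structure argument is the one it actually writes out.) Your justification of the last step is also correct in substance, though ``$\id[A]$ factors through $T$ as a Hilbert--Schmidt map'' should be made precise via the Schmidt decomposition: writing $\eta(1) = \sum_n \lambda_n\, x_n \tensor y_n$ with $(\lambda_n) \in \ell^2$, the snake identity gives $v = \sum_n \lambda_n\, \varepsilon_A(v \tensor x_n)\, y_n$, which exhibits $\id[A]$ as a bounded map followed by the Hilbert--Schmidt diagonal operator determined by $(\lambda_n)$; since the Hilbert--Schmidt operators form an ideal, $\id[A]$ is Hilbert--Schmidt and $\dim A < \infty$.

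The genuine gap is in the direction you dismiss as easy. The claim to be proved is that a \emph{given} Frobenius algebra $(A,\delta)$ --- a comultiplication satisfying (A), (C), (M), (F) --- on a finite-dimensional Hilbert space admits a counit satisfying (U). Your argument instead picks an orthonormal basis and constructs \emph{some other} unital Frobenius structure on the same space; that shows finite-dimensional spaces \emph{carry} unital Frobenius structures, not that the structure at hand is unital, and the two are in general different comultiplications. The correct route is the one the paper records as Proposition~\ref{prop:cptimpliesunital}: $\FdHilb$ is dagger compact, and in any dagger compact category every Frobenius algebra acquires a counit, built from $\delta$ together with the compact cups and caps and verified using (F), (C) and (M). This direction is not a triviality --- it is precisely the statement that fails outside the compact setting, which is the whole point of the paper --- so it needs that proposition (or the result of Kock cited in the paper's proof) rather than a basis construction.
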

\begin{proof}
  Sufficiency is shown in~\cite[3.6.9]{kock:frobenius}. Necessity
  follows from~\cite[Corollary to Theorem 4]{kaplansky:dualrings}.
\end{proof}

In fact, a Frobenius structure on an object $A$ induces a
\emph{compact} (or \emph{rigid}) structure on $A$, with $A$ as its own
dual (see~\cite{abramskycoecke:categoricalquantummechanics}). Indeed,
put \mbox{$\eta = \delta \after \varepsilon^\dag \colon 
I \to A \tensor A$}. In the category $\FdHilb$, for example, $\eta
\colon \field{C} \to \field{C}^2 \tensor \field{C}^2$ is an
\emph{entangled state preparation}:
\[ 
    \eta_{\text{std}} 
  = \delta_{\text{std}}\after \varepsilon^\dag_{\text{std}} 
  = (1 \mapsto \delta_{\text{std}}(\ket{0}+\ket{1}))
  = (1 \mapsto \ket{00} + \ket{11}). 
\]
In general it is easy to see that $\eta$ indeed provides a dagger
compact structure on $A$, with $A^*=A$:
\[
  
 \begin{tikzpicture}[scale=0.33]%
  \node [style=dot] (0) at (-1, 3) {};
\node (1) at (2, 3) {};
\node [style=dot] (2) at (-1, 2) {};
\node (3) at (-2, 1) {};
\node (4) at (0, 1) {};
\node (5) at (2, 1) {};
\node (6) at (0, -0) {};
\node (7) at (2, -0) {};
\node [style=dot] (8) at (1, -1) {};
\node (9) at (-2, -2) {};
\node [style=dot] (10) at (1, -2) {};
\draw[looseness=0.75, out=-90, in=90] (4.north) to (7.south);
\draw[looseness=1.25, bend left=45] (2) to (4);
\draw (5) to (1);
\draw[looseness=1.25, bend left=45] (3.center) to (2);
\draw[looseness=1.25, bend right=45] (8) to (7);
\draw[cross, looseness=0.75, out=90, in=270] (6.south) to (5.north);
\draw (8) to (10);
\draw[looseness=1.25, bend right=45] (6) to (8);
\draw (9.center) to (3.center);
\draw (2) to (0);
 \end{tikzpicture}%

  \; \stackrel{\text{(C)}}{=} \;
  
 \begin{tikzpicture}[scale=0.5]%
  \matrix[matrix of math nodes]
{        & |(A)[dot]| &       &            & |(B)| \\
	 & |(C)[dot]| &       &                    \\
   |(D)| &            & |(E)| &            & |(F)| \\
	 &            &       & |(G)[dot]|         \\
   |(H)| &            &       & |(I)[dot]|         \\
};
\draw (C) to [bend right] (D) -- (H);
\draw (C) to (A);
\draw (C) to [bend left] (E) to [bend right] (G);
\draw (G) to (I);
\draw (G) to [bend right] (F) -- (B);
 \end{tikzpicture}%

  \; \stackrel{\text{(F)}}{=} \;
  
 \begin{tikzpicture}[scale=0.5]%
  \matrix[matrix of math nodes]
{  |(A)[dot]| &            & |(B)|      \\
    	      & |(C)[dot]|              \\ 
	      & |(D)[dot]|              \\         
   |(E)|      &            & |(F)[dot]| \\
};
\draw (A) to [bend right] (C);
\draw (C) to [bend right] (B);
\draw (C) to (D);
\draw (E) to [bend left] (D);
\draw (D) to [bend left] (F);
 \end{tikzpicture}%

  \; \stackrel{\text{(U)}}{=} \;
  
 \begin{tikzpicture}[scale=0.5]%
  \matrix[matrix of math nodes]{  |(A)| \\ \\ \\ \\ |(B)| \\ };
\draw (A) to (B);
 \end{tikzpicture}%
.
\]
As is well-known, a compact structure  exists only
for finite-dimensional spaces in $\Cat{Hilb}$. Thus to obtain a notion capable of being
extended beyond the finite-dimensional case, we need to drop the
assumption of a unit.  


\section{\Hs-algebras}
\label{sec:hstar}

We begin our investigation of suitable axioms for a notion of algebra
which can characterize orthonormal bases in arbitrary dimension by
recalling the axioms for Frobenius structures.
\begin{align}
      (\id[A] \otimes \delta) \after \delta 
  & = (\delta \otimes \id[A]) \after \delta \tag{A} \\
      (\id[A] \tensor \varepsilon) \after \delta 
  & = \id[A] \tag{U} \\
      \sigma \after \delta 
  & = \delta \tag{C} \\
      \delta^\dag \after \delta 
  & = \id[A] \tag{M} \\
      \delta\after \delta^\dag 
  & = (\delta^\dag \tensor \id[A]) \after (\id[A] \tensor \delta) & \tag{F}
\end{align}
We note in passing that there is some redundancy in the definition of
Frobenius structure. 

\begin{lemma}
  In any dagger monoidal category, (M), (F) and (F') imply (A). 
\end{lemma}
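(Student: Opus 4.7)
The plan is to show that both $(\id \otimes \delta) \circ \delta$ and $(\delta \otimes \id) \circ \delta$ reduce to the common expression $E := (\id \otimes \delta^\dag \otimes \id) \circ (\delta \otimes \delta) \circ \delta$, from which (A) is immediate. Graphically, each side of (A) is a two-node tree of $\delta$'s leaning to one side, and $E$ is the associated balanced diagram in which both leaves are produced symmetrically by $\delta \otimes \delta$ and then glued in the middle by a $\delta^\dag$; the Frobenius laws are exactly what lets us slide a leaning $\delta$ into the central symmetric position.

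For the left-hand side $(\id \otimes \delta) \circ \delta$, I would first use (M) in the form $\delta = \delta \circ \delta^\dag \circ \delta$ to rewrite this as $(\id \otimes \delta) \circ (\delta \circ \delta^\dag) \circ \delta$. Applying (F') to the central $\delta \circ \delta^\dag$ yields $(\id \otimes \delta) \circ (\id \otimes \delta^\dag) \circ (\delta \otimes \id) \circ \delta$, and bifunctoriality fuses the two leftmost factors into $(\id \otimes (\delta \circ \delta^\dag)) \circ (\delta \otimes \id) \circ \delta$. Now apply (F) to the newly-created occurrence of $\delta \circ \delta^\dag$ inside the $\id \otimes (\cdot)$, distribute via bifunctoriality to reach $(\id \otimes \delta^\dag \otimes \id) \circ (\id \otimes \id \otimes \delta) \circ (\delta \otimes \id) \circ \delta$, and finally use bifunctoriality once more to recognise $(\id \otimes \id \otimes \delta) \circ (\delta \otimes \id) = \delta \otimes \delta$. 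This produces $E$. For the right-hand side $(\delta \otimes \id) \circ \delta$, I would run the parallel argument with the roles of (F) and (F') interchanged; the same bifunctoriality steps again deliver $E$, and (A) follows.

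The main obstacle is bookkeeping: the chain of rewrites looks threateningly circular, because each application of (F) or (F') produces a fresh $\delta \circ \delta^\dag$ subterm. The crucial point is that (F) and (F') must be used asymmetrically on the two sides of (A) — that asymmetric use is exactly what breaks the apparent circularity and channels both leaning trees into the same balanced expression $E$. This also explains why the lemma invokes both Frobenius laws, even though in a dagger setting (F') is just the dagger of (F).
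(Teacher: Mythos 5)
Your proof is correct and is essentially the paper's own argument: the paper gives a single six-step diagrammatic chain applying (M), then (F) and (F') alternately twice, then (M) again, whose middle diagram is exactly your balanced term $E = (\id \otimes \delta^\dag \otimes \id) \after (\delta \tensor \delta) \after \delta$. Your ``meet in the middle'' organisation is just a re-reading of that same chain from both ends, so no substantive difference.
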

\begin{proof}
  \begin{align*}
    
 \begin{tikzpicture}[scale=0.5]%
 \end{tikzpicture}%

    & \; \stackrel{\text{(M)}}{=} \;
    
 \begin{tikzpicture}[scale=0.5]%
  \matrix[matrix of math nodes]
{    |(A)| &            & |(B)|      & |(C)| \\
           & |(D)[dot]| &            & |(E)| \\
           &            & |(F)[dot]|         \\
           &            & |(G)[dot]|              \\
           & |(H)| &            & |(I)| \\
           &            & |(J)[dot]|              \\
           &            & |(K)|              \\
};
\draw (F) to (G);
\draw (A) to [bend right] (D);
\draw (D) to [bend right] (B);
\draw (D) to [bend right] (F);
\draw (F) to [bend right] (E) -- (C);
\draw (G) to [bend right] (H.center);
\draw (H.center) to [bend right] (J);
\draw (G) to [bend left] (I.center);
\draw (I.center) to [bend left] (J);
\draw (J) to (K);
 \end{tikzpicture}%

    \; \stackrel{\text{(F)}}{=} \;
    
 \begin{tikzpicture}[scale=0.5]%
  \matrix[matrix of math nodes, column sep=1ex]
{    |(A)| &&                 && |(B)|      &&        &&  |(C)| \\
             && |(D)[dot]| \\
             && |(E)[dot]| \\
     |(F)| &&                 && |(G)|     &&          && |(H)| \\
     |(I)| &&                  &&             && |(J)[dot]| \\
             &&& |(K)[dot]| \\
	     &&& |(L)| \\
};
\draw (A) to [bend right] (D);
\draw (D) to [bend right] (B);
\draw (D) to (E);
\draw (I.center)--(F.center) to [bend left] (E);
\draw (E) to [bend left] (G) to [bend right] (J);
\draw (J) to [bend right] (H) -- (C);
\draw (I.center) to [out=270, in=180] (K);
\draw (K) to [out=0, in=270] (J);
\draw (K) to (L);
 \end{tikzpicture}%

    \; \stackrel{\text{(F')}}{=} \;
    
 \begin{tikzpicture}[scale=0.5]%
  \matrix[matrix of math nodes]
{ |(A)|  & & & |(B)| & & & |(C)| \\
    & & & |(L)[dot]| \\
  |(D)| & & |(E)| & & |(F)| & & |(G)| \\
  & |(H)[dot]| & & & & |(I)[dot]| \\
  & & & |(J)[dot]| \\
  & & & |(K)| \\
};
\draw (A) to (D.center);
\draw (D.center) to [bend right] (H);
\draw (H) to [bend right] (E) to [bend left] (L);
\draw (B) to (L);
\draw (L) to [bend left] (F) to [bend right] (I);
\draw (I) to [bend right] (G)--(C);
\draw (H) to [out=270, in=180] (J);
\draw (J) to [out=0, in=270] (I);
\draw (J) to (K);
 \end{tikzpicture}%
 \\
    & \; \stackrel{\text{(F)}}{=} \;
    
 \begin{tikzpicture}[scale=0.5]%
  \matrix[matrix of math nodes, column sep=1ex]
{    |(C)| &&                 && |(B)|      &&        &&  |(A)| \\
            &&                  &&             && |(D)[dot]| \\
            &&                  &&             && |(E)[dot]| \\
     |(H)| &&                 && |(G)|     &&          && |(F)| \\
              && |(J)[dot]| &&           &&             && |(I)| \\
              &&                  &&& |(K)[dot]| \\
	     && &&& |(L)| \\
};
\draw (A) to [bend left] (D);
\draw (D) to [bend left] (B);
\draw (D) to (E);
\draw (I.center)--(F.center) to [bend right] (E);
\draw (E) to [bend right] (G) to [bend left] (J);
\draw (J) to [bend left] (H) -- (C);
\draw (I.center) to [out=270, in=0] (K);
\draw (K) to [out=180, in=270] (J);
\draw (K) to (L);
 \end{tikzpicture}%

    \; \stackrel{\text{(F')}}{=} \;
    
 \begin{tikzpicture}[scale=0.5]%
  \matrix[matrix of math nodes]
{    |(A)| & |(B)| &       & |(C)| \\
     |(D)| &       & |(E)[dot]| \\
	   & |(F)[dot]| \\
                       & |(G)[dot]|              \\
            |(H)| &            & |(I)| \\
                       & |(J)[dot]|              \\
                       & |(K)|              \\
};
\draw (B) to [bend right] (E);
\draw (E) to [bend right] (C);
\draw (E) to [bend left] (F);
\draw (F) to [bend left] (D) -- (A);
\draw (F) to (G);
\draw (G) to [bend right] (H.center);
\draw (H.center) to [bend right] (J);
\draw (G) to [bend left] (I.center);
\draw (I.center) to [bend left] (J);
\draw (J) to (K);
 \end{tikzpicture}%

    \; \stackrel{\text{(M)}}{=} \;
    
 \begin{tikzpicture}[scale=0.5]%
 \end{tikzpicture}%

  \qedhere
  \end{align*}
\end{proof}
The axioms (U), (C), and (F) are independent: 
\begin{itemize}
 \item As we have seen, for an orthonormal basis $\{\ket{n} \mid n \in
   \field{N}\}$ of a separable (infinite-dimensional) Hilbert space,
   the map $\delta(\ket{n}) = \ket{nn}$ satisfies everything except
   for (U).
 \item Group algebras of finite noncommutative
   groups~\cite[Example~4]{ambrose:hstar} satisfy everything except 
   for (C).
\item Any nontrivial commutative (unital) Hopf algebra satisfies everything except for
  (F) 
  by~\cite[Proposition~2.4.10]{kock:frobenius}. 
\end{itemize}
It is worth noting that under additional assumptions, such as unitality
and enrichment in abelian groups, (A) and (M) are known to imply
(F)~\cite[Section~6]{longoroberts:dimension}. 

We shall now \emph{redefine} a Frobenius algebra\footnote{In the
literature the unital version is more
specifically termed a special commutative dagger Frobenius
algebra (sometimes also called a separable algebra, or a Q-system). As
we will only be concerned with these kinds of Frobenius algebras, we
prefer to keep terminology simple and dispense with the adjectives.}
in a dagger monoidal category to be an object $A$ equipped with a
comultiplication $\delta : A \to A \tensor A$ satisfying (A), (C), (M)
and (F). A Frobenius algebra which additionally has an arrow
$\varepsilon \colon A \to I$ satisfying (U) will explicitly be called
\emph{unital}.  

\subsection{Regular representation as pointwise abstraction}

As we have seen, unital Frobenius algebras allow us to define
compact, and hence closed, structure. How much of this can we keep  in
key examples such as $\Cat{Hilb}$? 

The category $\Cat{Hilb}$ has well-behaved duals, since $H \cong
H^{**}$, and indeed there is a conjugate-linear isomorphism $H \cong
H^{*}$. However, it is \emph{not} the case that the tensor unit
$\field{C}$ is exponentiable in $\Cat{Hilb}$, since if it was, we
would have a bounded linear evaluation map 
\[ 
  H \tensor H^* \to \field{C},
\]
and hence its adjoint $\field{C} \to H \tensor H^*$, and a compact
structure. 

We shall now present an axiom which captures what seems to
be the best we can do in general in the way of a `transfer of
variables'. It is, indeed, a general form, meaningful in any 
monoidal dagger category, of a salient structure in functional analysis.  

Suppose we have a comultiplication $\delta \colon A
\to A \tensor A$, and hence a multiplication $\mu = \delta^\dag
\colon A \tensor A \to A$. We can \emph{curry} the multiplication
(this process is also called
$\lambda$-abstraction~\cite{barendregt:lambdacalculus}) for
\emph{points}---this is just the regular representation!\footnote{As
we are in a commutative context, there is no need to distinguish
between left and right regular representations.} Thus we have
a function $R \colon \cat{D}(I,A) \to \cat{D}(A,A)$ defined by
\[
  R(a) = \mu \after(\id \tensor a) = 
 \begin{tikzpicture}[scale=0.5]%
  \matrix[matrix of math nodes]
{                & |(A)|                            \\
		 & |(B)[dot]|                       \\
  |(C)| {\phantom{a}} &            & |(D)[box]| {a} \\
};
\draw (A) to (B);
\draw (B) to [out=0, in=90] (D);
\draw (B) to [bend right] (C.north) -- (C.south);
 \end{tikzpicture}%
.
\]
If $\mu$ is associative, this is a semigroup homomorphism. 

\subsection{Axiom (H)}

An endomorphism homset $\cat{D}(A,A)$ in a dagger category $\cat{D}$ is not
just a monoid, but a \emph{monoid with involution}, because of the dagger.
We say that $(A, \mu)$ \emph{satisfies axiom (H)} if there is an operation $a
\mapsto a^*$ on $\cat{D}(I, A)$ such that $R$ becomes a homomorphism
of involutive semigroups, \ie 
\[
  R(a^*) = R(a)^\dag
\]
for every $a \colon I \to A$. This unfolds to
\begin{align}
    \mu \after (a^* \tensor \id) 
  = (a^\dag \tensor \id) \after \mu^\dag; \tag{H}
\end{align}
or diagrammatically:
\[
  
 \begin{tikzpicture}[scale=0.5]%
  \matrix[matrix of math nodes]
{                  & |(A)|                             \\
		 & |(B)[dot]|                        \\
|(C)| {\phantom{a^*}} &            & |(D)[box]| {a^*} \\
};
\draw (A) to (B);
\draw (B) to [out=180, in=90] (C.north) -- (C.south);
\draw (B) to [out=0, in=90] (D);
 \end{tikzpicture}%

  \; \stackrel{\text{(H)}}{=} \;
  
 \begin{tikzpicture}[scale=0.5]%
  \matrix[matrix of math nodes]
{ |(A)| {\phantom{a^\dag}} &            & |(B)[box]| {a^\dag} \\
		   & |(C)[dot]|                           \\
		   & |(D)|                                \\
};
\draw (C) to [out=0, in=270] (B);
\draw (C) to [out=180, in=270] (A.south) -- (A.north);
\draw (C) to (D);
 \end{tikzpicture}%
.
\]
Thus $a \mapsto a^*$ is indeed a `transfer of variables'. 


\subsection{Relationships between axioms (F) and (H)}

The rest of this section compares axioms (F) and (H) at the
abstract level of monoidal dagger categories. 

The following observation by Coecke, Pavlovi{\'c} and 
Vicary is the central idea in their proof of Theorem~\ref{cpvth}.

\begin{lemma}
\label{lem:cpv}
  In any dagger monoidal category, (F) and (U) imply (H).
\end{lemma}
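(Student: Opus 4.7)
The natural candidate for the involution on $\cat{D}(I, A)$ is $a^* := (a^\dag \tensor \id[A]) \after \delta \after \varepsilon^\dag$, \ie the state obtained by bending $a^\dag$ along the ``cup'' $\eta := \delta \after \varepsilon^\dag$ supplied by the unital Frobenius algebra (the compact structure already constructed in the excerpt). In $\FdHilb$, with the Frobenius algebra of an orthonormal basis $\{\ket{i}\}$, one computes $a^* = \sum_i \overline{\inprod{i}{a}} \ket{i}$, the coefficient-wise complex conjugate of $a$; this is exactly the involution under which the regular representation becomes a $*$-homomorphism, so this is clearly the ``right'' definition.

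The plan is to verify (H) by three moves in the graphical calculus. First, substitute the definition into $\mu \after (a^* \tensor \id[A])$ and, by bifunctoriality, slide $a^\dag$ out past $\mu$: the $a^\dag$ sits on a wire disjoint from the two inputs of $\mu$, so it factors through. Second, what remains is $\mu$ applied to the right tine of $\delta \after \varepsilon^\dag$ together with the external input wire, which is precisely the left-hand side of the ``right-handed'' Frobenius equation $(\id \tensor \mu) \after (\delta \tensor \id) = \delta \after \mu$, with $\varepsilon^\dag$ plugged into the leftmost slot. Crucially, this right-handed form is the dagger of (F), hence freely available in any dagger monoidal category satisfying (F); rewriting replaces the middle of the diagram with $\delta \after \mu \after (\varepsilon^\dag \tensor \id[A])$. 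Third, taking the dagger of (U) gives $\mu \after (\id[A] \tensor \varepsilon^\dag) = \id[A]$, and the opposite-sided version $\mu \after (\varepsilon^\dag \tensor \id[A]) = \id[A]$ follows from commutativity (C), which is part of the standing notion of Frobenius algebra used throughout the paper. The bottom of the diagram thus collapses to the identity, leaving $(a^\dag \tensor \id[A]) \after \delta = (a^\dag \tensor \id[A]) \after \mu^\dag$, the right-hand side of (H).

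Conceptually, (H) is the ``points-only'' shadow of the yanking equation for the induced compact structure: plugging $a$ into one slot of the multiplication and pulling out $a^*$ is the same operator as plugging $a^\dag$ into the corresponding slot of the comultiplication, and this is forced by Frobenius together with the unit. The only bookkeeping obstacle is the left/right asymmetry: the side of $\mu$ on which $a^*$ appears in (H) forces a compatible choice of $a^*$ (either $(a^\dag \tensor \id) \after \eta$ or $(\id \tensor a^\dag) \after \eta$), which in turn dictates whether one appeals to (F) or its dagger. Once the sides are matched, the whole argument collapses to a single diagram chase.
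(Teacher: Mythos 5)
Your proposal is correct and follows essentially the same route as the paper: the same definition $a^* = (a^\dag \tensor \id) \after \delta \after \varepsilon^\dag$, followed by the same diagram chase that slides $a^\dag$ off by bifunctoriality, applies the (right-handed/daggered) Frobenius law, and collapses the unit via (U). The only cosmetic difference is that you justify the right-handed Frobenius equation as the dagger of (F) where the paper derives it from (C); both are valid in this setting.
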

\begin{proof}
  Define $a^* = (a^\dag \tensor \id) \after \delta \after \varepsilon^\dag$. 
  \[
    
 \begin{tikzpicture}[]%
  \matrix[matrix of math nodes]{ |(A)| \\ |(B)[box]| {a^*} \\ };
\draw (A) to (B);
 \end{tikzpicture}%
  \;=\;  
 \begin{tikzpicture}[]%
  \matrix[matrix of math nodes]
{ |(A)| {\phantom{a^\dag}} &            & |(B)[box]| {a^\dag} \\
		  & |(C)[dot]|                           \\
		  & |(D)[dot]|                           \\ 
};
\draw (D) to (C);
\draw (C) to [out=0, in=270] (B);
\draw (C) to [out=180, in=270] (A.south) -- (A.north);
 \end{tikzpicture}%

  \]
  This indeed satisfies (H).
  \[
    
 \begin{tikzpicture}[]%
  \matrix[matrix of math nodes]
{                  & |(A)|                             \\
		 & |(B)[dot]|                        \\
|(C)| {\phantom{a^*}} &            & |(D)[box]| {a^*} \\
};
\draw (A) to (B);
\draw (B) to [out=180, in=90] (C.north) -- (C.south);
\draw (B) to [out=0, in=90] (D);
 \end{tikzpicture}%

    \; = \;
    
 \begin{tikzpicture}[]%
  \matrix[matrix of math nodes]
{          & |(A)|      &       &            & |(B)[box]|{a^\dag} \\
	 & |(C)[dot]| &       &                    \\
   |(D)| &            & |(E)| &            & |(F)| \\
	 &            &       & |(G)[dot]|         \\
   |(H)| &            &       & |(I)[dot]|         \\
};
\draw (C) to [bend right] (D) -- (H);
\draw (C) to (A);
\draw (C) to [bend left] (E) to [bend right] (G);
\draw (G) to (I);
\draw (G) to [out=0, in=270] (F) -- (B);
 \end{tikzpicture}%

    \; \stackrel{\text{(F)}}{=} \;
    
 \begin{tikzpicture}[]%
  \matrix[matrix of math nodes]
{    |(A)|{\phantom{a^\dag}} &            & |(B)[box]|{a^\dag} \\
		     & |(C)[dot]|                          \\ 
		     & |(D)[dot]|                          \\         
   |(E)|        &            & |(F)[dot]|                  \\
};
\draw (C) to [out=180, in=270] (A.south) -- (A.north);
\draw (C) to [out=0, in=270] (B);
\draw (C) to (D);
\draw (E) to [out=90, in=180] (D);
\draw (D) to [out=0, in=90] (F);
 \end{tikzpicture}%

    \; \stackrel{\text{(U)}}{=} \;
    
 \begin{tikzpicture}[]%
  \matrix[matrix of math nodes]
{ |(A)| {\phantom{a^\dag}} &            & |(B)[box]| {a^\dag} \\
		   & |(C)[dot]|                           \\
		   & |(D)|                                \\
};
\draw (C) to [out=0, in=270] (B);
\draw (C) to [out=180, in=270] (A.south) -- (A.north);
\draw (C) to (D);
 \end{tikzpicture}%

  \]
\end{proof}

\noindent Recall that a category is \emph{monoidally well-pointed} if the
following holds:
\begin{align*}
 f = g \colon A \tensor A' \to B \tensor B' 
  \;\;\; \Longleftrightarrow \;\;\; 
  \forall x \colon I \to A, y \colon I \to A' .\,\,f \after (x \tensor y) = g \after (x \tensor y).
\end{align*}
All the categories listed in our Examples are
monoidally well-pointed in this sense. 

\begin{lemma}
\label{lem:HAWPimplyF}
  In a monoidally well-pointed dagger monoidal category, (H) and (A) imply (F).
\end{lemma}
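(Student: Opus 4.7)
The plan is to apply monoidal well-pointedness to reduce (F) to an equation on pairs of points, and then derive this element-wise statement from (H) and (A) by invoking the dagger of (H) in conjunction with coassociativity. Concretely, well-pointedness reduces (F), which is an equation between morphisms $A \tensor A \to A \tensor A$, to showing that $\delta \after \mu \after (x \tensor y) = (\mu \tensor \id) \after (\id \tensor \delta) \after (x \tensor y)$ for all points $x, y \colon I \to A$.

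Taking the dagger of (H) yields the companion identity $\mu \after (a \tensor \id) = (a^{*\dag} \tensor \id) \after \delta$, which expresses left multiplication by an arbitrary point $a$ in terms of the comultiplication and a scalar contraction against $a^{*\dag}$. Specialised to $a = x$ and composed with $y$, this gives $\mu \after (x \tensor y) = (x^{*\dag} \tensor \id) \after \delta \after y$, so that post-composing with $\delta$ turns the left-hand side of the target equation into $\delta \after (x^{*\dag} \tensor \id) \after \delta \after y$. The key move is a coassociativity rearrangement: writing $L = (x^{*\dag} \tensor \id) \after \delta \colon A \to A$, axiom (A) implies $\delta \after L = (L \tensor \id) \after \delta$, which one sees by applying $x^{*\dag} \tensor \id \tensor \id$ to both sides of $(\id \tensor \delta) \after \delta = (\delta \tensor \id) \after \delta$.

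Invoking the dagger of (H) a second time identifies $L$ with $\mu \after (x \tensor \id)$, so the rearranged expression becomes $(\mu \tensor \id) \after (\id \tensor \delta) \after (x \tensor y)$, which is precisely the right-hand side of (F). The main obstacle is choosing the correct incarnation of (H): the directly stated form places the point $a^*$ on the left input of $\mu$, whereas the computation requires its dagger, which expresses arbitrary left multiplication as a comultiplication followed by a contraction. Once this substitution is performed on both occurrences of $\mu$, coassociativity emerges as the natural bridge between the two sides of the Frobenius identity.
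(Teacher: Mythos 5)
Your proposal is correct and follows essentially the same route as the paper: the paper's diagrammatic proof is exactly the chain (H)--(A)--(H) applied after plugging a point into the left input, with monoidal well-pointedness finishing the argument. Your algebraic transcription, including the observation that one must use the daggered form $\mu \after (a \tensor \id) = (a^{*\dag} \tensor \id)\after\delta$ of (H) and the interchange-plus-coassociativity step $\delta \after L = (L \tensor \id)\after\delta$, matches the paper's proof precisely.
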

\begin{proof}
  For any $a \colon I \to A$ we have the following.
  \[ 
    
 \begin{tikzpicture}[]%
  \matrix[matrix of math nodes]
{    |(A)|{\phantom{a^\dag}} &            & |(B)[box]|{a^\dag} \\
		     & |(C)[dot]|                          \\ 
		     & |(D)[dot]|                          \\         
   |(E)|             &            & |(F)|                  \\
};
\draw (A.north) -- (A.south) to [out=270, in=180] (C);
\draw (C) to [out=0, in=270] (B);
\draw (C) to (D);
\draw (E) to [out=90, in=180] (D);
\draw (D) to [out=0, in=90] (F);
 \end{tikzpicture}%

    \; \stackrel{\text{(H)}}{=} \;
    
 \begin{tikzpicture}[]%
  \matrix[matrix of math nodes]
{   &              & |(A)|                           \\
		& & |(B)[dot]|                      \\
& |(C)|{\phantom{a^*}} &            & |(D)[box]|{a^*}\\
		            & |(E)[dot]|         \\
		|(F)|      &            & |(G)| \\
};
\draw (A) to (B);
\draw (B) to [out=180, in=90] (C.north) -- (E);
\draw (B) to [out=0, in=90] (D);
\draw (E) to [bend right] (F);
\draw (E) to [bend left] (G);
 \end{tikzpicture}%

    \; \stackrel{\text{(A)}}{=} \;
    
 \begin{tikzpicture}[]%
  \matrix[matrix of math nodes]
{                            & |(A)|               \\
		            & |(B)[dot]|          \\
		 |(E)| &            & |(D)[dot]|  \\
 |(F)|{\phantom{a^*}}            & |(G)|{\phantom{a^*}} & & |(H)[box]|{a^*} \\
};
\draw (A) to (B);
\draw (B) to [out=180, in=90] (E.south) -- (F.south);
\draw (B) to [out=0, in=90] (D);
\draw (D) to [out=180, in=90] (G.north) -- (G.south);
\draw (D) to [out=0, in=90] (H);
 \end{tikzpicture}%

    \; \stackrel{\text{(H)}}{=} \;
    
 \begin{tikzpicture}[]%
  \matrix[matrix of math nodes]
{          & |(A)|      &       &            & |(B)[box]|{a^\dag} \\
	 & |(C)[dot]| &       &                    \\
   |(D)| &            & |(E)| &            & |(F)| \\
	 &            &       & |(G)[dot]|         \\
   |(H)| &            &       & |(I)|              \\
};
\draw (C) to [bend right] (D) -- (H);
\draw (C) to (A);
\draw (C) to [bend left] (E) to [bend right] (G);
\draw (G) to (I);
\draw (G) to [out=0, in=270] (F) -- (B);
 \end{tikzpicture}%

  \]
  Then (F) follows from monoidal well-pointedness.
\end{proof}

Lemma~\ref{lem:cpv} is strengthened by the following proposition,
which proves that compactness  implies unitality.

\begin{proposition}
\label{prop:cptimpliesunital}
  Any Frobenius algebra in a dagger compact category is unital.
\end{proposition}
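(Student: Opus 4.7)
The plan is to construct the missing counit $\varepsilon \colon A \to I$ using the ambient dagger compact structure, then verify $(U)$ by a diagrammatic calculation centred on the Frobenius law $(F)$. Since the category is dagger compact, $A$ has a dual $A^*$ with cup $\eta_A \colon I \to A^* \otimes A$, cap $\epsilon_A \colon A \otimes A^* \to I$, and the standard snake equations; these will be the only ingredients beyond $(A)$, $(C)$, $(M)$, $(F)$ and the fact that $\mu = \delta^\dag$.

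The candidate I would take is the categorical \emph{trace of left multiplication},
\[
  \varepsilon \;=\; \epsilon_A \circ \sigma \circ (\mathrm{id}_{A^*} \otimes \mu) \circ (\eta_A \otimes \id[A]),
\]
which sends a point $a \colon I \to A$ to $\mathrm{tr}(L_a)$. In $\FdHilb$, when $\delta$ is the copying map for an orthonormal basis $\{e_i\}$, this yields $\varepsilon(e_i) = \mathrm{tr}(L_{e_i}) = 1$, reproducing the counit of the associated commutative Frobenius algebra as in Theorem~\ref{cpvth}.

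To verify $(U)$, namely $(\id[A] \otimes \varepsilon) \circ \delta = \id[A]$, I would proceed graphically. After expanding $\varepsilon$, the composite diagram contains one $\delta$, one $\mu$ and an auxiliary cup-cap loop built from $\eta_A$ and $\epsilon_A$. First I would apply the dagger-dual form of $(F)$, namely $(\id[A] \otimes \mu) \circ (\delta \otimes \id[A]) = \delta \circ \mu$ (which is the dagger of $(F)$ and holds automatically), together with $(C)$ to orient $\mu$, so as to slide the internal $\mu$ beneath the outer $\delta$. The snake equation of the compact structure then collapses the exposed cup-cap loop, after which $(M)$ reduces the remaining $\mu \circ \delta$ to $\id[A]$. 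Coassociativity $(A)$ may be needed for rebracketing triple comultiplications along the way.

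The hard part will be getting the Frobenius slide to land the right strands together: one must check that after applying $(F)$ the snake equation attaches to precisely the wires corresponding to $\eta_A$ and $\epsilon_A$, so that the collapse leaves behind a diagram in which $(M)$ can be applied to conclude. The essence of the proposition is that $(F)$ is exactly the compatibility between the ambient compact duality and the algebraic duality of the Frobenius algebra that makes the trace of left multiplication serve as a genuine counit; without $(F)$ this construction would still give some morphism $A \to I$, but the unit law $(U)$ would fail.
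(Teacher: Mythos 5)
Your construction is essentially the paper's own proof (which it attributes to Carboni): the counit is taken to be the trace of left multiplication, built from $\mu$ and the compact cup and cap, and the unit law $(U)$ is then verified by a diagram chase using $(F)$, $(C)$, the yanking of the cup--cap loop, and finally $(M)$. The wire-matching you flag as the delicate point does go through exactly as you outline, so the argument is correct and matches the paper's route.
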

\begin{proof}
 \cite[Remark (1) on page 503]{carboni:matrices}
 Suppose that $\delta \colon A \to A \tensor A$ is a nonunital
 Frobenius algebra in a compact category. Define $\varepsilon \colon A
 \to I$ as follows.
 \[
  \varepsilon \;=\; 
 \begin{tikzpicture}[]%
 \end{tikzpicture}%
 \;=\; 
 \begin{tikzpicture}[]%
  \matrix[matrix of math nodes]
{ & |(D)[dot]| & & & |(C)| \\
  |(E)| & & |(F)| & & |(G)| \\
  & & |(H)| & & |(I)| \\
  |(J)| & & & |(K)| \\ 
};
\draw (D) to [bend right] (E) -- (J);
\draw (D) to [out=90, in=90] (C) to (G.south);
\draw (D) to [bend left] (F)
	  .. controls (H) and (G) .. (I.south) to [out=90, in=90] (H);
\draw[cross] (I) to [out=270, in=270] (H)
	  .. controls (F) and (I) .. (G);
 \end{tikzpicture}%

 \]
 Then the following holds, where we draw the unit and counit of
 compactness by caps and cups (without dots).
 \begin{eqnarray*}
     
 \begin{tikzpicture}[]%
  \matrix[matrix of math nodes]{ |(A)| & & |(B)[dot]| \\ & |(C)[dot]| \\ & |(D)| \\};
\draw (D) to (C);
\draw (C) to [bend left] (A.north);
\draw (C) to [bend right] (B);
 \end{tikzpicture}%

   & = 
   & 
 \begin{tikzpicture}[]%
  \matrix[matrix of math nodes]
{ |(a)| && & |(D)[dot]| & & & |(C)| \\
  |(b)| && |(E)| & & |(F)| & & |(G)| \\
  & |(c)[dot]| & & & |(H)| & & |(I)| \\
  & |(J)| & & & & |(K)| \\ 
};
\draw (D) to [bend right] (E);
\draw (D) to [out=90, in=90] (C) to (G.south);
\draw (D) to [bend left] (F)
       .. controls (H) and (G) .. (I.south) to [out=90, in=90] (H);
\draw[cross] (I) to [out=270, in=270] (H)
      .. controls (F) and (I) .. (G);
\draw (c) to (J);
\draw (c) to [bend left] (b) -- (a);
\draw (c) to [bend right] (E.north);
 \end{tikzpicture}%

     \; \stackrel{\text{(F)}}{=} \;
     
 \begin{tikzpicture}[]%
  \matrix[matrix of math nodes]
{ |(A)| & & |(D)| & & |(E)| \\
  |(F)| & & |(G)| & & |(e)| \\
  & |(H)[dot]| \\
  & |(I)[dot]| \\
  |(J)| & & |(K)| && |(L)| \\
  & & |(k)| && |(l)| \\
  |(M)| \\
};
\draw (H) to (I);
\draw (I) to [bend right] (J) -- (M);
\draw (H) to [bend left] (F) -- (A);
\draw (I) to [bend left] (K);
\draw (H) to [bend right] (G) to [out=90, in=90] (e.south) -- (L);
\draw (k) to [out=270, in=270] (l);
\draw (K.north) .. controls (k) and (L) .. (l.south);
\draw[cross] (k.south) .. controls (K) and (l) .. (L.north);

 \end{tikzpicture}%

     \; \stackrel{\text{(C)}}{=} \;
     
 \begin{tikzpicture}[]%
  \matrix[matrix of math nodes]
{ |(A)| & & |(D)| & & |(E)| \\
  |(F)| & & |(G)| & & |(e)| \\
  & |(H)[dot]| \\
  & |(I)[dot]| \\
  |(J)| & & |(K)| && |(L)| \\
  |(j)| & & |(k)| && |(l)| \\
  |(M)| & |(m)| \\
};
\draw (H) to (I);
\draw (I) to [bend right] (J);
\draw (H) to [bend left] (F) -- (A);
\draw (I) to [bend left] (K);
\draw (H) to [bend right] (G) to [out=90, in=90] (e.south) -- (L);
\draw (k) to [out=270, in=270] (l);
\draw (J.north) .. controls (j) and (L) .. (l.south);
\draw[cross] (k.south) .. controls (K) and (l) .. (L.north);
\draw[cross] (K.north) .. controls (k) and (J) .. (j.south) -- (M);
 \end{tikzpicture}%
 \\
   & \stackrel{\text{(F)}}{=} 
   & 
 \begin{tikzpicture}[]%
  \matrix[matrix of math nodes]
{ & |(A)| \\
  & |(B)[dot]| & & & |(C)| && |(D)| \\
  |(E)| & & |(F)| & & |(G)| && |(g)| \\
  |(h)| & & & |(H)[dot]| & |(i)| && |(I)| \\
  &&&& |(J)| && |(K)| \\
  &&& |(l)| & |(L)| && |(M)| \\
  |(a)| \\
};
\draw (A) to (B);
\draw (B) to [bend right] (E) -- (h);
\draw (B) to [bend left] (F) to [bend right] (H);
\draw (H) to [bend right] (G.north) to [out=90, in=90](g)--(I);
\draw (L.north) [out=270, in=270] to (M);
\draw (h.north) .. controls (a) and (K) .. (M.south);
\draw[cross] (g.north) .. controls (M) and (i) .. (L);
\draw[cross] (H) .. controls (l) and (h) .. (a);
 \end{tikzpicture}%

     \; = \;
     
 \begin{tikzpicture}[]%
  \matrix[matrix of math nodes]
{ & |(a)| \\
  & |(A)[dot]| \\
  |(B)| && |(C)| \\
  |(D)| && |(E)| \\
  & |(F)[dot]| \\
  & |(f)| \\
};
\draw (a) to (A);
\draw (A) to [bend right] (B.south);
\draw (A) to [bend left] (C.south);
\draw (F) to [bend left] (D.north);
\draw (F) to [bend right] (E.north);
\draw (F) to (f);
\draw (B) .. controls (D) and (C) .. (E);
\draw[cross] (D) .. controls (B) and (E) .. (C);
 \end{tikzpicture}%

     \; \stackrel{\text{(C)}}{=} \;
     
 \begin{tikzpicture}[]%
  \matrix[matrix of math nodes]
{ & |(a)| \\
  & |(A)[dot]| \\
  |(B)| && |(C)| \\
  & |(F)[dot]| \\
  & |(f)| \\
};
\draw (a) to (A);
\draw (A) to [bend right] (B) to [bend right] (F);
\draw (A) to [bend left] (C) to [bend left] (F);
\draw (F) to (f);
 \end{tikzpicture}%

     \; \stackrel{\text{(M)}}{=} \;
     
 \begin{tikzpicture}[]%
  \matrix[matrix of math nodes]{ |(A)| \\ \\ \\ |(B)| \\ };
\draw[thick] (A) to (B);
 \end{tikzpicture}%

 \end{eqnarray*}
 That is, (U) holds.
\end{proof}

Thus in the unital, monoidally well-pointed case, (F) and (H) are essentially
equivalent. Our interest is, of course, in the nonunital case. To
explain the provenance of the (H) axiom, and its implications for
obtaining a correspondence with orthonormal bases in $\Cat{Hilb}$ in
arbitrary dimension, we shall now study the situation in the
concrete setting of Hilbert spaces.

\section{\Hs-algebras in $\Cat{Hilb}$}
\label{sec:hstarinhilb}

We begin by revisiting Theorem~\ref{cpvth}. How should the
correspondence between Frobenius algebras and orthonormal bases be
expressed mathematically? In fact, the content of this result is
really a \emph{structure theorem} of a classic genre in algebra
\cite{albert:structureofalgebras}. The following theorem, the
\emph{Wedderburn structure theorem}, is the prime example; it was subsequently
generalized by Artin, and there have been many subsequent developments. 

\begin{theorem}[Wedderburn, 1908]
  Every finite-dimensional semisimple algebra is isomorphic to a
  product of full matrix algebras. In the commutative case over the
  complex numbers, this has the form: the algebra is isomorphic to a
  product of one-dimensional complex algebras. 
  \qed
\end{theorem}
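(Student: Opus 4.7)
The plan is to follow the classical module-theoretic route to Wedderburn's decomposition, working over a field $k$ which will eventually be specialised to $\field{C}$. First I would unpack semisimplicity: for a finite-dimensional $k$-algebra $A$, this means that the Jacobson radical vanishes, equivalently that $A$ is a semisimple left module over itself, so every left $A$-module is a direct sum of simple modules. Finite-dimensionality lets me decompose the regular module as $A \cong \bigoplus_{i=1}^{r} S_i^{n_i}$, where $S_1,\ldots,S_r$ are pairwise non-isomorphic simple left $A$-modules and the $n_i$ are their multiplicities.

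The content-bearing step is the identification $A^{\mathrm{op}} \cong \mathrm{End}_A(A)$ via right multiplication $a \mapsto (x \mapsto xa)$, combined with Schur's lemma, which gives $\mathrm{Hom}_A(S_i,S_j) = 0$ for $i \neq j$ and makes each $D_i := \mathrm{End}_A(S_i)$ a division algebra. Together they yield
\[
  A^{\mathrm{op}} \;\cong\; \mathrm{End}_A\Bigl(\bigoplus_i S_i^{n_i}\Bigr) \;\cong\; \prod_{i=1}^{r} \mathrm{M}_{n_i}(D_i),
\]
whence $A \cong \prod_i \mathrm{M}_{n_i}(D_i^{\mathrm{op}})$, which gives the first assertion (since $D_i^{\mathrm{op}}$ is again a division algebra).

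Specialising to $k = \field{C}$, any element $d \in D_i$ generates a finite commutative division subalgebra $\field{C}[d]$, which is a finite field extension of $\field{C}$ and therefore equal to $\field{C}$ by algebraic closedness; hence every $D_i = \field{C}$, giving $A \cong \prod_i \mathrm{M}_{n_i}(\field{C})$. In the commutative case the product on the right must itself be commutative, but $\mathrm{M}_n(\field{C})$ is commutative only for $n = 1$, forcing all $n_i = 1$ and $A \cong \field{C}^r$.

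I expect no serious obstacle, since the result is classical; the main care is in justifying the equivalent formulations of semisimplicity and in checking that the anti-isomorphism $A^{\mathrm{op}} \cong \mathrm{End}_A(A)$ intertwines properly with the block structure of $\mathrm{End}_A(\bigoplus_i S_i^{n_i})$ coming from Schur. The substantive ingredients are Schur's lemma and (for the complex case) algebraic closedness of $\field{C}$; everything else is assembly.
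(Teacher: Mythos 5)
Your proof is correct: it is the standard module-theoretic Artin--Wedderburn argument (decompose the regular module into isotypic components, identify $A^{\mathrm{op}}$ with $\mathrm{End}_A(A)$, apply Schur's lemma, and use algebraic closedness of $\field{C}$ to collapse the division algebras). The paper itself gives no proof of this statement --- it is quoted as a classical result with an immediate \emph{qed} --- so there is nothing to contrast your route with; your write-up would serve as a correct self-contained justification.
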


To see the connection between the Wedderburn structure theorem and 
Theorem~\ref{cpvth}, consider the coalgebra $A$ determined by an
orthonormal basis $\{ \ket{i} \}$ on a Hilbert space:  
\[ 
  \delta \colon \ket{i} \mapsto \ket{ii}.
\]
This is isomorphic as a coalgebra to a direct sum of one-dimensional
coalgebras 
\[ 
  \delta_{\field{C}} \colon \field{C} \to \field{C} \tensor \field{C}, 
  \qquad 1 \mapsto 1 \tensor 1.
\]
To say that a Frobenius algebra corresponds to an orthonormal basis
is exactly to say that it is isomorphic as a coalgebra to a Hilbert
space direct sum of one-dimensional coalgebras: 
\[ 
  A \; \cong \; \bigoplus_I \, (\field{C}, \delta_{\field{C}}),
\]
where the cardinality of $I$ is the dimension of $H$.
Applying dagger, this is equivalent to $A$ being isomorphic as an \emph{algebra} to
the direct sum of one-dimensional \emph{algebras}  
\[ 
  A \; \cong \; \bigoplus_I \, (\field{C}, \mu_{\field{C}}), 
  \qquad \mu_{\field{C}} \colon 1 \tensor 1 \mapsto 1.
\]
In this case, we say that the Frobenius algebra \emph{admits the
structure theorem}, making the view of bases as (co)algebras precise. 

\subsection{\Hs-algebras}

There is a remarkable generalization of the Wedderburn structure
theorem to an infinite-dimensional setting, in a classic paper from
1945 by Warren Ambrose on `\Hs-algebras'~\cite{ambrose:hstar}. He
defines an \Hs-algebra\footnote{The notion termed 
2-\Hs-algebra in~\cite{baez:twohilbertspaces} was inspired by
Ambrose's notion of \Hs-algebra. The former could be seen as a
categorification of the latter; the two notions should not be confused.} as a 
(not necessarily unital) Banach algebra based on a Hilbert space $H$,
such that for each $x \in H$ there is an $x^* \in H$ with
\[ 
  \inprod{xy}{z} = \inprod{y}{x^*z} 
\]
for all $y,z \in H$, and similarly for right multiplication.
Note that 
\[ 
    \inprod{xy}{z} 
  = (\mu \after (x \tensor y))^\dag \after z 
  = (x^\dag \tensor y^\dag) \after \mu^\dag \after z,
\]
where we identify points $x \in H$ with morphisms $x \colon
\field{C} \to H$, and similarly
\[ 
    \inprod{y}{x^*z} 
  = y^\dag \after \mu \after (x^* \tensor z).
\]
Using the monoidal well-pointedness of $\Cat{Hilb}$, it is easy to see that
this is equivalent to the (H) condition!\footnote{Notice that neither
  axiom (H) nor Ambrose's definition of H*-algebra requires the
  operation $a \mapsto a^*$ to be continuous. However, in the setting
  of $\Cat{Hilb}$, continuity follows automatically from
  axiom~(H)~\cite[Theorem~2.3]{ambrose:hstar}.}  
The following two lemmas show that the assumptions (A),
(C), (M) and (H) indeed result in an H*-algebra.

\begin{lemma}
  A monoid in $\Cat{Hilb}$ satisfying (M) is a Banach algebra.
\end{lemma}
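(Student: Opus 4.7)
The plan is to unpack what each piece of the hypothesis gives us as a functional-analytic statement and assemble them into the definition of a Banach algebra. A monoid in $\Cat{Hilb}$ supplies an associative multiplication $\mu \colon A \otimes A \to A$ that is already a morphism of $\Cat{Hilb}$, hence a bounded linear map out of the Hilbert space tensor product. Since $A$ is a Hilbert space it is automatically a Banach space under its norm, so the only nontrivial content is the submultiplicativity of the norm, $\|xy\| \le \|x\|\,\|y\|$.

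First I would identify points $x,y \in A$ with morphisms $x,y \colon \field{C} \to A$ so that the product $xy$ is $\mu \after (x \tensor y)$, and note that for elements of a Hilbert tensor product one has $\|x \tensor y\| = \|x\|\,\|y\|$. It is then enough to bound the operator norm of $\mu$ by $1$. This is exactly where axiom (M) enters: rewriting (M) in terms of $\mu = \delta^\dag$ gives $\mu \after \mu^\dag = \id[A]$, which says that $\mu$ is a coisometry. Equivalently, $\mu^\dag$ is an isometry, and taking adjoints preserves operator norm, so $\|\mu\| \le 1$ (in fact equal to $1$ whenever $A \ne 0$).

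Putting these observations together, for arbitrary $x,y \in A$,
\[
  \|xy\| \;=\; \|\mu(x \tensor y)\| \;\le\; \|\mu\|\cdot\|x \tensor y\| \;\le\; \|x\|\,\|y\|,
\]
so $A$ is a Banach algebra. I do not anticipate any real obstacle: the entire content is the translation of (M) into the coisometry property of $\mu$ together with the cross-norm property of the Hilbert tensor product. Associativity of the multiplication follows from the monoid axioms, and boundedness is automatic from $\mu$ being a $\Cat{Hilb}$-morphism, so nothing further needs to be checked.
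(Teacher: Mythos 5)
Your proposal is correct and is essentially the paper's argument in different clothing: the paper observes that (M) makes $P = \mu^\dag \after \mu$ a self-adjoint projector and bounds $\inprod{x\tensor y}{P(x\tensor y)}$ by $\inprod{x\tensor y}{x\tensor y}$, which is exactly your statement that $\mu$ is a coisometry and hence has operator norm at most $1$. Both proofs then conclude via the cross-norm identity $\|x\tensor y\| = \|x\|\,\|y\|$, so there is nothing substantive to distinguish the two.
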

\begin{proof}
  The condition (M) implies that $P = \mu^\dag \after \mu$ is a
  projector: 
  \[ 
      P^2
    = \mu^\dag \after \mu \after \mu^\dag \after \mu 
    = \mu^\dag \after \mu 
    = P 
  \]
  and clearly $P = P^\dag$.
  Hence a monoid in $\Cat{Hilb}$ satisfying (M) is a Banach algebra:
  \begin{align*}
        \| xy \|^2 
    & = \inprod{xy}{xy} \\
    & = (x^\dag \tensor y^\dag) \after \mu^\dag \after \mu \after (x \tensor y) \\
    & = \inprod{x \tensor y}{P(x \tensor y)} \\
    & \leq \inprod{x \tensor y}{x \tensor y} \\
    & = \inprod{x}{x}\inprod{y}{y} \\
    & = \| x \|^2 \| y \|^2 .
  \qedhere
  \end{align*}
\end{proof}

\subsubsection*{Remark}
In fact, it can be shown that the multiplication of a semigroup in
$\Cat{Hilb}$ satisfying (H) is automatically continuous, so that
after adjusting by a constant, the semigroup is a Banach
algebra~\cite[Corollary~2.2]{ingelstam:realhstar}.\footnote{Hence 
Proposition~\ref{prop:hstaralgebra} and Theorem~\ref{thm:ambrose} can
be altered to show that a monoid in $\Cat{Hilb}$ satisfying
properness, (A), (C), and (H) (but not necessarily (M)!), corresponds
to an \emph{orthogonal} basis. This may have consequences for attempts
to classify multipartite entanglement according to various Frobenius
structures~\cite{coeckekissinger:multipartite}. Compare also the
second entry in the table on page 11 of
\cite{coeckepavlovicvicary:bases}: in finite dimension, $\delta$ is
monic by (U), but in infinite dimension, one has to explicitly
postulate $\delta$ to be monic to prevent \eg the trivial
algebra $\delta(a)=0$ and obtain a correspondence with orthogonal
bases. \label{footnote:orthogonal}} 

The following lemma establishes \emph{properness}, which
corresponds to $x^*$ being the unique vector with the property
defining H*-algebras. It follows that $R(x^*)$ is the adjoint of $R(x)$.  

\begin{lemma}
  Suppose $\delta \colon A \to A \tensor A$ in $\Hilb$ satisfies
  (A) and (H). Then (M) implies properness, \ie $aA=0
  \Rightarrow a=0$. Hence (M) holds if and only if the regular
  representation is monic. 
\end{lemma}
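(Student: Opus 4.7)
The plan is to prove the forward implication---that, under (A) and (H), axiom (M) forces $aA=0\Rightarrow a=0$---and then observe that the stated iff is its repackaging: monicity of the regular representation $R$ is literally the condition $aA=\{0\}\Rightarrow a=0$, since $R(a)=0$ unpacks to $\mu(\id\tensor a) = 0$ (equivalently, in the commutative setting, $aA=0$).

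My strategy is to lift the problem from $A$ to $A\tensor A$. Since (M) says $\mu\after\mu^\dag = \id[A]$, we have $a=\mu(\mu^\dag a)$, so it suffices to show $\mu^\dag a = 0$ in $A\tensor A$. Since the simple tensors $b\tensor c$ span a dense subspace of the Hilbert tensor product, this in turn reduces to verifying that $\inprod{b\tensor c}{\mu^\dag a}=0$ for every $b,c\in A$.

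For each such pair, adjointness gives $\inprod{b\tensor c}{\mu^\dag a} = \inprod{\mu(b\tensor c)}{a} = \inprod{bc}{a}$, and the inner-product unfolding of (H) (already derived in the paragraph preceding the lemma, via identification of points of $A$ with arrows $I\to A$) rewrites this as $\inprod{c}{b^* a}$. In the paper's commutative context, $b^* a = a b^*$, and the latter lies in $aA = \{0\}$ by hypothesis; hence every such pairing vanishes. So $\mu^\dag a = 0$, and therefore $a = \mu(\mu^\dag a) = 0$, which is properness.

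The main obstacle, I expect, is recognising that one should test $\mu^\dag a$ against simple tensors rather than try to argue about $a$ directly. Once that move is made, the roles of the three axioms click into place: (M) supplies the right-inverse of $\mu$ that lets us descend from $A\tensor A$ back to $A$, (H) performs the crucial adjoint-swap $\inprod{bc}{a} = \inprod{c}{b^* a}$, and density of simple tensors closes the argument; (A) is in the background ensuring that $R$ is a semigroup homomorphism and hence that the kernel of $R$ really is the left annihilator. The iff statement is then immediate: the equivalence $R(a)=0 \Leftrightarrow aA=0$ turns our forward implication into (M) $\Rightarrow$ ($R$ monic), and the reverse reading of the same chain of equalities yields the converse.
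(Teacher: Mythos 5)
Your proof of the main implication is correct, and it takes a genuinely different, more elementary route than the paper's. The paper invokes Ambrose's decomposition \cite[Theorem~2.2]{ambrose:hstar}: $A$ splits as the direct sum of its trivial ideal $A'=\{a \mid aA=0\}$ and a proper \Hs-algebra $A''$; since direct sums in $\Hilb$ are dagger biproducts, $\delta$ restricts to a dagger monic $\delta'$ on $A'$, whose adjoint --- the multiplication of $A'$, which is the zero map --- is then epic, forcing $A'=0$. You replace this appeal to structure theory with a short direct computation: $\inprod{b\tensor c}{\mu^\dag a}=\inprod{bc}{a}=\inprod{c}{b^*a}=0$ whenever $aA=0$, density of the simple tensors gives $\mu^\dag a=0$, and (M) recovers $a=\mu(\mu^\dag a)=0$. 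This is self-contained, sidesteps the need to check that Ambrose's decomposition is compatible with $\delta$ and the dagger, and makes visible exactly where each axiom enters. Two minor remarks: (A) is never actually used (as your closing paragraph concedes), and the step $b^*a=ab^*\in aA$ uses commutativity, which is ambient in the paper but not listed among the lemma's hypotheses; it can be avoided by using the right-handed half of Ambrose's condition, $\inprod{bc}{a}=\inprod{b}{ac^*}$.

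The weak point is the final sentence. The equivalence ($R$ monic) $\Leftrightarrow$ ($aA=0\Rightarrow a=0$) is fine, so your forward implication does give (M) $\Rightarrow$ ($R$ monic). But ``the reverse reading of the same chain of equalities'' does not yield ($R$ monic) $\Rightarrow$ (M): nothing in that chain manufactures the identity $\mu\after\mu^\dag=\id$. Indeed that implication fails as literally stated: on $A=\field{C}$ with $\mu(x\tensor y)=2xy$ one has (A), (C), (F) and (H) (with $x^*=\bar{x}$) and properness, yet $\mu\after\mu^\dag=4\,\id$. Properness is blind to the normalization that (M) imposes, so no rereading of your computation can deliver this direction. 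In fairness, the paper's own proof also establishes only the forward implication and leaves the ``if'' half of its final sentence unjustified, so you lose nothing relative to the paper here --- but you should not claim that the converse follows from your argument.
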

\begin{proof}
  By~\cite[Theorem~2.2]{ambrose:hstar}, $A$ is the direct sum of its
  trivial ideal $A'$ and a proper H*-algebra $A''$. Here, the trivial
  ideal is $A'=\{a \in A \mid aA=0\}$. Since the direct sum of Hilbert
  spaces is a dagger biproduct, we can write $\delta$ as $\delta' \oplus \delta''
  \colon A \to A \tensor A$, where $\delta' \colon A' \to A' \tensor
  A'$ and $\delta'' \colon A'' \to A'' \tensor A''$. The latter two
  morphisms are again dagger monic as a consequence of (M). So the
  multiplication $\delta'^\dag$ of $A'$ is epic, which forces $A'=0$.
\end{proof}

The following proposition summarizes the preceding discussion.

\begin{proposition}
\label{prop:hstaralgebra}
  Any structure $(A, \mu)$ in $\Cat{Hilb}$ satisfying (A), (H) and (M) is
  an \Hs-algebra (and also satisfies (F)); and conversely, an
  \Hs-algebra satisfies (A), (H) and (M), and hence also (F). 
  \qed
\end{proposition}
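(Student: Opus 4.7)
The plan is to synthesize the preceding lemmas into a single characterization, using the monoidal well-pointedness of $\Hilb$ as the bridge between the categorical axioms (A), (H), (M) and Ambrose's analytic definition of an \Hs-algebra.

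For the forward direction, suppose $(A, \mu)$ satisfies (A), (H) and (M). The Banach algebra lemma just proven upgrades $(A, \mu)$ to a Banach algebra from (M), associativity is immediate from (A), and axiom (H) evaluated at arbitrary points $y, z \colon \field{C} \to A$ reproduces Ambrose's H*-condition $\langle xy, z \rangle = \langle y, x^* z \rangle$ via the translations $\langle xy, z \rangle = (x^\dag \tensor y^\dag) \after \mu^\dag \after z$ and $\langle y, x^* z \rangle = y^\dag \after \mu \after (x^* \tensor z)$ already recorded in the text; monoidal well-pointedness of $\Hilb$ makes this correspondence exact. Hence $(A, \mu)$ is an \Hs-algebra, and the subsidiary claim that (F) holds follows immediately from Lemma~\ref{lem:HAWPimplyF} applied with (A) and (H).

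For the converse, an \Hs-algebra is by definition an associative Banach algebra (giving (A)) satisfying Ambrose's condition, which dualizes back to (H) by reversing the well-pointedness argument. The delicate ingredient is (M): a generic Banach algebra structure only forces $\mu^\dag \after \mu$ to have norm at most one, not to be a projector. To obtain (M), I would invoke Ambrose's structure theorem, which decomposes a proper \Hs-algebra as an orthogonal direct sum of simple H*-ideals; one verifies (M) on each summand and reassembles via the dagger biproduct structure of $\Hilb$. Finally, (F) in the converse direction follows from (A) and (H) via Lemma~\ref{lem:HAWPimplyF} once more.

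The main technical obstacle is the (M) half of the converse: this is the single point at which purely formal manipulation of the axioms is insufficient, and the concrete structure theory of \Hs-algebras must be brought in. Every other implication is bookkeeping once the preceding lemmas and the well-pointedness of $\Hilb$ are in place.
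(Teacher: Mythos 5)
Your forward direction is exactly the paper's (implicit) argument: the proposition carries no separate proof and is stated as a summary of the preceding discussion, namely the lemma that (M) makes the monoid a Banach algebra, the well-pointedness translation of (H) into Ambrose's condition $\inprod{xy}{z}=\inprod{y}{x^*z}$, and Lemma~\ref{lem:HAWPimplyF} for (F). That half of your write-up is correct and faithful.

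The converse is where there is a genuine gap, and your own instincts are right that (M) is the delicate point --- but the repair you propose does not work. Ambrose's structure theorem decomposes a proper commutative \Hs-algebra into one-dimensional summands, yet (M) can fail on a single such summand: take $A=\field{C}$ with $\mu(x\tensor y)=\lambda xy$ for a real $\lambda$ with $0<\lambda<1$. This is a commutative Banach algebra satisfying Ambrose's condition (with $x^*=\bar{x}$), but $\mu\after\mu^\dag=\lambda^2\,\id\neq\id$. Equivalently, the minimal self-adjoint idempotent $e$ of a summand need not have $\|e\|=1$, and (M) holds on the summand precisely when it does. So ``verify (M) on each summand and reassemble'' cannot succeed without first rescaling the multiplication on each summand to normalize its idempotent --- which is an extra step, not bookkeeping. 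The paper itself is silent on this point and effectively concedes it in the footnote attached to the Remark after the Banach-algebra lemma: dropping (M) from the hypotheses yields merely \emph{orthogonal} (not orthonormal) bases. Any honest proof of the converse must either read ``\Hs-algebra'' as already normalized so that $\delta$ is an isometry, or make the summand-wise renormalization explicit; as literally stated, ``an \Hs-algebra satisfies (M)'' is false, and your argument inherits that falsity rather than curing it.
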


Ambrose proved a complete structure theorem for H*-algebras, of which
we now state the commutative case.

\begin{theorem}[Ambrose, 1945] \label{thm:ambrose}
 Any proper commutative \Hs-algebra (of arbitrary dimension) is
  isomorphic to a Hilbert space direct sum of one-dimensional
  algebras. 
  \qed
\end{theorem}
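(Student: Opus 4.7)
Plan: I would follow Ambrose's original strategy. By axiom (H), the regular representation $R$ sending each $a \in A$ to the left-multiplication operator on $A$ is a bounded $*$-homomorphism of involutive semigroups, and by the preceding properness lemma it is injective. Since $A$ is commutative, $\{R(a) \mid a \in A\}$ is a commuting family of bounded normal operators on the Hilbert space $A$. The target of the proof is thus a joint diagonalization of this family: the orthogonal decomposition of $A$ into one-dimensional common eigenspaces is exactly the claimed Hilbert-space direct sum of one-dimensional H*-subalgebras.

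First, for any nonzero $a \in A$, I would extract a nonzero self-adjoint idempotent in $A$ by applying the spectral theorem to the self-adjoint operator $R(a^*a)$. The crucial point is that spectral projections onto the nonzero eigenspaces of $R(a^*a)$ correspond to actual elements of $A$, not merely to abstract bounded operators on the Hilbert space $A$. This rests on the fact that each $R(a)$ is a compact (indeed Hilbert--Schmidt) operator, which forces the nonzero spectrum of $R(a^*a)$ to be discrete with finite-dimensional eigenspaces, and ensures that the corresponding spectral projectors lie in the image of $R$. Zorn's lemma then produces a maximal family $\{e_i\}_{i \in I}$ of mutually orthogonal, minimal, self-adjoint idempotents in $A$.

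Second, by minimality and commutativity, each ideal $e_i A$ is one-dimensional and isomorphic as an H*-algebra to $(\field{C}, \mu_{\field{C}})$. Mutual orthogonality of the $e_i$ in the Hilbert-space sense makes the subspaces $e_i A$ pairwise orthogonal, so $\bigoplus_{i \in I} e_i A$ sits inside $A$ as a Hilbert-space direct sum of one-dimensional H*-subalgebras. To close the argument, I would show this sum exhausts $A$: its orthogonal complement $B$ is itself a proper commutative H*-subalgebra (closed and closed under $a \mapsto a^*$ by axiom (H)), so if $B$ were nonzero the first step applied inside $B$ would produce a minimal self-adjoint idempotent in $B$, contradicting the maximality of $\{e_i\}$.

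The main obstacle: I expect the hardest step to be establishing the compactness, indeed the Hilbert--Schmidt property, of each $R(a)$. This is what lets the spectral theorem yield honest idempotents in $A$ rather than only projections in $\Hilb(A,A)$; without it one obtains merely a continuous joint spectral measure and cannot cleanly extract minimal idempotents. The Hilbert--Schmidt bound must be derived by combining axiom (H), the inner product on $A$, and the submultiplicativity $\|xy\| \leq \|x\|\|y\|$ supplied by the previous lemma, and it is this analytic input that makes the structure theorem continue to hold in arbitrary dimension despite the absence of a unit.
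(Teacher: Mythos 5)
Your plan reproduces Ambrose's original architecture (self-adjoint idempotents, a maximal orthogonal family via Zorn, minimal idempotents generating one-dimensional ideals, and an orthocomplement argument), which is a genuinely different route from the paper's: the paper cites this theorem from Ambrose and instead develops a Gelfand-duality picture in which copyable elements are the characters, reducing the structure theorem to semisimplicity, i.e.\ injectivity of the Gelfand representation. However, the step you single out as the main obstacle is a genuine gap, and it is pointed the wrong way. Submultiplicativity $\|xy\|\leq\|x\|\,\|y\|$ only yields the operator-norm bound $\|R(a)\|\leq\|a\|$; it cannot produce a Hilbert--Schmidt or even a compactness estimate. Indeed, in the noncommutative setting $R(a)$ need not be compact at all: in the \Hs-algebra of Hilbert--Schmidt operators on an infinite-dimensional space, left multiplication by a rank-one projection fixes an infinite orthonormal family. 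In the commutative case $R(a)$ \emph{is} Hilbert--Schmidt, but only a posteriori, as a consequence of the very structure theorem you are proving; I see no direct derivation from (H) and the norm inequality, and compactness is in any case not what makes spectral projections land in $R(A)$.

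The standard repair, which is essentially Ambrose's own argument, avoids compactness entirely. For $a=a^*\neq 0$ the operator $R(a)$ is bounded, self-adjoint and nonzero (by properness), and every Borel function of $R(a)$ commutes with all right multiplications, since $R(a)$ does by associativity. Taking $g(t)=t^{-1}$ on $\{|t|\geq\alpha\}$ and $0$ elsewhere, for suitable $0<\alpha<\|R(a)\|$, the element $e=g(R(a))(a)\in A$ satisfies $R(e)b=g(R(a))R(a)b=E(\{|t|\geq\alpha\})b$ for all $b$, so $R(e)$ is a nonzero projection and faithfulness of $R$ makes $e$ a nonzero self-adjoint idempotent. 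Minimality is then obtained not from finite-dimensional eigenspaces of a compact operator but from the observation that $eA=R(e)A$ is a closed, proper, \emph{unital} \Hs-subalgebra, hence finite-dimensional (by the Kaplansky result invoked in the paper's lemma on unital Frobenius algebras), hence a finite orthogonal sum of minimal idempotents. With that substitution your second and third paragraphs go through as written.
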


This is equivalent to asserting isomorphism
qua coalgebras. So it is exactly the result we are after! 
Rather than relying on Ambrose's results, we now give a direct,
conceptual proof, using a few notions from Gelfand duality for
commutative Banach algebras.

\subsection{Copyables and semisimplicity}

A \emph{copyable element} of a semigroup $\delta \colon A \to A \tensor A$ in
a monoidal category is a semigroup homomorphism to it from the
canonical semigroup on the monoidal unit. More precisely, a copyable element is
a morphism $a \colon I \to A$ such that $(a \tensor a) \after \delta =
\delta \after a$. In a monoidally well-pointed category such as $\Cat{Hilb}$, we
can  speak of a copyable element
of $\delta$ as a point $a \in A$ with $\delta(a)=a \tensor a$.\footnote{Copyable elements
are also called \emph{primitive} in the context of
C*-bigebras~\cite{hofmann:duality}, and \emph{grouplike} in the
study of Hopf algebras~\cite{sweedler:hopfalgebra,kassel:quantumgroups}.} 

\begin{proposition}
\label{lem:independent} 
  Assuming only (A), nonzero copyable elements are linearly independent.
\end{proposition}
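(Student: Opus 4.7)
The plan is strong induction on the size $n$ of a putative nontrivial linear dependence $\sum_{i=1}^n \lambda_i a_i = 0$ among distinct nonzero copyable elements $a_1,\ldots,a_n$, modeled on the classical Dedekind--Artin argument for independence of characters. For $n=1$ the relation $\lambda_1 a_1 = 0$ with $\lambda_1 \neq 0$ immediately forces $a_1 = 0$, contradicting nonzeroness. For $n \geq 2$, by the induction hypothesis I may assume every $\lambda_i \neq 0$ (otherwise a strictly shorter dependence would remain, contradicting the induction hypothesis).

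The central manipulation produces two expressions in $\cat{D}(I, A \tensor A)$. Applying $\delta$ to the dependence and using copyability termwise yields
\[
  0 \;=\; \delta \after \Bigl(\sum_i \lambda_i a_i\Bigr) \;=\; \sum_i \lambda_i (a_i \tensor a_i),
\]
while tensoring the original relation on the right with $a_n$ gives
\[
  0 \;=\; \Bigl(\sum_i \lambda_i a_i\Bigr) \tensor a_n \;=\; \sum_i \lambda_i (a_i \tensor a_n).
\]
Subtracting, the $i=n$ summand vanishes, leaving $\sum_{i=1}^{n-1} \lambda_i \bigl(a_i \tensor (a_i - a_n)\bigr) = 0$.

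From here I would invoke the induction hypothesis: $a_1,\ldots,a_{n-1}$ are linearly independent, and in a sufficiently linear monoidal setting (such as $\Hilb$) this independence propagates through $\tensor$, forcing any relation $\sum_{i<n} a_i \tensor v_i = 0$ to have every $v_i = 0$. Hence $\lambda_i(a_i - a_n) = 0$ for each $i<n$; distinctness of the copyables then gives $a_i - a_n \neq 0$ and therefore $\lambda_i = 0$, the desired contradiction.

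The step I expect to be the main obstacle is precisely the last one: passing from the tensor identity back to individual scalar relations. In $\Hilb$ and the other monoidally well-pointed linear examples this is routine, but justifying it in full generality requires Vect-enrichment of the homsets compatibly with $\tensor$, and it is this compatibility --- rather than coassociativity (A), which plays only the passive role of ambient semigroup axiom --- that does the real work. Notably, no use is made of (C), (M), (F), or (H).
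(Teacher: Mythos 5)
Your proof is correct in the setting where this proposition lives ($\Hilb$), but it takes a genuinely different route from the paper's. The paper also starts from a minimal dependent set of nonzero copyables, writes $a_0 = \sum_{i=1}^n \alpha_i a_i$, and computes $\delta(a_0)$ in two ways: termwise as $\sum_i \alpha_i\,(a_i \tensor a_i)$, and via copyability of $a_0$ itself as $\sum_{i,j} \alpha_i\alpha_j\,(a_i \tensor a_j)$. Equating coefficients yields $\alpha_i^2 = \alpha_i$ and $\alpha_i\alpha_j = 0$ for $i \neq j$, so the $\alpha_i$ are orthogonal idempotent scalars, and a contradiction with distinctness or nonzeroness follows. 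Your Dedekind--Artin subtraction instead shortens the dependence by one and closes an induction. Both arguments hinge on precisely the linear-algebraic fact you flag as the main obstacle --- that linear independence of the $a_i$ allows one to equate coefficients in $A \tensor A$, i.e.\ that $\sum_i a_i \tensor v_i = 0$ forces each $v_i = 0$. In $\Hilb$ this is routine (choose bounded functionals dual to the $a_i$ on their finite-dimensional span and apply $f_j \tensor \id$), and the paper relies on it just as implicitly as you do, so this is not a gap relative to the paper's own standard of rigour. Your remark that coassociativity is never actually invoked also matches the paper's proof; the phrase \emph{assuming only (A)} serves to stress that none of (C), (M), (F), (H) are needed. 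What your version buys is the standard independence-of-grouplikes argument from coalgebra theory, adapted to the counit-free setting; what the paper's buys is a slightly more self-contained scalar computation in a single step, with no induction.
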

\begin{proof}
  \cite[Theorem~10.18(ii)]{hofmann:duality}
 Suppose that $\{a_0,\ldots,a_n\}$ is a minimal nonempty linearly
  dependent set of nonzero copyables. Then we can write $a_0$ as $\sum_{i=1}^n
  \alpha_i a_i$ for a suitable choice of coefficients $\alpha_i \in \field{C}$. So
  \begin{align*}
        \sum_{i=1}^n \alpha_i (a_i \tensor a_i) 
    & = \sum_{i=1}^n \alpha_i \delta(a_i) \\
    & = \delta(a_0) \\
    & = (\sum_{i=1}^n \alpha_i a_i) \tensor (\sum_{j=1}^n \alpha_j a_j) \\
    & = \sum_{i,j=1}^n \alpha_i \alpha_j (a_i \tensor a_j).
  \end{align*}
  By minimality, $\{a_1,\ldots,a_n\}$ is linearly independent.
  Hence $\alpha_i^2=\alpha_i$ for all $i$, and $\alpha_i \alpha_j=0$
  for $i \neq j$. So $\alpha_i=0$ or $\alpha_i=1$ for all $i$.
  If $\alpha_j=1$, then $\alpha_i=0$ for all $i \neq j$, so
  $a_0=a_j$. By minimality, then $j=1$ and $\{a_0,a_j\} = \{a_0\}$,
  which is impossible.
  So we must have $\alpha_i=0$ for all $i$. But then $a_0=0$, which is
  likewise a contradiction.
\end{proof}

\begin{proposition}
\label{lem:normal}
  Assuming only (M), nonzero copyable elements have unit norm.
\end{proposition}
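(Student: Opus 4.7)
The plan is to compute $\inprod{\delta \after a}{\delta \after a}$ in two different ways and equate. Viewing the point $a$ as a morphism $a \colon I \to A$, copyability gives $\delta \after a = a \tensor a$ (modulo the canonical coherence iso $\field{C} \cong \field{C} \tensor \field{C}$), providing one handle on $\delta \after a$; axiom (M), \ie $\delta^\dag \after \delta = \id[A]$, provides a second, independent handle.

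First, copyability together with the fact that the inner product on a tensor product of Hilbert spaces multiplies componentwise yields
\[
  \inprod{\delta \after a}{\delta \after a} \;=\; \inprod{a \tensor a}{a \tensor a} \;=\; \inprod{a}{a}^2.
\]
Second, unfolding the inner product as $\dag$-composition and cancelling using (M) gives
\[
  \inprod{\delta \after a}{\delta \after a} \;=\; a^\dag \after \delta^\dag \after \delta \after a \;=\; a^\dag \after a \;=\; \inprod{a}{a}.
\]
Equating the two expressions produces $\inprod{a}{a}^2 = \inprod{a}{a}$. Since $a \neq 0$, the scalar $\inprod{a}{a}$ is strictly positive, so dividing yields $\inprod{a}{a} = 1$, \ie $a$ has unit norm.

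There is no substantive obstacle: the argument is essentially one line once one notices that the two occurrences of $\delta$ in $\delta^\dag \after \delta$ — one folded away by copyability, the other by (M) — leave exactly one power of $\inprod{a}{a}$ worth of slack, which pins the norm to $1$. Note in particular that neither (A), (C), (H) nor the existence of a unit is used, consistent with the hypothesis ``assuming only (M)''.
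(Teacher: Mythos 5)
Your proof is correct and is essentially the paper's own argument: the paper writes $\|a\| = \|\delta(a)\| = \|a\tensor a\| = \|a\|^2$ (using (M) to see that $\delta$ is an isometry), which is exactly your two-way computation of $\inprod{\delta\after a}{\delta\after a}$ phrased in terms of norms rather than inner products. No further comment is needed.
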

\begin{proof}
  Let $a$ be a copyable element. Then $\delta(a) = a \tensor a$.
  Hence $$\|a\| = \|\delta(a)\| = \|a
  \tensor a\| = \|a\|^2.$$ It follows that $\|a\|$ is either 0 or
  1. Therefore, if $a$ is a nonzero, then $\|a\|=1$.
\end{proof}

\begin{proposition}
\label{prop:orthogonal}
  Assuming only (F), copyable elements are pairwise orthogonal.
\end{proposition}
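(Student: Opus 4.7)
My plan is to exploit the fact that in a dagger monoidal category the Frobenius law (F) already supplies its right-handed mirror image
\[
  \delta \after \mu \;=\; (\id \tensor \mu) \after (\delta \tensor \id)
\]
(call this (F$'$)), simply by taking daggers on both sides of (F): the left-hand side $\delta \after \mu$ is self-adjoint because $\mu^\dag = \delta$, while the adjoint of the right-hand side of (F) is precisely the right-hand side of (F$'$). So from (F) alone I can work with both versions of the Frobenius law, one propagating copies to the right and the other to the left.

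For copyable $a, b \colon I \to A$ in $\Cat{Hilb}$, I then compute $(\delta \after \mu)(a \tensor b)$ in two ways. Using (F), copying of $b$ propagates through to yield $ab \tensor b$; using (F$'$), copying of $a$ propagates through to yield $a \tensor ab$. Monoidal well-pointedness of $\Cat{Hilb}$ therefore gives
\[
  ab \tensor b \;=\; a \tensor ab.
\]
Composing with $a^\dag \tensor \id$ on the one hand, and with $\id \tensor b^\dag$ on the other, and simplifying via copyability $\inprod{a}{ab} = \inprod{a \tensor a}{a \tensor b} = \|a\|^2 \inprod{a}{b}$ and symmetrically $\inprod{b}{ab} = \|b\|^2 \inprod{b}{a}$, I obtain the pair of identities
\[
  ab \;=\; \inprod{a}{b} \, b
  \qquad\text{and}\qquad
  ab \;=\; \inprod{b}{a} \, a.
\]

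Equating these gives $\inprod{a}{b} \, b = \inprod{b}{a} \, a$, so for distinct nonzero copyables $a, b$ it suffices to rule out proportionality. If $a = \lambda b$ for some scalar $\lambda$, then applying $\delta$ and using copyability of both elements yields $\lambda (b \tensor b) = \delta(a) = a \tensor a = \lambda^2 (b \tensor b)$, whence $\lambda^2 = \lambda$ and $\lambda \in \{0,1\}$, contradicting $a \neq 0$ and $a \neq b$. Hence $\inprod{a}{b} = 0$. The main subtlety here is precisely this last step: Proposition~\ref{lem:independent} on linear independence of copyables uses axiom~(A), which is not assumed, but a direct two-element argument via the copy equation circumvents this obstacle.
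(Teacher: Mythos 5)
Your proof is correct, and every step really does use only (F) together with the dagger and linearity: in particular your opening observation that the mirror law (F$'$) comes for free by taking daggers of (F) (since $\delta\after\mu$ is self-adjoint) is exactly what is needed to stay within the hypotheses, whereas the paper elsewhere derives (F$'$) from (C). The two arguments share the same engine---sliding $\delta^\dag$ past $\delta$ via the Frobenius law when evaluated on copyables---but are organized differently. The paper's proof stays entirely at the level of scalars: it computes $\inprod{a}{a}^2\inprod{b}{a}=\inprod{a}{a}\inprod{b}{a}^2$ by moving (F) across a three-fold inner product, deduces $\inprod{a}{a}=\inprod{a}{b}=\inprod{b}{a}=\inprod{b}{b}$ whenever these are nonzero, and finishes with $\inprod{a-b}{a-b}=0$. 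You instead extract the vector identity $\delta(ab)=ab\tensor b=a\tensor ab$ and from it the structural facts $ab=\inprod{a}{b}\,b=\inprod{b}{a}\,a$, so that non-orthogonal copyables are forced to be proportional, and the copy equation then gives $\lambda^2=\lambda$ and hence equality. Your route buys a reusable intermediate statement (the product of two copyables is a scalar multiple of either factor) and an endgame that avoids the norm computation; the paper's is marginally more compact. One small point to make explicit: passing from $\|a\|^2\inprod{a}{b}\,b=\|a\|^2\,ab$ to $ab=\inprod{a}{b}\,b$ divides by $\|a\|^2$, so the reduction to nonzero copyables (zero being orthogonal to everything trivially) should be stated before that step rather than only at the end.
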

\begin{proof}
  \cite[Corollary~4.7]{coeckepavlovicvicary:bases}
  Let $a,b$ be copyables. Then:
  \begin{align*}
        \inprod{a}{a} \cdot \inprod{a}{a} \cdot \inprod{b}{a}
    & = \inprod{a \tensor a \tensor b}{a \tensor a \tensor a} \\
    & = \inprod{(\delta \tensor \id)(a \tensor b)}
               {(\id \tensor \delta)(a \tensor a)} \\
    & = \inprod{a \tensor b}{(\delta^\dag \tensor \id) \after
                             (\id \tensor \delta)(a \tensor a)} \\
    & = \inprod{a \tensor b}{(\id \tensor \delta^\dag) \after
                             (\delta \tensor \id)(a \tensor a)} \\
    & = \inprod{(\id \tensor \delta)(a \tensor b)}
               {(\delta \tensor \id)(a \tensor a)} \\
    & = \inprod{a \tensor b \tensor b}{a \tensor a \tensor a} \\
    & = \inprod{a}{a} \cdot \inprod{b}{a} \cdot \inprod{b}{a}.
  \end{align*}
  Analogously $\inprod{b}{b}\inprod{b}{b}\inprod{a}{b} =
  \inprod{b}{b}\inprod{a}{b}\inprod{a}{b}$. Hence, if $\inprod{a}{a}$
  and $\inprod{b}{a}$ are both nonzero, then
  $\inprod{a}{a}=\inprod{b}{a}$ and $\inprod{b}{b}=\inprod{a}{b}$. So
  $\inprod{a}{a}, \inprod{b}{b} \in \field{R}$ and $\inprod{a}{a}=
  \inprod{a}{b} = \inprod{b}{a} = \inprod{b}{b}$.
  Now suppose $\inprod{a}{b}\neq 0$. Then we can conclude 
  $\inprod{a-b}{a-b} = \inprod{a}{a} - \inprod{a}{b} - \inprod{b}{a} +
  \inprod{b}{b} = 0$. So $a-b=0$, \ie $a=b$. 
  Hence the copyables are pairwise orthogonal.
\end{proof}

Applying dagger, a copyable element of $A$ corresponds
exactly to a comonoid homomorphism $(\field{C}, \delta_{\field{C}})
\to (A, \delta)$: 
\[\xymatrix{
  1 \ar@{|->}[d] \ar@{|->}[r] & 1 \tensor 1 \ar@{|->}[d] \\  
  a \ar@{|->}[r] & a \tensor a.
}\]
We have already seen that copyable elements correspond exactly to
algebra homomorphisms  
\[ 
  (A, \mu) \to (\field{C}, \mu_{\field{C}}),
\]
\ie to \emph{characters} of the algebra---the elements of the
Gelfand spectrum of $A$~\cite{pedersen:analysisnow}. 
This leads to our
first characterization of when a (nonunital) Frobenius algebra in
$\Cat{Hilb}$ corresponds to an orthonormal basis.

\begin{theorem}
  A Frobenius algebra in $\Hilb$ admits the structure theorem and
  hence corresponds to an orthonormal basis if and only if it is
  semisimple. 
\end{theorem}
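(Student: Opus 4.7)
The plan is to combine the classical characterization of semisimplicity for a commutative Banach algebra with the correspondence between copyable elements and characters, using the three geometric facts established in Propositions~\ref{lem:independent}, \ref{lem:normal}, and \ref{prop:orthogonal}. First, by the earlier lemma, axiom~(M) makes $(A,\mu)$ a Banach algebra, and (C) makes it commutative, so semisimplicity has its standard meaning: the Gelfand transform is injective, or equivalently the characters separate points.

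Next, I would invoke the correspondence, already noted in the paper, between copyable elements $a\colon I \to A$ and characters $a^\dag \after \mu \colon A \to \field{C}$. Under the Riesz identification of bounded linear functionals with vectors, a character corresponds precisely to a copyable element, since multiplicativity of the functional $b \mapsto \inprod{a}{b}$ unfolds exactly to $\delta(a) = a \tensor a$. Propositions~\ref{lem:independent}, \ref{lem:normal}, and~\ref{prop:orthogonal} together give that the nonzero copyables $\{e_i\}_{i \in I}$ form an orthonormal set in $A$. Characters then separate points iff the only vector orthogonal to every $e_i$ is zero, iff $\{e_i\}$ has dense linear span, iff $\{e_i\}$ is an orthonormal basis of $A$. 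So semisimplicity is equivalent to the copyables forming an orthonormal basis.

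From here both implications are short. For the ``if'' direction, semisimplicity gives the orthonormal basis $\{e_i\}$, and copyability $\delta(e_i) = e_i \tensor e_i$ makes the isometric isomorphism $A \to \bigoplus_I \field{C}$ sending $e_i$ to the $i$-th standard generator an isomorphism of coalgebras onto $\bigoplus_I (\field{C}, \delta_{\field{C}})$, by continuity of $\delta$ and the density of finite linear combinations of the $e_i$; hence the structure theorem holds. For the ``only if'' direction, given an isomorphism $(A, \delta) \cong \bigoplus_I (\field{C}, \delta_{\field{C}})$ of coalgebras, taking daggers yields an algebra isomorphism onto $\bigoplus_I (\field{C}, \mu_{\field{C}})$, whose coordinate projections are characters that visibly separate points.

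The main subtlety, and the reason the proof cannot simply quote Ambrose's structure theorem, is that a Frobenius algebra in $\Hilb$ satisfies only~(F), not axiom~(H), so it is not \emph{a priori} an \Hs-algebra; the argument must instead route through the Banach-algebraic notion of semisimplicity and the three geometric propositions about copyables. I also need to confirm that the ``characters separate points'' formulation of semisimplicity is correct in the possibly nonunital setting, but this is the standard definition via the Jacobson radical and modular maximal ideals and causes no trouble.
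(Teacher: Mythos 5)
Your proposal is correct and follows essentially the same route as the paper: identify characters with copyable elements via the Riesz representation, use the cited propositions to see that the nonzero copyables form an orthonormal set, and observe that semisimplicity (injectivity of the Gelfand transform, i.e.\ characters separating points) is exactly the statement that this set spans densely and hence is an orthonormal basis, which is the structure theorem. The only immaterial difference is in the necessity direction, where the paper notes that the ideal lattice of $\bigoplus_I \field{C}$ is a complete atomic Boolean algebra while you note that the coordinate projections are separating characters.
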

\begin{proof}
  We first consider sufficiency. Form a direct sum of
  one-dimensional coalgebras indexed by the copyables of $(A,\delta)$.
  This will have an isometric embedding as a coalgebra into $(A,\delta)$: 
  \[
    e \colon \bigoplus_{\{a \mid \delta(a)=a \tensor a\}} (\field{C},
    \delta_{\field{C}}) \to (A,\delta).
  \]
  The image $S$ of $e$ is a closed subspace of $A$, and has an orthonormal
  basis given by the images of the characters of $A$ qua copyables.
  The structure theorem holds if the image of $e$ spans $A$.
  
  Given $a \in A$ and a character $c$, the evaluation $c(a)$ gives the
  Fourier coefficient of $a$ at the basis element of $S$ corresponding
  to $c$. Now $S$ will be the whole of $A$ if and only if distinct
  vectors have distinct projections on $S$, \ie if and only if
  distinct vectors have distinct Gelfand transforms $\hat{a} \colon c
  \mapsto c(a)$. Hence the Ambrose structure theorem holds when the
  Gelfand representation is injective, which holds if and only if the
  algebra is semisimple. 

  Necessity is easy to see from the form of a direct sum of
  one-dimensional algebras, as the lattice of ideals is a complete
  atomic boolean algebra, where the atoms are the generators of the algebras.
\end{proof}

We shall restate the previous theorem in terms of axiom (H), so that we
have a characterization that lends itself to categories other than
$\Cat{Hilb}$. 

\begin{proposition}
\label{prop:semisimple}
  A Frobenius algebra in $\Hilb$ satisfies (H) if and only if it is
  semisimple, and hence admits the structure theorem. 
\end{proposition}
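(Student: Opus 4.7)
The plan is to identify (H) with semisimplicity by routing through Ambrose's structure theorem for commutative proper \Hs-algebras.

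For necessity, I would start from a Frobenius algebra $(A,\delta)$ satisfying (H). The algebra $(A,\mu)$ with $\mu=\delta^\dag$ then satisfies (A), (M), and (H), so Proposition~\ref{prop:hstaralgebra} recognises it as an \Hs-algebra; axiom (C) provides commutativity and the preceding properness lemma (that (M) entails properness under (A) and (H)) provides properness. Ambrose's Theorem~\ref{thm:ambrose} will then realise $A$ as a Hilbert space direct sum of one-dimensional algebras, which is exactly what it means to admit the structure theorem, hence semisimplicity by the previous theorem.

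For sufficiency, I would use the previous theorem to present the semisimple algebra as a coalgebra isomorphism $(A,\delta)\cong\bigoplus_I(\field{C},\delta_\field{C})$. Each one-dimensional summand is a unital commutative Frobenius algebra, so Lemma~\ref{lem:cpv} supplies (H) on it; unwinding the construction shows the summand involution is complex conjugation. Assembling these componentwise yields a well-defined operation $a\mapsto(\bar{a_i})_i$ on $\cat{D}(I,A)$, since it preserves norms, and because $\mu$ and $\mu^\dag$ both respect the block decomposition of $A\tensor A=\bigoplus_{i,j}\field{C}$, the global (H) reduces to its one-dimensional instances on the direct sum, which we have just verified. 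Transporting back along the isomorphism (which by taking daggers is also an algebra isomorphism) yields (H) on the original algebra.

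The step most deserving of care, I expect, is this direct-sum assembly: one must check that the block-diagonal structure of $\mu$ and $\mu^\dag$ forces the componentwise involution to satisfy the global axiom (H), rather than something weaker. Everything else is bookkeeping on top of Proposition~\ref{prop:hstaralgebra}, the properness lemma, Ambrose's theorem, and Lemma~\ref{lem:cpv}.
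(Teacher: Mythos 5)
Your proposal is correct and follows essentially the same route as the paper: the forward direction rests on Ambrose's results for proper commutative \Hs-algebras (the paper cites semisimplicity directly, while you detour through the structure theorem and then the easy direction of the preceding theorem, which is a harmless variation), and the converse is exactly the paper's construction of $x^*$ by conjugating coefficients on the direct sum $\bigoplus_I(\field{C},\mu_{\field{C}})$ and transporting along the (unitary) isomorphism. The extra care you flag about the block decomposition is sound but is precisely the ``easily seen'' step the paper leaves implicit.
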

\begin{proof}
  Semisimplicity of proper \Hs-algebras follows from results
  in~\cite{ambrose:hstar}.  
  Conversely,  $\bigoplus_I (\field{C}, \mu_{\field{C}})$ is easily seen
  to satisfy (H); we can define $x^*$ by taking conjugate coefficients
  in the given basis. 
\end{proof}

\subsection{Categorical formulation}

We can  recast these results into a categorical
form. Recall that there is a functor $\ell^2 \colon \Cat{PInj} \to
\Cat{Hilb}$ on the category of sets and partial
injections~\cite{barr:algebraicallycompact, heunen:thesis}. It sends a
set $X$ to the Hilbert space $\ell^2(X) = \{ \varphi \colon X \to
\field{C} \mid \sum_{x \in X} |\varphi(x)|^2 < \infty \}$, which is  
the free Hilbert space on $X$ that is equipped with an orthonormal
basis, \ie an \Hs-algebra, in a sense we will now make
precise. 
First, we make Frobenius algebras and \Hs-algebras into 
categories. While other choices of morphisms can fruitfully be
made~\cite{heunen:copyables}, the following one suits our current
purposes.   

\begin{definition}
\label{def:morphisms}
  Let $\cat{D}$ be a symmetric monoidal dagger category.
  We denote by $\Cat{HStar}(\cat{D})$ the category whose objects are
  \Hs-algebras in $\cat{D}$, and by $\Cat{Frob}(\cat{D})$ the category
  whose objects are Frobenius algebras in $\cat{D}$. 
  A morphism $(A,\delta) \to (A',\delta')$ in both categories is a
  morphism $f \colon A \to A'$ in $\cat{D}$ 
  satisfying $(f \tensor f) \after \delta = \delta' \after f$ and
  $f^\dag \after f = \id$. 
\end{definition}

\begin{proposition}
  Every object in $\Cat{PInj}$ carries a unique H*-algebra structure,
  namely $\delta(a) = (a,a)$. 
\end{proposition}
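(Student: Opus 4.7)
The plan is to establish existence and uniqueness separately, working throughout with the concrete description of arrows in $\Cat{PInj}$ as partial injections of sets, composed as relations and equipped with relational converse as dagger.

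For existence, I would verify that $\delta(a) = (a,a)$ satisfies (A), (C), (M) and (H). Cocommutativity (C) is immediate because the diagonal is invariant under swap. For (M), note that $\delta^\dag \colon X \times X \to X$ is the partial injection $(b,b) \mapsto b$, undefined off the diagonal, so $\delta^\dag \circ \delta$ is the identity. Coassociativity (A) follows because both $(\id \tensor \delta) \circ \delta$ and $(\delta \tensor \id) \circ \delta$ send $a$ to the triple $(a,a,a)$. For (H), since $I$ is the singleton, a point $a \colon I \to X$ is either empty or picks out a single element $x \in X$; I would define $a^* = a$ and check that both sides of (H) are partial injections $X \to X$ defined exactly at $x$, sending $x \mapsto x$ (after identifying $I \tensor X$ with $X$).

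For uniqueness, suppose $\delta \colon X \to X \times X$ is any H*-algebra structure in $\Cat{PInj}$. First, axiom (M) forces $\delta$ to be a \emph{total} injection, because in $\Cat{PInj}$ the equation $\delta^\dag \circ \delta = \id$ says precisely that $\delta$ is defined on every element of $X$. Write $\delta(a) = (f(a), g(a))$ for functions $f, g \colon X \to X$. Axiom (C) forces $f = g$, so $\delta(a) = (f(a), f(a))$, and injectivity of $\delta$ then makes $f$ injective. Finally, coassociativity (A) compares $(f(a), f(f(a)), f(f(a)))$ with $(f(f(a)), f(f(a)), f(a))$; equality in $X \times X \times X$ gives $f(a) = f(f(a))$ for all $a$, and by injectivity of $f$ this yields $f = \id$, so $\delta(a) = (a,a)$.

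The only mildly subtle point is the initial step in the uniqueness argument: recognizing that in $\Cat{PInj}$ the dagger-monic condition (M) is the same as totality of $\delta$ (as a partial injection), which is what allows us to speak of $f$ as a genuine function on all of $X$. Once that is in place, (C) and (A) alone pin down $\delta$, and axiom (H) plays no role in uniqueness — it is needed only for existence.
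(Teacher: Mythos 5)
Your proof is correct, and the uniqueness half takes a genuinely different route from the paper's. Both arguments start the same way: (M) forces $\delta$ to be total, (C) forces $\delta(a) = (f(a),f(a))$ for a single map $f$, and (A) yields $f \circ f = f$. At this point the paper invokes monoidal well-pointedness of $\Cat{PInj}$ and Lemma~\ref{lem:HAWPimplyF} to deduce (F), then unpacks (F) concretely as an equality of sets of tuples to extract $f \circ f = \id$, which combined with $f \circ f = f$ gives $f = \id$. You instead observe that $f$ inherits injectivity from $\delta$ (a partial injection whose value is a repeated pair), so $f(f(a)) = f(a)$ cancels directly to $f(a) = a$. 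Your route is more elementary --- it never leaves the axioms (A), (C), (M) and makes no appeal to (H) or (F) at all, which in particular makes explicit the stronger fact (consistent with the paper's subsequent remark that $\Cat{HStar}(\Cat{PInj})$, $\Cat{Frob}(\Cat{PInj})$ and $\Cat{PInj}$ coincide) that the comultiplication is already pinned down by (A), (C), (M) alone. What the paper's detour buys is a worked illustration of how (H) plus well-pointedness delivers (F) and of how (F) is unpacked relationally, which foreshadows the techniques of Section~\ref{frobrelsec}. Your explicit verification of existence, including the check of (H) with $a^* = a$, is also sound; the paper leaves that half implicit.
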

\begin{proof}
  Let $\delta = (\xymatrix@1{A & \;D\; 
  \ar@{ >->}|-{\delta_1}[l] \ar@{ >->}|-{\delta_2}[r] & A \times A})$
  be an object of $\Cat{HStar}(\Cat{PInj})$. Because of (M), we may
  assume that $\delta_1 = \id$. By (C), we find that $\delta_2$ is a
  tuple of some $d \colon A \to A$ with itself. It follows from (A)
  that  
  $d = d \after d$. Finally, since $\Cat{PInj}$ is monoidally well-pointed,
  $\delta$ satisfies (F) by Lemma~\ref{lem:HAWPimplyF}. Writing out
  what (F) means gives
  \[
       \{((d(b),b),(b,d(b))) \mid b \in A \} 
     = \{((c,d(c)),(d(c),c)) \mid c \in A \}.
  \]
  Hence for all $b \in A$, there is $c \in A$ with $b=d(c)$ and
  $d(b)=c$. Taking $b=d(a)$ we find that $c=a$, so that for all $a \in
  A$ we have $d \after d(a)=a$. Therefore $d = d \after d = \id$.
  We conclude that $\delta$ is the diagonal function $a \mapsto (a,a)$.
\end{proof}

As a corollary one finds that an object in $\Cat{PInj}$ with its
unique H*-algebra structure is unital if and only if it is a singleton
set, which is another good argument against demanding (U). 

If we drop the condition $f^\dag \after f = \id$ on morphisms in
Definition~\ref{def:morphisms}, the previous proposition can also be
read as saying that the categories $\Cat{HStar}(\Cat{PInj})$,
$\Cat{Frob}(\Cat{PInj})$, and $\Cat{PInj}$ are isomorphic. 

Since the Hilbert space $\ell^2(X)$ comes with a chosen basis induced
by $X$, the $\ell^2$ construction is in fact a functor $\ell^2 \colon
\Cat{HStar}(\Cat{PInj}) \to \Cat{HStar}(\Cat{Hilb})$. Conversely,
there is a functor $U$ in the other direction taking an \Hs-algebra to
the set of its copyables; this is functorial
by~\cite[Example~3]{ambrose:hstar}. These two functors are adjoints:  
\[\xymatrix@C+5ex{
    \Cat{HStar}(\Cat{PInj}) 
    \ar@{}|-{\perp}[r] \ar@<1ex>^-{\ell^2}[r]
  & \Cat{HStar}(\Cat{Hilb}). \ar@<1ex>^-{U}[l]
}\]
The Ambrose structure theorem, Theorem~\ref{thm:ambrose}, can now be
restated as saying that this adjunction is in fact an equivalence. 

Similarly, there is an adjunction between $\Cat{Frob}(\Cat{PInj})$ and
$\Cat{Frob}(\Cat{Hilb})$, but it is not yet clear if this is an
equivalence, too, \ie if $\Cat{Frob}(\Cat{Hilb})$ and
$\Cat{HStar}(\Cat{Hilb})$ are equivalent categories. In fact, this
question is the central issue of the rest of this paper, and will lead
to the main open question in Section~\ref{subsec:mainquestion} to
follow. In the meantime, we shall use the categorical formulation to give
different characterizations of when Frobenius algebras in
$\Cat{Hilb}$ admit the structure theorem.

\subsection{Further conditions}
\label{subsec:furtherconditions}

There are in fact a number of conditions on Frobenius algebras in
$\Hilb$ which are equivalent to admitting the structure theorem. This
section gives two more. 

\begin{theorem}
  A Frobenius algebra in $\Cat{Hilb}$ is an H*-algebra, and hence
  corresponds to an orthonormal basis, if and only if it is a
  directed colimit (in $\Cat{Frob}(\Cat{Hilb})$) of unital Frobenius
  algebras. 
\end{theorem}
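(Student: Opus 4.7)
The plan is to prove both directions by reducing, via Ambrose's structure theorem (Theorem~\ref{thm:ambrose}) and Proposition~\ref{prop:hstaralgebra}, to the concrete model in which an H*-algebra in $\Hilb$ takes the form $\bigoplus_I(\field{C}, \mu_{\field{C}})$.

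For the ``only if'' direction, starting from an H*-algebra $A \cong \bigoplus_I(\field{C}, \mu_{\field{C}})$, I would index by the directed poset $\mathcal{D}$ of finite subsets of $I$, ordered by inclusion. Each $A_F = \bigoplus_{i \in F}(\field{C}, \mu_{\field{C}})$ is a finite-dimensional subcoalgebra of $A$, hence a finite-dimensional and therefore unital Frobenius algebra with unit $\varepsilon_F^\dag = \sum_{i \in F} e_i$. The inclusions $A_F \hookrightarrow A_{F'}$ for $F \subseteq F'$ are dagger monic comonoid morphisms, giving a directed diagram in $\Cat{Frob}(\Hilb)$ whose colimit is $A$, since the algebraic union $\bigcup_F A_F$ is dense in $A$ and the extended comultiplication agrees with the original on each summand.

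For the ``if'' direction, let $A = \mathrm{colim}_i A_i$ in $\Cat{Frob}(\Hilb)$ with each $A_i$ unital and hence finite-dimensional by the lemma of Section~\ref{sec:background}. Since the cocone legs $\phi_i \colon A_i \to A$ are dagger monic, the image $\bigcup_i \phi_i(A_i)$ is dense in $A$. Each $A_i$ satisfies (H) by Lemma~\ref{lem:cpv}, with involution $a \mapsto a^{*_i} = (a^\dag \tensor \id) \after \delta_i \after \varepsilon_i^\dag$; choosing the orthonormal basis of copyables provided by Theorem~\ref{cpvth} shows that $*_i$ is simply complex conjugation of coordinates, and in particular an anti-linear isometry. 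The transition morphisms $\phi_{ij}$ are dagger monic Frobenius algebra maps, hence simultaneously coalgebra and algebra morphisms, and a short argument using (H) and the properness of $A_j$ gives the coherence $\phi_{ij}(a^{*_i}) = \phi_{ij}(a)^{*_j}$. These assemble into a well-defined anti-linear isometry on $\bigcup_i \phi_i(A_i)$ which extends by continuity to an operation $*$ on $A$. Axiom (H) is a continuous equation in $a$, so holding on the dense subspace it holds throughout $A$; combined with (A), (C), and (M) already holding for $A$ as a Frobenius algebra, Proposition~\ref{prop:hstaralgebra} yields that $A$ is an H*-algebra.

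The main technical obstacle is the ``if'' direction: ensuring that the involutions $*_i$ assemble coherently and extend continuously to the colimit. Both ingredients---the coherence $\phi_{ij}(a^{*_i}) = \phi_{ij}(a)^{*_j}$ and the uniform isometry $\|a^{*_i}\| = \|a\|$---are concrete consequences of the finite-dimensional copyable-basis description of $*_i$ rather than of the abstract colimit property, and pinning them down is the heart of the argument.
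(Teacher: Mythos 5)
Your ``only if'' direction is essentially the paper's proof: both pass to the orthonormal basis given by the structure theorem, take the finite-dimensional unital subalgebras indexed by finite subsets of the basis, and present $A$ as their directed colimit along inclusions. One caution: density of $\bigcup_F A_F$ gives joint epicness of the cocone, hence \emph{uniqueness} of mediating morphisms, but you still owe the existence half of the universal property; the paper constructs the mediating map explicitly on basis vectors and checks it is well defined.

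The ``if'' direction is where you genuinely diverge, and where there is a gap. The paper never touches the involutions of the $A_i$: it shows directly that the nonzero copyables of $A$ form an orthonormal basis, by combining joint epicness of the colimiting cocone (which is what gives density of $\bigcup_i\phi_i(A_i)$ --- this does \emph{not} follow, as you write, from the legs being dagger monic) with Theorem~\ref{cpvth} applied to each unital $A_i$ and Propositions~\ref{lem:independent} and~\ref{prop:orthogonal}. Your alternative --- gluing the involutions $*_i$ along the transition maps and extending by continuity --- can be made to work, but well-definedness and isometry of $*$ on the dense union do not by themselves yield axiom (H) for $A$. For $a=\phi_j(\alpha)$ one must verify $\inprod{ay}{z}=\inprod{y}{a^*z}$ for \emph{all} $y,z\in A$, including vectors lying in no $\phi_k(A_k)$. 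That requires (i) continuity of $\mu_A$ in $y$ and $z$, not only the continuity in $a$ you invoke, to reduce to $y=\phi_k(\eta)$, $z=\phi_k(\zeta)$ with $k\geq j$; and then (ii) the transport computation
\[
  \inprod{\mu_A\after(\phi_k\tensor\phi_k)(\alpha\tensor\eta)}{\phi_k(\zeta)}
  = \inprod{(\phi_k\tensor\phi_k)(\alpha\tensor\eta)}{(\phi_k\tensor\phi_k)\after\delta_k(\zeta)}
  = \inprod{\mu_k(\alpha\tensor\eta)}{\zeta},
\]
which uses that the cocone legs are isometric \emph{coalgebra} morphisms, so that (H) in $A_k$ descends to the required identity in $A$. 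You identify the coherence of the $*_i$ as the main technical obstacle, but that part is routine once the transition maps are seen to send copyable bases to copyable bases; the transport step above is the one that actually uses the colimit and Frobenius structure, and it is missing from your sketch.
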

\begin{proof}
  Given an orthonormal basis $\{\ket{i}\}_{i \in I}$ for $A$, define
  $\delta \colon A \to A \tensor A$ by (continuous linear extension of)
  $\delta \ket{i} = \ket{ii}$. For finite subsets $F$ of $I$, 
  define $\delta_F \colon \ell^2(F) \to \ell^2(F) \tensor \ell^2(F)$
  by $\delta_F\ket{i} = \ket{ii}$. These are well-defined objects
  of $\Cat{Frob}(\Cat{Hilb})$ by Theorem~\ref{cpvth}. Since $F$ is
  finite, every $\delta_F$ is a unital Frobenius algebra in
  $\Cat{Hilb}$. Together they form a (directed) diagram in
  $\Cat{Frob}(\Cat{Hilb})$ by inclusions $i_{F \subseteq F'} \colon
  \ell^2(F) \hookrightarrow \ell^2(F')$ if $F \subseteq F'$; the
  latter are well-defined morphisms since 
  $\delta_{F'} \after i_{F \subseteq F'} \ket{i} = \ket{ii} =
  (i_{F \subseteq F'} \tensor i_{F \subseteq F'}) \after \delta_F \ket{i}$.
  Finally, we verify that $\delta$ is the colimit of this diagram.
  The colimiting cocone is given by the inclusions $i_F \colon
  \ell^2(F) \hookrightarrow A$; these are morphisms $i_F \colon
  \delta_F \to \delta$ in $\Cat{Frob}(\Cat{Hilb})$ since $\delta \after i_F =
  (i_F \tensor i_F) \after \delta_F$, that are easily seen to form a
  cocone. Now, if $f_F \colon \delta_F \to (A',\delta')$ form another
  cocone, define $m \colon X \to X'$ by (continuous linear extension
  of) $m \ket{i} = f_{\ell^2(\{\ket{i}\})} \ket{i}$. Then $m \after
  i_F \ket{i} = f_{\ell^2(\{\ket{i}\})} \ket{i} = f_F \ket{i}$ for $i
  \in F$, so that indeed $m \after i_F = f_F$. Moreover, $m$ is the
  unique such morphism. Thus $\delta$ is indeed a colimit of the
  $\delta_F$. 

  Conversely, suppose $(A,\delta)$ is a colimit of some diagram
  $d \colon \cat{I} \to \Cat{Frob}(\Cat{Hilb})$. We will show that
  the nonzero copyables form an orthonormal basis for $A$. By
  Lemma~\ref{lem:independent} and Proposition~\ref{prop:orthogonal},
  it suffices to prove  
  that they span a dense subspace of $A$. Let $a \in A$ be given. Since
  the colimiting cocone morphisms $c_i \colon A_i \to A$ are jointly
  epic, the union of their images is dense in $A$, and therefore $a$
  can be written as a limit of $c_i(a_i)$ with $a_i \in A_i$ for some
  of the $i \in \cat{I}$.
  These $a_i$, in turn, can be written as linear combinations of
  elements of copyables of $A_i$ by Theorem~\ref{cpvth}. Now, $c_i$
  maps copyables into copyables, and so we have written $a$ as a limit
  of linear combinations of copyables of $A$. Hence the copyables of
  $A$ spans a dense subspace of $A$, and therefore form an orthonormal
  basis.  

  Finally, we verify that these two constructions are mutually
  inverse. Starting with a $\delta$, one obtains $E=\{e \mid \delta(e)
  = e \tensor e\}$, and then $\delta' \colon A \to A \tensor A$
  by (continuous linear extension of) $\delta'(e)=e \tensor e$ for
  $e \in E$. The definition of $E$ then gives $\delta' = \delta$.
  
  Conversely, starting with an orthonormal basis $\{\ket{i}\}_{i \in I}$,
  one obtains a map \mbox{$\delta \colon A \to A \tensor A$} by (continuous linear
  extension of) $\delta \ket{i} = \ket{ii}$, and then it follows that $E = \{a \in
  A \mid \delta(a) = a \tensor a\}$. It is trivial that $\{\ket{i} \mid i
  \in I\} \subseteq E$. Moreover, we know that $E$ is linearly
  independent by \ref{lem:independent}. Since it contains a
  basis, it must therefore be a basis itself. Hence indeed $E = \{ \ket{i}
  \mid i \in I\}$.
\end{proof}

For \emph{separable} Hilbert spaces, there is also a characterization
in terms of approximate units as follows.

\begin{theorem}
\label{thm:approximateunit}
  A Frobenius algebra on a separable Hilbert space in $\Cat{Hilb}$ is
  an \Hs-algebra, and hence corresponds to an orthonormal basis, if
  and only if there is a sequence $e_n$ such that $e_n a$ converges to
  $a$ for all $a$, and $(\id \tensor a^\dag) \after \delta(e_n)$ converges.
\end{theorem}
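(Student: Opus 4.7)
The plan is to prove both directions using Proposition~\ref{prop:hstaralgebra}, which reduces being an \Hs-algebra to verifying axiom (H) in addition to the Frobenius axioms (A), (M), (C) that are already in hand. For the ``only if'' direction, I will use that the Frobenius algebra corresponds to an orthonormal basis of copyables $\{\ket{i}\}_{i \in \field{N}}$, countable by separability, and set $e_n := \sum_{i=1}^n \ket{i}$. A direct calculation using that $\ket{i} \cdot \ket{j}$ equals $\ket{i}$ when $i = j$ and $0$ otherwise (which follows from $\delta(\ket{i}) = \ket{ii}$ by taking inner products against $\mu = \delta^\dag$) shows $e_n \cdot a \to a$ for every $a \in A$, and $(\id \tensor a^\dag) \after \delta(e_n) = \sum_{i=1}^n \overline{\inprod{i}{a}}\,\ket{i}$ converges to the vector with conjugate coefficients, which is precisely the $a^*$ of the \Hs-structure.

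For the ``if'' direction, I will define $a^* := \lim_n (\id \tensor a^\dag) \after \delta(e_n)$, which exists by hypothesis and equals $\lim_n (a^\dag \tensor \id) \after \delta(e_n)$ by axiom (C). The crux is the following computation, valid for every $b \in A$:
\begin{align*}
\mu(a^* \tensor b)
  & = \lim_n \mu\bigl((a^\dag \tensor \id) \after \delta(e_n) \tensor b\bigr) \\
  & = \lim_n (a^\dag \tensor \id) \after (\id \tensor \mu) \after (\delta \tensor \id)(e_n \tensor b) \\
  & = \lim_n (a^\dag \tensor \id) \after \delta(e_n \cdot b) \\
  & = (a^\dag \tensor \id) \after \delta(b).
\end{align*}
The first equality uses continuity of $\mu$; the third uses (F) together with (C) to rewrite $(\id \tensor \mu) \after (\delta \tensor \id) = \delta \after \mu$; and the fourth combines the approximate unit property $e_n \cdot b \to b$ with continuity of $(a^\dag \tensor \id) \after \delta$. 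The result is precisely axiom (H) evaluated at $b$, and since $\Hilb$ is monoidally well-pointed this gives (H). Proposition~\ref{prop:hstaralgebra} then identifies the structure as an \Hs-algebra, and Theorem~\ref{thm:ambrose} identifies it with an orthonormal basis.

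The main potential obstacle is bookkeeping rather than analysis: I need to confirm the correct form of the Frobenius law that interchanges $\delta$ and $\mu$ (the mirror of (F) used here follows from (F) and (C)), and to check that the two hypotheses feed in at the right places---the convergence of $(\id \tensor a^\dag) \after \delta(e_n)$ is needed to define $a^*$ in the first place, while the approximate unit property is used only in the final step. No delicate analysis is required beyond the continuity of $\mu$ and $\delta$, which is immediate for morphisms in $\Hilb$.
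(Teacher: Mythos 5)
Your proposal is correct and follows essentially the same route as the paper: for the forward direction you define $a^*$ as the limit of $(\id \tensor a^\dag)\after\delta(e_n)$ and verify (H) by continuity together with the (C)-mirrored form of (F), which is exactly the paper's diagrammatic limit argument written out algebraically; for the converse you take $e_n$ to be the sum of the first $n$ copyables supplied by the Ambrose structure theorem, as the paper does. Your version merely spells out the bookkeeping (countability from separability, the Bessel-type convergence of $\sum_{i\le n}\overline{\inprod{i}{a}}\,\ket{i}$) that the paper leaves implicit.
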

\begin{proof}
  Writing $a^*_{n} = (\id \tensor a^\dag) \after \delta(e_n)$, by
  assumption $a^* = \lim_{n \to \infty} a^*_n$ is well-defined. Since
  morphisms in $\Cat{Hilb}$ are continuous functions and composition
  preserves continuity, (H) holds by the following argument.
  \[
    
 \begin{tikzpicture}[]%
 \end{tikzpicture}%

    \; = \;
    \lim_{n \to \infty}\left( 
 \begin{tikzpicture}[]%
  \matrix[matrix of math nodes]
{          &            &       & |(B)|              \\
   |(A)[box]|{a^\dag} &&       & |(C)[dot]|         \\
   |(D)| &            & |(E)| &            & |(F)| \\
	 & |(G)[dot]|                              \\
	 & |(H)[box]|{e_{n}} &       &       & |(I)|      \\
};
\draw (G) to [out=180, in=270] (D) -- (A);
\draw (G) to [out=0, in=270] (E) to [out=90, in=180] (C);
\draw (C) to [out=0, in=90] (F) -- (I);
\draw (C) to (B);
\draw (G) to (H);

 \end{tikzpicture}%
 \right)
    \; \stackrel{\text{(F)}}{=} \;
    \lim_{n \to \infty}\left( 
 \begin{tikzpicture}[]%
  \matrix[matrix of math nodes]
{    |(A)[box]|{a^\dag} &            & |(B)|{\phantom{a^\dag}} \\
		     & |(C)[dot]|                          \\ 
		     & |(D)[dot]|                          \\         
   |(E)[box]|{e_{n}}   &            & |(F)|                  \\
};
\draw (A) to [out=270, in=180] (C);
\draw (C) to [out=0, in=270] (B.south) -- (B.north);
\draw (C) to (D);
\draw (E) to [out=90, in=180] (D);
\draw (D) to [out=0, in=90] (F.north) -- (F.south);
 \end{tikzpicture}%
 \right)
    \; = \;
    
 \begin{tikzpicture}[]%
 \end{tikzpicture}%

  \]
  Hence approximate units imply (H). Conversely, using
  the Ambrose structure theorem, Theorem~\ref{thm:ambrose}, we can
  always define $e_n$ to be the sum of the first $n$ copyables.
\end{proof}

Summarizing, we have the following result.

\begin{theorem}
\label{thm:whenFimpliesH}
  For a Frobenius algebra in $\Cat{Hilb}$, the following are equivalent:
  \begin{enumerate}
    \item[(a)] it is induced by an orthonormal basis;
    \item[(b)] it admits the structure theorem;
    \item[(c)] it is semisimple;
    \item[(d)] it satisfies axiom (H);
    \item[(e)] it is a directed colimit (with respect to isometric
      homomorphisms) of finite-dimensional unital Frobenius algebras.
  \end{enumerate}
  Moreover, if the Hilbert space is separable, these are equivalent to:
  \begin{enumerate}
    \item[(f)] it has a suitable form of approximate identity.
  \qed
  \end{enumerate}
\end{theorem}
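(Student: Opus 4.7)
The plan is to observe that every implication in this chain has already been established elsewhere in the section, so the theorem is fundamentally a bookkeeping statement that collects those results. Accordingly, I would write the proof as a short sequence of citations rather than introduce any new arguments, perhaps arranging the equivalences as the cycle $\text{(a)} \Leftrightarrow \text{(b)} \Leftrightarrow \text{(c)} \Leftrightarrow \text{(d)}$ together with $\text{(a)} \Leftrightarrow \text{(e)}$ and, under separability, $\text{(a)} \Leftrightarrow \text{(f)}$.

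First I would dispatch (a) $\Leftrightarrow$ (b) by unpacking the definition of \emph{admits the structure theorem} given in the discussion preceding Theorem~\ref{thm:ambrose}: a coalgebra isomorphism $A \cong \bigoplus_I (\field{C}, \delta_{\field{C}})$ is exactly the data of an orthonormal family of copyables spanning $A$, since each one-dimensional summand corresponds under the isomorphism to a single copyable vector, and the Hilbert direct-sum inner product forces the resulting family to be orthonormal. Next I would invoke the semisimplicity theorem proved (via injectivity of the Gelfand representation) immediately before Proposition~\ref{prop:semisimple} to obtain (b) $\Leftrightarrow$ (c), and Proposition~\ref{prop:semisimple} itself for (c) $\Leftrightarrow$ (d). For (a) $\Leftrightarrow$ (e) I would cite the directed-colimit theorem proved earlier in this subsection: its forward direction constructs the diagram $\delta_F$ indexed by finite subsets $F$ of the basis and identifies its colimit in $\Cat{Frob}(\Cat{Hilb})$ with $\delta$, while its converse extracts an orthonormal basis of copyables from any such directed colimit using Proposition~\ref{lem:independent} and Proposition~\ref{prop:orthogonal}. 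Finally, under the separability hypothesis, (a) $\Leftrightarrow$ (f) is exactly Theorem~\ref{thm:approximateunit}.

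I do not anticipate any real obstacle, since all the substantive content is already in place; the hardest step, the Gelfand-based implication (d) $\Rightarrow$ (b), was settled via Proposition~\ref{prop:semisimple} and Ambrose's structure theorem. The only minor point worth checking is that the phrase \emph{isometric homomorphisms} in clause~(e) agrees with the morphisms defining $\Cat{Frob}(\Cat{Hilb})$ in Definition~\ref{def:morphisms}, so that the colimit cited really lives in the category the statement names; this is immediate from the definition, since those morphisms are by construction coalgebra homomorphisms satisfying $f^\dag \after f = \id$, i.e.\ isometries.
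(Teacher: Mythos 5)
Your proposal is correct and matches the paper's intent exactly: the theorem is stated with a \qed precisely because it is a summary collecting the structure theorem for semisimple Frobenius algebras, Proposition~\ref{prop:semisimple}, the directed-colimit theorem, and Theorem~\ref{thm:approximateunit}, which together give all the stated equivalences. Your bookkeeping of which earlier result supplies which implication, and your check that the morphisms in clause~(e) are those of Definition~\ref{def:morphisms}, are accurate.
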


We see that the finite-dimensional result follows immediately from our
general result and Lemma~\ref{lem:cpv}, which shows that the algebra
is C* and hence semisimple. In fact, the influential
thesis~\cite{abrams:thesis} (see also~\cite{abrams:euler,kock:frobenius}) already
observes explicitly (and in much wider generality) that:  
\begin{itemize}
  \item If (M) holds, a unital Frobenius algebra is
    semisimple~\cite[Theorem~2.3.3]{abrams:thesis}.  
  \item A commutative semisimple unital Frobenius algebra  is a direct
    sum of fields~\cite[Theorem~2.2.5]{abrams:thesis}. 
\end{itemize}
Thus the only additional ingredient required to obtain
Theorem~\ref{cpvth} is the elementary Proposition~\ref{prop:orthogonal}. 

\subsection{The main question}
\label{subsec:mainquestion}

The main remaining question in our quest for a suitable notion of
algebra to characterize orthonormal bases in arbitrary dimension is the following.
\begin{center}
  \fbox{In the presence of (A), (C), and (M), does (F) imply (H)?}
\end{center}
We can ask this question for the central case of $\Hilb$, and for monoidal dagger categories in general.

If the answer is positive, then nonunital Frobenius algebras give us the
right notion of observable to use in categorical quantum mechanics. 
If it is negative, we may consider adopting (H) as the right
axiom instead of (F).

At present, these questions remain open, both for $\Hilb$ and for the
general case. However, we have been able to achieve positive results for a large
family of categories;  these will be described in the following section. 
We shall conclude this section by further narrowing down the question in the
category $\Cat{Hilb}$. 
 
Recall that the \emph{Jacobson radical} of a commutative ring is the
intersection of all its maximal regular ideals, and that a ring is
called \emph{radical} when it equals its Jacobson radical. 

\begin{proposition}
  Frobenius algebras $A$ in $\Cat{Hilb}$ decompose as a direct sum 
  \[
    A \cong S \oplus R
  \]
  of (co)algebras, where $S$ is an \Hs-algebra and $R$ is a radical
  algebra. 
\end{proposition}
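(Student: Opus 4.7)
The plan is to define $S$ as the closed linear span of the copyables of $A$, and $R := S^\perp$ its orthogonal complement in the Hilbert space $A$. I would then show $A \cong S \oplus R$ as both algebras and coalgebras, with $S$ an \Hs-algebra and $R$ a radical algebra. The first step is an algebraic identification: since $(A,\mu)$ is a commutative Banach algebra (by the earlier lemma showing that monoids in $\Cat{Hilb}$ satisfying (M) are Banach algebras, together with axiom (C)), its characters correspond bijectively to nonzero copyables via $a \mapsto \inprod{\cdot}{a}$ (multiplicativity uses $\delta(a) = a \tensor a$, and boundedness of characters plus Riesz representation gives the reverse direction). Hence $R$ is precisely the intersection of kernels of characters, i.e., the Jacobson radical of $A$; in particular, $R$ is a closed two-sided ideal.

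The main body of the argument would show that both $S$ and $R$ are sub-coalgebras. That $\delta(S) \subseteq S \tensor S$ follows immediately from $\delta(a) = a \tensor a$ for copyables. For the multiplication, axiom (M) forces $\mu(a \tensor a) = \mu(\delta(a)) = a$ for each copyable $a$; and for distinct copyables $a \neq b$, the product $\mu(a \tensor b)$ lies in $R$ (since its pairing with any copyable $c$ yields $\inprod{a}{c}\inprod{b}{c} = 0$ by Proposition~\ref{prop:orthogonal}), and the two forms of (F) give $\delta(\mu(a \tensor b)) = a \tensor \mu(a \tensor b) = \mu(a \tensor b) \tensor b$, which via a short inner-product argument (using $\mu(a \tensor b) \in R$) forces $\mu(a \tensor b) = 0$. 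Thus $\mu(S \tensor S) \subseteq S$, so $S$ is a subalgebra. Dualizing yields $\delta(R) \subseteq (S \tensor S)^\perp = (A \tensor R) + (R \tensor A)$, and axiom (M) then gives $R = \mu(A \tensor R)$. Both forms of (F) together further imply $\delta(\mu(a \tensor r)) \in (A \tensor R) \cap (R \tensor A) = R \tensor R$ for any $a \in A$, $r \in R$, so $\delta(R) \subseteq R \tensor R$.

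The most delicate step will be the inner-product argument forcing $\mu(a \tensor b) = 0$ for distinct copyables; this is the engine driving both subcoalgebra properties. Once both are established, the cross-terms $\mu(s \tensor r)$ for $s \in S$, $r \in R$ automatically vanish by testing $\inprod{\mu(s \tensor r)}{x} = \inprod{s \tensor r}{\delta(x)}$ against $x \in S$ and $x \in R$ separately and using orthogonality of the four summands of $A \tensor A$; this yields $A \cong S \oplus R$ as (co)algebras. Finally, $S$ has trivial Jacobson radical (any such would lie in $S \cap R = 0$) and inherits (A), (C), (M), (F), so by Proposition~\ref{prop:semisimple} it is an \Hs-algebra; and $R$, being the Jacobson radical of $A$, is a radical algebra by definition.
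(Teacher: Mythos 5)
Your proposal is correct, and it arrives at exactly the same decomposition as the paper: $S$ is the closed span of the copyables and $R=S^\perp$, with $S$ semisimple (hence an \Hs-algebra by Proposition~\ref{prop:semisimple}) and $R$ radical because it contains no copyables and hence no characters. The difference lies in the middle of the argument: the paper treats the inclusion of the closed span of copyables as a dagger kernel that is both an algebra and a coalgebra homomorphism, and then invokes two general results from the cited work on copyables --- that the orthogonal complement of such a kernel is again an algebra and coalgebra homomorphism, and that Frobenius structure restricts along such embeddings --- whereas you verify directly, by inner-product computations, that $S$ and $R$ are closed under $\mu$ and $\delta$ and that the cross terms vanish. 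Your verification is sound: the identity $\mu(a\tensor b)\tensor b = a\tensor\mu(a\tensor b)$ obtained from (F) and (F$'$) does force $\mu(a\tensor b)=0$ for distinct copyables (indeed, pairing it against $a$ in the first factor shows $\mu(a\tensor b)\in\field{C}a\cap\field{C}b=0$ directly, which shortcuts the ``delicate step'' you flag); the containment $\mu(A\tensor R)\subseteq R$ needed to make $R=\mu(A\tensor R)$ an equality follows by pairing against copyables; and the claim $\delta(\mu(a\tensor r))\in(A\tensor R)\cap(R\tensor A)$ is obtained by applying (F$'$) to $a\tensor r$ and (F) to $r\tensor a$ and using commutativity. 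One should be slightly careful to pass to closed spans throughout, but continuity of $\mu$ and $\delta$ handles this. What the paper's route buys is brevity and generality, since the restriction lemmas apply to any dagger kernel that is a (co)algebra morphism; what your route buys is a self-contained proof that makes visible exactly which axioms --- (M), (F), (F$'$), and orthogonality of copyables --- are doing the work at each step.
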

\begin{proof}
  Let $a$ be a copyable element of a Frobenius algebra $A$ in
  $\Cat{Hilb}$. Consider the embedding into $A$ of $a$ as a
  one-dimensional algebra. This embedding is a kernel, since it is
  isometric and its domain is finite-dimensional. Observe that this
  embedding is an algebra homomorphism as well as a coalgebra
  homomorphism, because copyables are idempotents by (M). Now it
  follows  from~\cite[Lemma~19]{heunen:copyables} that also the
  orthogonal complement of the embedding is both an algebra
  homomorphism and a coalgebra homomorphism. Finally, Frobenius
  algebra structure restricts along such embeddings
  by~\cite[Proposition~9]{heunen:copyables}.

  We can apply this to the embedding of the closed span of all
  copyables of $A$, and conclude that $A$ decomposes (as a
  (co)algebra) into a direct sum of its copyables and the orthogonal
  subspace. By definition, the former summand is semisimple, and is
  hence a \Hs-algebra by Proposition~\ref{prop:semisimple}. The
  latter summand by construction has no copyables and hence no
  characters, and is therefore radical.  
\end{proof}

This shows how the Jacobson radical of a Frobenius algebra sits
inside it in a very simple way. Indeed, we are left with not just a
nonsemisimple algebra, but a radical one, which is the opposite of a
semisimple algebra---an algebra is semisimple precisely when its
Jacobson radical is zero. Therefore, in the category $\Cat{Hilb}$,
our main remaining question above reduces to finding out whether $R$
must be zero, as follows.
\begin{center}
  \fbox{Does there exist a nontrivial radical Frobenius algebra?}
\end{center}
Although there is an extensive literature about commutative radical
Banach algebras, including a complete classification that in fact ties
in with approximate units~\cite{esterle:radical}, this question seems
to be rather difficult.

\section{\Hs-algebras in categories of relations and positive
  matrices} 
\label{sec:hstarinrel}

We have been able to give a complete analysis of nonunital Frobenius
algebras in several
(related) cases, including:  
\begin{itemize}
  \item categories of relations, and locally bifinite relations,
    valued in cancellative quantales;
  \item nonnegative matrices with $\ell^2$-summable rows and columns. 
\end{itemize}
The common feature of these cases can be characterized as  \emph{the absence of destructive interference}.

The main result we obtain is as follows.

\begin{theorem}
  Nonunital Frobenius algebras in all these categories decompose as
  direct sums of abelian groups, and satisfy (H). 
\end{theorem}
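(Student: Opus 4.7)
The plan is as follows. Since the listed categories share the feature that their hom-sets lack ``destructive interference''---either because the coefficient semiring is a cancellative quantale or because entries are nonnegative reals---a Frobenius structure $\delta \colon A \to A \tensor A$ is completely determined by its support: a ternary relation $R \subseteq A \times A \times A$ recording which $(b,c)$-coefficient of $\delta(a)$ is nonzero, together with a positive weight function on $R$. Each Frobenius axiom, applied to basis elements and read off via the no-cancellation property, translates faithfully into an elementary condition on $R$ together with an equation on the weights, rather than into a mere sum identity that might hide cancellations.

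Next I would show that the conditions thus extracted on $R$ are exactly those of a disjoint union of abelian groups. Axiom (M) forces that the fibre $\{(b,c) : R(a,b,c)\}$ behaves like a graph of a partial function, so that $\mu = \delta^\dag$ defines a partial binary operation; axiom (A) gives associativity of this operation; axiom (F) supplies cancellation and two-sided inverses (this is the content of the ``F-equals-groupoid'' reading of the Frobenius law in $\Rel$); and axiom (C) symmetrises everything so that each connected component is in fact a single abelian group. Thus $A$ carries the structure of an abelian groupoid with one object per connected component, i.e., a disjoint union of abelian groups. The weights are pinned down by (M): in the quantale-valued relational cases they must all equal the top element, and for nonnegative $\ell^2$-matrices they are uniform over each component, of value $1/\sqrt{|G|}$.

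With the decomposition in hand, axiom (H) becomes straightforward. For a basis element $a$ lying in a component $G$, set $a^* = a^{-1}$ (extended to general $a$ componentwise, with conjugated scalars in the matrix case). The left-hand side $\mu \after (a^* \tensor \id)$ sends a basis element $b$ in the same component to $a^{-1}b$ with the appropriate uniform weight, while the right-hand side $(a^\dag \tensor \id) \after \mu^\dag$ applied to $b$ traces back through $\mu^\dag(b) = \sum_{xy=b}(x,y)$ and then selects $x = a$, yielding again $y = a^{-1}b$ with the matching weight. The two sides therefore agree on basis elements, and monoidal well-pointedness of the categories in question extends this to all morphisms.

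The main obstacle will be the positive $\ell^2$-matrix case. There, axiom (M) combined with $\ell^2$-summability immediately forces each connected component to be \emph{finite} (uniform weights on an infinite group give a divergent $\ell^2$-sum), but to derive uniformity of the weights in the first place one must use axiom (F) to propagate equality of nonzero matrix entries along the groupoid multiplication. Absence of destructive interference is exactly what makes this propagation tractable: any equation between sums of positive numbers forces termwise equations, so the algebraic constraints from (F) can be read off entry by entry. Ruling out exotic infinite-component structures with non-uniform weights is the technical heart of the argument; once it is done, the conclusion is uniform across all the listed categories.
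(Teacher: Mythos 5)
Your approach is essentially the paper's: reduce the coefficient structure to the Boolean case (the paper does this via the unique $0$\-reflecting homomorphism to the two-element quantale $\cat{2}$, which is terminal among nontrivial cancellative quantales), show the underlying relation is a disjoint union of abelian groups (the paper does this by proving $x\sim y \Leftrightarrow xy\text{ defined}$ is an equivalence relation and then invoking Huntington's criterion $aS=S=Sa$; your ``F-equals-groupoid'' gloss is a sketch of the same content), pin down the weights by diagram chases, and define $a^*=a^{-1}$ componentwise. Your identification of the positive $\ell^2$ case as the place where uniformity of weights plus (M)-normalization forces finiteness of each component (value $1/\sqrt{d}$) also matches the paper.

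One concrete correction: in the quantale-valued case the weights are \emph{not} forced to be the top element of the lattice. What the diagram chases give is $M(a,b,ab)^2 = 1$, where $1$ is the \emph{multiplicative unit} of the quantale (which need not be the top). To conclude that the matrix is $\cat{2}$-valued, and hence that $a^*=a^{-1}$ witnesses (H), one needs the additional hypothesis that $1$ is the only square root of unity in $Q$; otherwise the entries may be nontrivial square roots of unity and your argument for (H) does not go through as stated. The paper's precise theorem for $\Rel(Q)$ carries exactly this hypothesis, so your sketch as written proves slightly more than is true.
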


The remainder of this section is devoted to the proof of this
theorem. Our plan is as follows. First, we shall prove the result
for $\Rel$, the category of sets and relations. 
In this case, our main question is already answered directly by
Proposition~\ref{prop:cptimpliesunital} and
Theorem~\ref{thm:whenFimpliesH}. Moreover, the result in
this case has appeared in \cite{pavlovic:frobeniusinrel}. However, our
proof is quite different, and  in particular makes no use of
units. This means that it can be carried over to the other situations
mentioned above. 

\subsection{Frobenius algebras in $\Rel$ and $\LBFRel$}
\label{frobrelsec}

We assume given a set $A$, and a Frobenius algebra structure on it
given by a relation $\Delta \subseteq A \times (A \times A)$. We shall
write $\nabla$ for $\Delta^\dag$. 

\begin{definition}
  Define $x \sim y$ if and only if $(x,y) \nabla z$ for some z. 
  By (M), the relation $\nabla$ is single-valued and
  surjective. Therefore, we may also use multiplicative notation $xy$
  (suppressing the $\nabla$), and write  $x \sim y$ to mean that $xy$ is
  defined.  
\end{definition}

\begin{lemma}
\label{lem:reflexivity}
  The relation $\sim$ is reflexive.
\end{lemma}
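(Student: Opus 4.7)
My strategy is to pick a factorization $x = ab$ (using (M)), extract mediating elements via (F), and chain applications of (A) and (C) to force $xx$ into the defined part of $\nabla$.

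First, by (M), $\nabla$ is a surjective partial function, so there exist $a, b \in A$ with $ab = x$; by (C), also $ba = x$. The relational form of axiom (F) at the pair $(a,b)$ reads: for every $(d,e)$ with $de = ab$, there exists $y$ with $ay = d$ and $ye = b$. Applying this at $(d,e) = (b,a)$ yields an element $y$ satisfying $ay = ya = b$. A symmetric application of (F) at $(b,a)$ with $(d,e) = (a,b)$ produces an element $\tilde y$ with $b\tilde y = \tilde y b = a$.

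Coassociativity (A), in relational form, is the equivalence: for any triple $(r,p,q)$, ``$pq$ is defined and $r(pq) = c$'' iff ``$rp$ is defined and $(rp)q = c$.'' Applied to $(a,a,y)$: since $ay = b$ is defined and $a(ay) = ab = x$ is defined, (A) forces $aa$ to be defined (with $(aa)y = x$). The analogous application to $(b,b,\tilde y)$ forces $bb$ defined.

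To finally derive $xx$ defined, a further application of (F) at $(aa, y)$ with $(d,e) = (a,b)$ produces $p$ with $(aa)p = a$ and $pb = y$; successive applications of (A) on $(a,a,p)$ and $(p,a,b)$ then deliver $ap$ defined (a local unit $z_a$ for both $a$ and $b$) together with $px = b$. The symmetric construction on the $b$-side yields $p'$ with $p'x = a$, $p'b = z_b$, and $z_b x = x$. Then (A) on $(p', b, x)$, using $p'b = z_b$ defined and $(p'b)x = z_b x = x$ defined, forces $bx$ (hence $xb$ by (C)) defined; and a last application of (A) on $(x,x,p)$, using $xp = px = b$ defined and $x(xp) = xb$ defined, finally yields $xx$ defined.

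The main obstacle is this final closure: axiom (A) supplies only an \emph{equivalence} between the definedness of two bracketings, not a one-sided implication. Each step in the chain therefore requires at least one of the bracketings to be \emph{independently} known to be defined, and it is precisely the (F)-generated elements $y$, $\tilde y$, $p$, $p'$ that provide the necessary independent witnesses by exhibiting controlled products with $a$, $b$, and $x$.
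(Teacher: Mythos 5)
Your proof is correct, but it takes a genuinely different route from the paper's. Both arguments begin identically, factoring $x=ab$ via the surjectivity of $\nabla$ given by (M). The paper then reads (F) from \emph{right to left}: it exhibits $(b,x)$ in the domain of $(\nabla\tensor\id)\after(\id\tensor\Delta)$ directly from the factorization and (C), concludes that $(b,x)$ lies in the domain of $\Delta\after\nabla$, hence that $bx$ is defined, and then repeats the trick once more to get $xx$ defined --- two applications of (F) and no use of (A) beyond what is implicit in rebracketing the composite. You instead read (F) from \emph{left to right}, extracting mediating elements $y$, $\tilde y$, $p$, $p'$, and then propagate definedness through a chain of applications of (A). Your route is considerably longer, but it has the virtue of explicitly constructing local units $z_a$, $z_b$, which anticipates the group structure established later in Theorem~\ref{lem:group}; the paper's route buys brevity and transfers more directly to the quantale-valued and $\ell^2$-matrix settings where one tracks weights along the same diagrams. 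One step of yours is compressed to the point of being incomplete: to apply (A) to $(p,a,b)$ and conclude $px=b$, you need one of the two bracketings to be independently defined, and neither $p(ab)$ nor $(pa)b=z_ab$ is known at that point from the application to $(a,a,p)$ alone. You must first apply (A) to $(a,p,b)$ --- using that $pb=y$ is defined and $a(pb)=ay=b$ --- to obtain $(ap)b=z_ab=b$, and only then does the application to $(p,a,b)$ go through. This is exactly the pitfall your final paragraph warns about, so the fix is a one-line insertion (and its mirror image on the $b$-side), after which the argument is complete.
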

\begin{proof}
  Let $a \in A$. By (M), we have $a=a_1a_2$ for some
  $a_1,a_2 \in A$. 
  Then $(a_2,a) (\id \tensor \Delta) (a_2,a_1,a_2)$ and $(a_2,a_1,a_2)
  (\nabla \tensor \id) (a,a_2)$ by (C), so by
  (F) we have $(a_2,a) \Delta\after \nabla (a,a_2)$,
  so that $aa_2$ is defined.  
  Diagrammatically, we annotate the lines with elements to show they
  are related by that morphism.
  \[
    
 \begin{tikzpicture}[font=\small]%
  \matrix[matrix of math nodes]
{    |(A)[label=above:a]|  &            & |(B)[label=above:a_2]| \\
			 & |(C)[dot]|                     \\ \\
			 & |(D)[dot]|                         \\
   |(E)[label=below:a_2]| &            & |(F)[label=below:a]|  \\
};
\begin{scope}[auto]
  \draw (A) to [out=90, in=180, bend right] (C);
  \draw (C) to [out=0, in=90, bend right] (B);
  \draw (C) to node{$aa_2$} (D);
  \draw (E) to [bend left] (D);
  \draw (D) to [bend left] (F);
\end{scope}%
 \end{tikzpicture}%

    \; = \;
    
 \begin{tikzpicture}[font=\small]%
  \matrix[matrix of math nodes]
{          & |(A)[label=above:a]| &&& |(B)[label=above:a_2]| \\
	 & |(C)[dot]| &       &                    \\
   |(D)| &            & |(E)| &            & |(F)| \\
	 &            &       & |(G)[dot]|         \\
   |(H)[label=below:a_2]| &&& |(I)[label=below:a]| \\
};
\begin{scope}[auto]
  \draw (C) to [bend right] (D) -- (H);
  \draw (C) to (A);
  \draw (C) to [bend left] node{$a_1$} (E) to [bend right] (G);
  \draw (G) to (I);
  \draw (G) to [bend right] (F) -- (B);
\end{scope}
 \end{tikzpicture}%

  \]
  Also $(a,a) (\Delta \tensor \id) (a_1,a_2,a)$ and $(a_1,a_2,a) (\id
  \tensor (\Delta \after \nabla)) (a_1,a,a_2)$, so by
  (F) we have $(a,a) (\id \tensor \Delta) \after
  \Delta \after \nabla (a_1,a,a_2)$, so that $a^2$ is defined. 
  \[
    
 \begin{tikzpicture}[font=\small]%
  \matrix[matrix of math nodes]
{ |(A)[label=above:a_1]| & & |(B)[label=above:a]| & &
  |(C)[label=above:a_2]| \\
  & & & |(D)[dot]| \\
  & & & |(E)[dot]| \\
  |(F)| & & |(G)| & & |(H)| \\
  & |(I)[dot]| \\
  & |(J)[label=below:a]| & & & |(K)[label=below:a]| \\
};
\begin{scope}[auto]
  \draw (I) to [bend left] (F) -- (A);
  \draw (I) to (J);
  \draw (E) to [bend left] (H) -- (K);
  \draw (B) to [bend right] (D);
  \draw (C) to [bend left] (D);
  \draw (D) to (E);
  \draw (E) to [bend right] node[swap]{$a_2$} (G) to [bend left] (I);
\end{scope}
 \end{tikzpicture}%

    \; = \;
    
 \begin{tikzpicture}[font=\small]%
  \matrix[matrix of math nodes]
{  |(A)[label=above:a_1]| & |(B)[label=above:a]| & &
   |(C)[label=above:a_2]| \\
   |(D)| & & |(E)[dot]| \\
   & |(F)[dot]| \\ \\
   & |(G)[dot]| \\
   |(H)[label=below:a]| & & |(I)[label=below:a]| \\
};
\begin{scope}[auto]
  \draw (H) to [bend left] (G);
  \draw (I) to [bend right] (G);
  \draw (G) to node[swap]{$a^2$} (F);
  \draw (F) to [bend left] (D) -- (A);
  \draw (F) to [bend right] (E);
  \draw (E) to [bend left] (B);
  \draw (E) to [bend right](C);
\end{scope}
 \end{tikzpicture}%

  \]
  That is, $a \sim a$.
\end{proof}

\begin{lemma}
\label{lem:transitivity}
  The relation $\sim$ is transitive.
\end{lemma}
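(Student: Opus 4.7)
The plan is to exploit (F) and its symmetric counterpart (F') to produce two ``quotient'' elements and thereby witness $xz$ as defined in a single final step. Write $p = xy$ and $q = yz$, which exist by hypothesis.

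First, I will extract $c, c' \in A$ with $xc = y$ and $zc' = y$ as follows. Since $yx = p$ by (C), the pair $(y, x)$ belongs to $\Delta(p)$. Reading (F) as the identity $\Delta(p) = \{(xc, d) \mid cd = y, \; xc \text{ defined}\}$, the pair $(y, x)$ must arise as $(xc, d)$ for some $(c, d)$ with $cd = y$ and $xc$ defined; the first coordinate yields $xc = y$. The same manoeuvre applied to $(z, y)$, using $(y, z) \in \Delta(q)$ afforded by (C), produces $c'$ with $zc' = y$.

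Next, the shared value $xc = y = zc'$ places $(z, c')$ inside $\Delta(xc)$. Reading (F') as $\Delta(xc) = \{(a, bc) \mid ab = x, \; bc \text{ defined}\}$, the pair $(z, c')$ arises as $(a, bc)$ for some decomposition $(a, b)$ of $x$ with $bc$ defined; comparing first coordinates yields a $d \in A$ (namely this $b$) with $zd = x$, and by (C) also $dz = x$. Finally, one further application of (F'), now to $(x, z)$, says that $xz$ is defined iff there exists $(a, b)$ with $ab = x$ and $bz$ defined; the pair $(z, d)$ just constructed supplies exactly such a witness, so $xz$ is defined and $x \sim z$. The main obstacle will be bookkeeping across these three Frobenius applications; rendering them in the graphical calculus, as in the reflexivity proof, should keep the matching of decompositions transparent, and no single step is deeper than those already used there.
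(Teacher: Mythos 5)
Your proof is correct. Each of the four Frobenius moves is a legitimate instantiation of (F) or (F') at a product already known to be defined, and the final witness $(z,d)$ with $zd=x$ and $dz=x$ defined (by (C)) does place $(z,x)$ in the image of $(\id\tensor\nabla)\after(\Delta\tensor\id)$ at the input $(x,z)$, which forces $xz$ to be defined. The route is genuinely different from the paper's, however. The paper's proof first invokes Lemma~\ref{lem:reflexivity} to see that $(xy)(xy)$ is defined, rebrackets via (A) to make $x(y(xy))$ defined, and applies (F) once to extract a single quotient $\bar{y}$ with $x=(xy)\bar{y}=(x\bar{y})y$; then $x\bar{y}$ is defined, and one further application of (F) exhibits $xz=(x\bar{y})(yz)$ as defined. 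You instead harvest two quotients $c,c'$ of $y$ (by $x$ and by $z$) directly from the hypothesised products $xy$ and $yz$, and exploit the coincidence $xc=y=zc'$ to divide once more and obtain $zd=x$, which immediately witnesses $xz$ via one last Frobenius move. Your version buys independence from Lemma~\ref{lem:reflexivity} and from axiom (A), using only (C), (F) and its mirror (F') (which the paper derives from (C) and (F)); the paper's version is shorter because it reuses reflexivity, which it has just established. Both arguments avoid units entirely, so both carry over unchanged to $\LBFRel$, quantale-valued relations, and positive $\ell^2$-matrices, as the section requires.
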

\begin{proof}
  Suppose that $a \sim b$ and $b \sim c$. Then $d=ab$ is defined.
  By Lemma~\ref{lem:reflexivity}, then $(ab)d=d^2$ is defined.
  Hence by (A), also $a(bd)$ is defined.
  Applying (F) now yields $\bar{b}$ such that $a=d\bar{b}$.
  \[  
    
 \begin{tikzpicture}[font=\small]%
  \matrix[matrix of math nodes]
{          & |(A)[label=above:d]| &&& |(B)[label=above:d]| \\
    & |(C)[dot]| &       &                    \\
     |(D)| &            & |(E)| &            & |(F)| \\
    &            &       & |(G)[dot]|         \\
     |(H)[label=below:a]| &&& |(I)[label=below:bd]| \\
};
\begin{scope}[auto]
  \draw (C) to [bend right] (D) -- (H);
  \draw (C) to (A);
  \draw (C) to [bend left] node{$b$} (E) to [bend right] (G);
  \draw (G) to (I);
  \draw (G) to [bend right] (F) -- (B);
\end{scope}
 \end{tikzpicture}%

    \; = \;
    
 \begin{tikzpicture}[font=\small]%
  \matrix[matrix of math nodes]
{    |(A)[label=above:d]| &&& |(B)[label=above:d]| \\
    &            &       & |(C)[dot]|         \\
     |(D)| &            & |(E)| &            & |(F)| \\
    & |(G)[dot]|                              \\
    & |(H)[label=below:a]| &&& |(I)[label=below:bd]| \\
};
\begin{scope}[auto]
  \draw (G) to [bend left] (D) -- (A);
  \draw (A) -- (D);
  \draw (G) to [bend right] node[swap]{$\bar{b}$} (E) to [bend left] (C);
  \draw (C) to [bend left] (F) -- (I);
  \draw (C) to (B);
  \draw (G) to (H);
\end{scope}
 \end{tikzpicture}%

  \]
  It now follows from (C) that
  $a=d\bar{b}=a\bar{b}b$; in particular $a\bar{b}$ is defined. But
  then also $ac=(a\bar{b}b)c=(a\bar{b})(bc)$ is seen to be defined by the
  assumption $b \sim c$ and another application of (F). 
  \[
    
 \begin{tikzpicture}[font=\small]%
  \matrix[matrix of math nodes]
{    |(A)[label=above:a\bar{b}]| && |(B)[label=above:bc]| \\
    & |(C)[dot]|         \\ \\
    & |(D)[dot]|         \\         
     |(E)[label=below:a]| && |(F)[label=below:c]| \\
};
\begin{scope}[auto]
  \draw (A) to [bend right] (C);
  \draw (C) to [bend right] (B);
  \draw (C) to node{$ac$} (D);
  \draw (E) to [bend left] (D);
  \draw (D) to [bend left] (F);
\end{scope}%
 \end{tikzpicture}%

    \; = \;
    
 \begin{tikzpicture}[font=\small]%
  \matrix[matrix of math nodes]
{    |(A)[label=above:a\bar{b}]| &&& |(B)[label=above:bc]| \\
    &            &       & |(C)[dot]|         \\
     |(D)| &            & |(E)| &            & |(F)| \\
    & |(G)[dot]|                              \\
    & |(H)[label=below:a]| &&& |(I)[label=below:c]| \\
};
\begin{scope}[auto]
  \draw (G) to [bend left] (D) -- (A);
  \draw (A) -- (D);
  \draw (G) to [bend right] node[swap]{$b$} (E) to [bend left] (C);
  \draw (C) to [bend left] (F) -- (I);
  \draw (C) to (B);
  \draw (G) to (H);
\end{scope}
 \end{tikzpicture}%

  \]
  Hence $a \sim c$.
\end{proof}

\begin{proposition}
\label{prop:disjointunion}
  The Frobenius algebra $A$ is a disjoint union of totally defined
  commutative semigroups, each satisfying (F). 
\end{proposition}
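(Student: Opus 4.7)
The plan is to exhibit the partition of $A$ induced by $\sim$ as the desired decomposition, and to verify that $\Delta$ restricts to each equivalence class as a Frobenius algebra on a totally defined commutative semigroup. The first observation I would make is that $\sim$ is symmetric: by (C), $(x,y)\nabla z$ iff $(y,x)\nabla z$, so $xy$ is defined iff $yx$ is. Combined with Lemmas~\ref{lem:reflexivity} and \ref{lem:transitivity}, this makes $\sim$ an equivalence relation, partitioning $A$ into classes $[a]$.

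The crucial step is to show closure of each class under multiplication; once this holds, total definedness on each class is immediate from the definition of $\sim$, commutativity follows from (C), and associativity from (A), which becomes an ordinary total associativity on any class where all products exist. For closure, my plan is to unpack (F) elementwise in $\Rel$. Reading off the two composites $\Delta\after\nabla$ and $(\nabla\tensor\id)\after(\id\tensor\Delta)$ gives the identity
\[ xy = uv \;\;\Longleftrightarrow\;\; \exists a \in A.\; xa = u \text{ and } av = y \]
for all $x,y,u,v \in A$. Specializing to $u = y$, $v = x$ and invoking (C) to identify $xy$ with $yx$, the left-hand side collapses to ``$xy$ is defined'', yielding the divisibility characterization $x \sim y \Leftrightarrow \exists a \in A.\; xa = y$. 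Closure is then immediate: if $x \sim y$ and $z = xy$, then applying the characterization to the pair $(x, z)$ with witness $a := y$ gives $xa = xy = z$, so $x \sim z$ and hence $z \in [x]$.

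To finish, I would verify that $\Delta$ itself respects the partition, so that the Frobenius structure decomposes as a coproduct in $\Rel$ indexed by the classes. This is easy: if $a\,\Delta\,(x,y)$ then $xy = a$, so $x \sim y$ and, by the closure just established, $a$ lies in $[x] = [y]$. Hence $\Delta = \bigsqcup_{[a]} \Delta|_{[a]}$, and each restricted relation inherits (A), (C), (M), and (F) from the global one, so in particular each class carries a Frobenius algebra satisfying (F).

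The main obstacle will be the elementwise unpacking of (F), but this is really just diagrammatic bookkeeping; once the divisibility characterization is in hand, everything else is formal. Notably, the argument never invokes units nor anything special to $\Rel$ beyond elementwise reasoning with a partial operation, which is presumably what will allow the authors to port the same strategy to $\LBFRel$, quantale-valued relations, and positive $\ell^2$-matrices in the rest of the section.
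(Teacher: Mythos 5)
Your proof is correct and takes essentially the same route as the paper: the decomposition is into the equivalence classes of $\sim$, which is an equivalence relation by Lemmas~\ref{lem:reflexivity} and~\ref{lem:transitivity} together with symmetry from (C), multiplication is total on each class by definition of $\sim$, and the restricted structures inherit (A), (C), (M) and (F). The one point where you are more explicit than the paper is the closure of each class under multiplication, which you derive from the elementwise reading of (F) as the divisibility characterization $x \sim y \Leftrightarrow \exists a.\, xa = y$; the paper leaves this step implicit (it also follows from the first half of the proof of Lemma~\ref{lem:reflexivity}, which gives $a = a_1 a_2 \Rightarrow a \sim a_2$, combined with transitivity), so your added care here is welcome rather than redundant.
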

\begin{proof}
  By the previous two lemmas and (C), the
  relation $\sim$ is a equivalence relation. Hence $A$ is a disjoint
  union of the equivalence classes under $\sim$. By definition of
  $\sim$, the multiplication $\nabla$ is totally defined on these
  equivalence classes. Moreover, they inherit the 
  properties (M), (C),
  (A) and (F) from $A$.
\end{proof}


\begin{lemma}
\label{lem:huntington}
  \cite{huntington:group}
  A semigroup $S$ is a group if and only if $aS=S=Sa$ for all $a \in S$.
\end{lemma}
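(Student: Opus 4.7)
The plan is to prove both directions by elementary semigroup manipulations, the core insight being that surjectivity of left and right multiplication forces the existence of a global two-sided identity, and then of two-sided inverses.

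For the forward direction, assume $S$ is a group with identity $e$. For any $a,b \in S$, write $b = a(a^{-1}b)$, so $b \in aS$; this gives $S \subseteq aS$, and the reverse inclusion is immediate. The argument for $Sa = S$ is symmetric. This direction is straightforward and not the main content.

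For the reverse direction, the strategy is to extract a two-sided identity from a locally chosen one, then use the surjectivity hypotheses a second time to produce inverses. First I would fix some $a \in S$ and use $aS = S$ to obtain $e \in S$ with $ae = a$. The key observation is that $e$ is then a right identity for \emph{every} element of $S$: given $b \in S$, surjectivity of right multiplication by $a$ (i.e.\ $Sa = S$) yields $c$ with $b = ca$, and then $be = cae = ca = b$. Symmetrically, using $Sa = S$, one obtains a left identity $e'$, and $e = e'e = e'$ shows that the two coincide into a two-sided identity, which I will again call $e$.

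Once $e$ is in hand, inverses come for free: for each $a \in S$, the hypothesis $aS = S$ gives $a' \in S$ with $aa' = e$, and $Sa = S$ gives $a'' \in S$ with $a''a = e$. The usual one-line computation $a'' = a''e = a''(aa') = (a''a)a' = ea' = a'$ shows the one-sided inverses coincide into a two-sided inverse. Together with the identity, this gives the group structure.

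The only mildly subtle step is the upgrade from the locally chosen $e$ (satisfying $ae = a$ for the particular $a$) to a global right identity; the trick is the use of $Sa = S$ to reach every $b$. There is no real obstacle here, only the standard care that, in the absence of any a priori cancellation, one must derive the two-sidedness of identity and inverses from the two-sided surjectivity hypothesis rather than from any further structure.
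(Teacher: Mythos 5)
Your proof is correct and follows essentially the same route as the paper's: fix $a$, extract an identity element from the surjectivity hypotheses, then apply them again to produce inverses. You are in fact somewhat more careful than the paper's own argument, which exhibits only a one-sided identity and one-sided inverses and leaves the two-sidedness (needed for a general, possibly noncommutative, semigroup) implicit; your use of $Sa=S$ to globalize the right identity, the reconciliation $e=e'e=e'$, and the computation $a''=a''(aa')=(a''a)a'=a'$ supply exactly those missing details.
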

\begin{proof}
  The condition $aS=S$ means $\forall b\exists c[b=ac]$.
  If $S$ is a group, this is obviously fulfilled by $c=a^{-1}b$. 
  For the converse, fix $a \in S$. Applying the condition with $b=a$
  yields $c$ such that $a=ac$. Define $e=c$, and let $x \in S$. Then
  applying the condition with $b=x$ gives $c$ with $x=ac$. Hence
  $ex=eac=ac=x$. Thus $S$ is a monoid with (global) unit $e$.
  Applying the condition once more, with $a=x$ and $b=e$ yields
  $x^{-1}$ with $xx^{-1}=e$.
\end{proof}

\begin{theorem}
\label{lem:group}
  $A$ is a disjoint union of commutative groups. 
\end{theorem}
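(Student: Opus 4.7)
The plan is to combine Proposition~\ref{prop:disjointunion} with Huntington's criterion (Lemma~\ref{lem:huntington}). By the former, it suffices to fix a single equivalence class $S$ under $\sim$ (which carries a totally defined commutative semigroup structure inheriting (M), (C), (A), (F)) and show that $S$ is a group. Since $S$ is commutative we have $aS = Sa$, so by Lemma~\ref{lem:huntington} I only need to verify that $aS = S$ for every $a \in S$, i.e., that for all $a,b \in S$ there exists $c \in S$ with $ac = b$.

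The central tool is the set-theoretic unpacking of (F) in $\Rel$: whenever $uv = xy$ in $A$, there exists $y'$ with $xy' = u$ and $y'v = y$. I would record this explicitly as a preliminary, reading off the two composites in the definition of (F).

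Given $a,b \in S$, reflexivity and transitivity of $\sim$ (Lemmas~\ref{lem:reflexivity} and~\ref{lem:transitivity}) guarantee that $ab$ is defined; commutativity (C) then yields $ba = ab$. I will apply the unpacked version of (F) with the choice $x = a$, $y = b$, $u = b$, $v = a$, whose hypothesis $uv = ba = ab = xy$ is exactly this equality. This produces an element $c$ with $ac = b$ (and, incidentally, $ca = b$). Because $ac$ is defined, $a \sim c$, so $c$ lies in $S$. Thus $b \in aS$, and since $b$ was arbitrary, $aS = S$.

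The whole argument is short once Proposition~\ref{prop:disjointunion} and the pointwise reading of (F) are in hand; there is no real obstacle, only the bookkeeping of choosing the right instance of (F). If desired, the proof can be illustrated by a short diagrammatic chase, entirely analogous to those used in the proofs of Lemmas~\ref{lem:reflexivity} and~\ref{lem:transitivity}, showing how the triple $(b,a)$ factors via $(a,c,a)$.
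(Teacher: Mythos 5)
Your proof is correct and follows the paper's own route: decompose $A$ into $\sim$-classes via Proposition~\ref{prop:disjointunion} and verify Huntington's criterion $aS=S$ on each class using Lemma~\ref{lem:huntington}. The only difference is local: you produce the witness $c$ with $ac=b$ by applying the pointwise reading of (F) directly to the equality $ab=ba$, whereas the paper reuses the element $\bar b$ constructed in the proof of Lemma~\ref{lem:transitivity}; your instantiation is, if anything, slightly more direct.
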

\begin{proof}
Let $A'$ be one of the equivalence classes of $A$, and
  consider $a,b \in A'$. This means  that $a \sim b$. As in the proof of
  Lemma~\ref{lem:transitivity}, there is a $\bar{b}$ such that
  $a=\bar{b}ba$. Putting $c=\bar{b}a$ thus gives $\forall a, b \in A'
  \exists c \in A'[a=cb]$. In other words, $aA'=A'$ and similarly $A'=A'a$ for
  all $a \in A$. Hence $A'$ is a (commutative) group by
  Lemma~\ref{lem:huntington}. 
\end{proof}

The following theorem already follows from
Proposition~\ref{prop:cptimpliesunital} and
Theorem~\ref{thm:whenFimpliesH}, but now we have a direct proof that
also carries over to the theorem after it, which does not follow from
the earlier results.

\begin{theorem}
\label{thm:characterisationFinRel}
  In $\Rel$, Frobenius algebras satisfy (H), and the conditions (F)
  and (H) are equivalent in the presence of the other axioms for
  Frobenius algebras. 
\end{theorem}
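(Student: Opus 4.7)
The plan is to exploit the decomposition established in Theorem~\ref{lem:group}: any Frobenius algebra $A$ in $\Rel$ splits as a disjoint union $A = \bigsqcup_{i \in I} G_i$ of abelian groups, with multiplication $\mu = \nabla$ given by the disjoint union of the group multiplications. Once we have this concrete structure, the involution required by axiom (H) is essentially forced on us by group inversion, and verifying (H) becomes a direct calculation.

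Concretely, for a morphism $a \colon I \to A$ in $\Rel$ (that is, a subset $a \subseteq A$), I will define $a^* = \{ g^{-1} : g \in a \}$, where $g^{-1}$ denotes the inverse of $g$ within the unique group component $G_i$ that contains it. To verify axiom (H), I use the monoidal well-pointedness of $\Rel$ (noted in Section~\ref{sec:background}) to reduce to the case of singleton points $\{g\} \colon I \to A$. For such a singleton with $g \in G_i$, a short unfolding shows that both sides of (H) act on $y \in A$ by returning $\{g^{-1}y\}$ when $y \in G_i$ and the empty set otherwise: the left-hand side $\mu \after (g^* \tensor \id)$ computes $g^{-1} y$ directly in $G_i$, while the right-hand side $(g^\dag \tensor \id) \after \mu^\dag$ collects those $z$ with $gz = y$, and this is precisely $g^{-1}y$ in $G_i$ by the group axioms. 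Extending $g \mapsto g^{-1}$ to arbitrary subsets elementwise then gives (H) in full by well-pointedness.

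For the equivalence of (F) and (H) modulo (A), (C), and (M): the direction (F) $\Rightarrow$ (H) is exactly what the previous step establishes, since (F) together with (A), (C), (M) yields the group decomposition of Theorem~\ref{lem:group} on which our construction of $a^*$ rests. The converse (H) $\Rightarrow$ (F) is immediate from Lemma~\ref{lem:HAWPimplyF}, which derives (F) from (H) together with (A) in any monoidally well-pointed dagger monoidal category; $\Rel$ is such a category.

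The genuine difficulty has already been done: it lies in Theorem~\ref{lem:group}, which turns the abstract axioms into the concrete picture of abelian groups. With that decomposition in hand, the only subtlety remaining for (H) is that the axiom quantifies over \emph{all} morphisms $a \colon I \to A$, not only singletons; this is handled cleanly by defining $a^*$ pointwise on elements of $a$ and appealing to monoidal well-pointedness to pass from singleton verification to the general case.
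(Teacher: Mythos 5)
Your proposal is correct and follows essentially the same route as the paper: invoke the group decomposition of Theorem~\ref{lem:group}, define $a^*$ pointwise by $g \mapsto g^{-1}$ on the subset $a \subseteq A$, and check (H) elementwise. The only addition is that you make explicit the converse direction (H) $\Rightarrow$ (F) via Lemma~\ref{lem:HAWPimplyF}, which the paper leaves implicit but clearly intends.
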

\begin{proof}
  This  follows directly from Lemma~\ref{lem:group}, since we can
  define $a^*=a^{-1}$, where $a^{-1}$ is the inverse in the disjoint
  summand containing $a$. 
  More precisely, a point of $A$ in $\Rel$ will be a subset of $A$,
  and we apply the definition $a^*=a^{-1}$ pointwise to this
  subset. This assignment is easily seen to satisfy (H). 
\end{proof}

\begin{theorem}
  In $\LBFRel$, Frobenius algebras are disjoint unions of abelian
  groups and hence satisfy (H), and the conditions (F) and (H) are
  equivalent in the presence of the other axioms for Frobenius
  algebras.  
\end{theorem}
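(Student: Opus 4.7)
The plan is to show that the proof for $\Rel$ given in Theorem~\ref{thm:characterisationFinRel} transfers almost verbatim to $\LBFRel$. The crucial point is that the entire analysis in Section~\ref{frobrelsec} was carried out without any appeal to units, and $\LBFRel$ is a monoidally well-pointed monoidal dagger subcategory of $\Rel$, so every Frobenius algebra in $\LBFRel$ is automatically a Frobenius algebra in $\Rel$.

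First I would invoke Theorem~\ref{lem:group} on the Frobenius algebra $(A,\Delta)$ viewed in $\Rel$, yielding a decomposition $A = \bigsqcup_{k \in K} G_k$ into commutative groups, with $\nabla = \Delta^\dag$ restricting to the group operation on each $G_k$. As a bonus observation, local bifiniteness of $\nabla$ then forces each $G_k$ to be finite: for any fixed $c \in G_k$, the fibre $\{(a,b) \in A \times A \mid ab = c\}$ contains $\{(a, a^{-1}c) \mid a \in G_k\}$, which is in bijection with $G_k$, so image-finiteness of $\nabla$ gives $|G_k| < \infty$.

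For axiom (H), I would follow the recipe of Theorem~\ref{thm:characterisationFinRel}: a point $p \colon I \to A$ in $\LBFRel$ is, by image-finiteness, a finite subset $p \subseteq A$, and one defines $p^*$ elementwise by sending each element to its inverse in the group summand containing it. Because every $G_k$ is finite and $p$ is finite, $p^*$ is again a finite subset, hence a genuine point of $\LBFRel$. Equation (H) then holds by the same relational computation as in $\Rel$, since composition, tensor and dagger agree on morphisms shared between the two categories. The equivalence of (F) and (H) modulo the other axioms now follows, with the direction (H) $\Rightarrow$ (F) coming from Lemma~\ref{lem:HAWPimplyF} applied at the monoidally well-pointed category $\LBFRel$.

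The main potential obstacle is to verify that every relation constructed along the way remains image-finite in both directions; this, however, is handled uniformly by the finiteness of the group summands $G_k$ established above, so no genuinely new argument beyond the $\Rel$ case is required.
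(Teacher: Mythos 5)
Your proposal is correct and takes essentially the same route as the paper, whose entire proof is the remark that the argument for $\Rel$ in Section~\ref{frobrelsec} uses no units and so applies verbatim in $\LBFRel$. Your additional checks---that local bifiniteness forces each group summand to be finite, and that the elementwise inverse of a finite point is again a finite point so that $p^*$ lives in $\LBFRel$---are exactly the details the paper leaves implicit, and they are right.
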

\begin{proof}
  The proof above made no use of units, and is equally valid in
  $\LBFRel$. 
\end{proof}

\subsection{Quantale-valued relations}

We shall now consider categories of the form $\Rel(Q)$, where $Q$ is a
commutative, cancellative quantale. Recall that a commutative quantale
\cite{rosenthal:quantales} is a structure $(Q, {\cdot}, 1, {\leq})$,
where $(Q, {\cdot}, 1)$ is a commutative monoid, and $(Q, {\leq})$ is
a partial order which is a complete lattice, \ie it has suprema of
arbitrary subsets. In particular, the supremum of the empty set is the
least element of the poset, written $0$. The multiplication is
required to distribute over arbitrary joins, \ie 
\[ 
    x \cdot (\bigvee_{i \in I} y_i) 
  = \bigvee_{i \in I} x \cdot y_i, 
  \qquad 
    (\bigvee_{i \in I} x_i) \cdot y 
  = \bigvee_{i \in I} x_i \cdot y. 
\]
The quantale is called \emph{cancellative} if
\[ 
  x \cdot y = x \cdot z
  \; \Rightarrow \; 
  x = 0 \vee y = z. 
\]
An example is given by the extended nonnegative reals $[0, \infty]$
with the usual ordering, and multiplication as the monoid operation. 
Note that the only nontrivial example when the monoid operation is
idempotent, \ie when the quantale is a locale, is the two-element
boolean algebra $\cat{2} = \{ 0, 1 \}$, since in the idempotent case $x \cdot 1
= x \cdot x$ for all $x$. We write $\CQ$ for the category of
cancellative quantales which are nontrivial, \ie in which $0 \neq 1$.  

\begin{proposition}
\label{qtermprop}
  The two element boolean algebra is terminal in $\CQ$.
\end{proposition}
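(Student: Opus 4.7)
The plan is to construct, for every nontrivial cancellative commutative quantale $Q$, a unique quantale homomorphism $Q \to \cat{2}$ (a map preserving multiplication, the unit, and arbitrary joins). For existence, define $f \colon Q \to \cat{2}$ by $f(0) = 0$ and $f(x) = 1$ for $x \neq 0$. Preservation of the unit is the nontriviality assumption $0 \neq 1$; preservation of arbitrary joins is immediate since a join in $Q$ equals $0$ precisely when every summand equals $0$ (including the empty join); and preservation of multiplication is exactly the contrapositive of cancellativity---if $x, y \neq 0$, then $xy = 0 = x \cdot 0$ would force $y = 0$, a contradiction.

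For uniqueness, let $g \colon Q \to \cat{2}$ be any quantale homomorphism. Preservation of the empty join forces $g(0) = 0$ and preservation of the unit forces $g(1) = 1$, so $g$ and $f$ agree on $\{0, 1\}$. Because $g$ preserves joins, $g^{-1}(0)$ is a principal downset $\downarrow m$ with $m := \bigvee g^{-1}(0)$, and multiplicativity of $g$ makes $m$ \emph{completely prime}: $xy \leq m$ iff $x \leq m$ or $y \leq m$. The task reduces to showing $m = 0$. As a structural preliminary I would first establish that in every nontrivial cancellative commutative quantale $\top = 1$: indeed $\top \cdot \top \geq 1 \cdot \top = \top$ and $\top \cdot \top \leq \top$, so $\top \cdot \top = \top = \top \cdot 1$, and cancellativity (using $\top \neq 0$) yields $\top = 1$. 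In particular every element of $Q$ lies below $1$.

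The main obstacle is ruling out a nonzero completely prime $m < 1$. My approach exploits that cancellativity makes multiplication strictly decreasing away from the unit: for any $y \neq 0, 1$, the equation $y^2 = y = y \cdot 1$ would force $y = 1$ by cancellativity, so $y^2 < y$ strictly; iterating gives a strictly decreasing sequence $(y^n)_n$ whose meet $y^\omega := \bigwedge_n y^n$ can be driven down to $0$ by cancellativity-based fixed-point reasoning. On the other hand, complete primeness of $m$ forces $y^n \not\leq m$ for every $y \not\leq m$ and every $n$. Choosing $y \not\leq m$ with $y \neq 1$, which exists because $m < 1$, the tension between $\bigwedge_n y^n = 0 \leq m$ and $y^n \not\leq m$ for each individual $n$ produces the contradiction. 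In a totally ordered $Q$ this is immediate, since $y^n > m$ for all $n$ contradicts $\bigwedge_n y^n = 0 < m$; extracting the same contradiction cleanly in the possibly non-totally-ordered case, using complete primeness together with the join-preservation of $g$, is the delicate technical heart of the proof.
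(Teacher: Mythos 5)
Your existence argument is exactly the paper's proof, which in fact consists of only that half: the map sending $0$ to $0$ and everything else to $1$ preserves joins trivially, preserves the unit because $0 \neq 1$, and preserves multiplication because cancellativity forbids zero divisors. Where you go beyond the paper is in noticing that uniqueness requires an argument, and you correctly isolate the remaining step---ruling out a nonzero completely prime element $m < 1$ in a non-totally-ordered $Q$---as the real difficulty. Unfortunately that step cannot be completed, because such elements exist. Take $Q$ to be the quantale of ideals of $\field{Z}$ under ideal multiplication (equivalently, $\field{N}$ with multiplication, ordered by reverse divisibility, so that joins are gcds, the bottom is the zero ideal and the top is the unit ideal $(1)$). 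This is a nontrivial cancellative commutative quantale, and for each prime $p$ the map $g_p$ sending an ideal $I$ to $0$ if $I \subseteq (p)$ and to $1$ otherwise preserves the unit, preserves arbitrary joins (since $p$ divides $\gcd_i n_i$ iff it divides every $n_i$), and preserves multiplication (since $(p)$ is a prime ideal and an ideal of the quantale). Each $g_p$ is a quantale homomorphism to $\cat{2}$ distinct from the canonical one---for instance $g_2((2)) = 0$---so $\cat{2}$ is not terminal in $\CQ$ with the evident morphisms. Your ``tension'' between $\bigwedge_n y^n = 0 \leq m$ and $y^n \not\leq m$ for each $n$ is genuinely not a contradiction outside the totally ordered case, exactly as this example shows with $m = (2)$ and $y = (3)$.

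Two further remarks. First, even setting the counterexample aside, the step ``$\bigwedge_n y^n$ can be driven down to $0$ by cancellativity-based fixed-point reasoning'' is shaky on its own terms: quantale multiplication distributes over joins but not over meets, so from $y \cdot \bigwedge_n y^n \leq \bigwedge_n y^{n+1}$ you obtain only an inequality, not the fixed-point equation such an argument would need. Second, the damage to the paper is contained: everything used downstream is the \emph{existence} of a $0$-reflecting homomorphism $Q \to \cat{2}$, which both you and the paper do establish, and which is trivially unique among $0$-reflecting homomorphisms. The proposition should be read, or repaired, in that weaker form; your honest flagging of the ``delicate technical heart'' is pointing at a statement that is actually false as given.
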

\begin{proof}
  The unique homomorphism $h \colon Q \to \cat{2}$ sends $0$ to
  itself, and everything else to 1. 
  Preservation of sups holds trivially, and cancellativity implies
  that multiplication is preserved. 
\end{proof}

The category $\Rel(Q)$ has sets as objects; morphisms $R \colon X
\rrel Y$ are $Q$-valued matrices, \ie functions $X \times Y \to
Q$. Composition is relational composition evaluated in $Q$, \ie if $R 
\colon X \rrel Y$ and $S \colon Y \rrel Z$, then
\[ 
  S \after R(x, z) = \bigvee_{y \in Y} R(x, y) \cdot S(y, z). 
\]
It is easily verified that this yields a category, with identities
given by diagonal matrices; that it has a monoidal structure induced
by cartesian product; and that is has a dagger given by matrix
transpose, \ie relational converse. 
Thus $\Rel(Q)$ is a symmetric monoidal dagger category, and the notion
of Frobenius algebra makes sense in it. Note that $\Rel(\cat{2})$ is
just $\Rel$.

A homomorphism of quantales $h \colon Q \to R$ induces a (strong)
monoidal dagger functor $h^* \colon \Rel(Q) \to \Rel(R)$, which
transports Frobenius algebras in $\Rel(Q)$ to Frobenius algebras in
$\Rel(R)$. 
In particular, by Proposition~\ref{qtermprop}, a Frobenius algebra
$\Delta \colon A \rrel A \times A$ in $\Rel(Q)$ has a reduct $h^*
\Delta \colon A \rrel A \times A$ in $\Rel$. Hence
Theorems~\ref{lem:group} and \ref{thm:characterisationFinRel} apply to 
this reduct. 
The remaining degree of freedom in the Frobenius algebra in $\Rel(Q)$
is which elements of $Q$ can be assigned to the elements of the
matrix. 

Suppose that we have a Frobenius algebra $\Delta \colon A \rrel A
\times A$ in $\Rel(Q)$. We write $M \colon (A \times A) \times A \to Q$
for the matrix function corresponding to the converse of $\Delta$, and
we write $M(a, b, c)$ rather than $M((a, b), c)$. 

Because the unique homomorphism $Q \to \cat{2}$ reflects 0, an entry
$M(a, b, c)$ is nonzero if and only if the corresponding relation
$(a, b) (h^* \nabla) c$ holds. Applying Theorem~\ref{lem:group}, this
immediately implies that for each $a, b \in A$, there is exactly
one $c \in A$ such that $M(a, b, c) \neq 0$. 

We can use this observation to apply similar diagrams to those used in
our proofs for $\Rel$ to obtain constraints on the values
taken by the matrix in $Q$. 

\begin{proposition}
\label{quantvalprop}
  With notation as above:
  \begin{enumerate}
    \item[(a)] If $e$ is an identity element in one of the disjoint
      summands, then for all $a,b$ in that disjoint summand we have
      $M(a, e, a) = M(b, e, b)$. We write $q_e$ for this common value.  
    \item[(b)] For all $a,b \in A$, we have $M(a, b, ab)^2 = 1$.
  \end{enumerate}
\end{proposition}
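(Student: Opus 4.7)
The plan is to translate the associativity law (A) and the Frobenius law (F) into pointwise equations among the matrix entries $M$, exploiting the fact that in a group summand the multiplication $\nabla$ is single-valued at the level of the $\Rel$-reduct, so every supremum arising in the composition of $Q$-valued relations collapses to a single nonzero term; cancellativity of $Q$ then converts these pointwise equalities into identities between individual matrix entries.

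For part (a), I would expand $\nabla \after (\nabla \tensor \id) = \nabla \after (\id \tensor \nabla)$ as an equality of matrices from $A \times A \times A$ to $A$. Evaluated at $(a,b,c,d)$ with $a,b,c$ lying in a common group summand with identity $e$, the left-hand side collapses to $M(a,b,ab) \cdot M(ab, c, abc)$ and the right-hand side to $M(b,c,bc) \cdot M(a, bc, abc)$. Specialising to $c = e$ gives
\[
  M(ab, e, ab) \cdot M(a,b,ab) \;=\; M(b, e, b) \cdot M(a,b,ab),
\]
and since $M(a,b,ab) \neq 0$, cancellativity of $Q$ yields $M(ab, e, ab) = M(b, e, b)$. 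Taking $b = e$ then identifies $M(a, e, a)$ with $M(e, e, e)$ for every $a$ in the summand, proving (a) with $q_e := M(e,e,e)$.

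For part (b), I would expand the Frobenius law $\delta \after \nabla = (\nabla \tensor \id) \after (\id \tensor \delta)$ at position $((a,b), (c,d))$. The left-hand side reduces to $M(a,b,ab) \cdot M(c,d,cd)$, nonzero precisely when $ab = cd$ within a common summand; the right-hand side, whose nontrivial contribution comes only from $y = bd^{-1}$, reduces to $M(bd^{-1}, d, b) \cdot M(a, bd^{-1}, c)$. Setting $c = a$ and $d = b$ and using (C) together with part (a) collapses the right-hand side to $M(e, b, b) \cdot M(a, e, a) = q_e^2$, so $M(a,b,ab)^2 = q_e^2$. It remains to show $q_e^2 = 1$, which is where (M) enters: expanding $\delta^\dag \after \delta = \id$ at $(e,e)$ yields $\bigvee_{(x,y):\, xy = e} M(x,y,e)^2 = 1$, and every term of this supremum equals $q_e^2$ by what was just proved, so the supremum of this nonempty constant family is $q_e^2$; hence $q_e^2 = 1$ and $M(a,b,ab)^2 = 1$.

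The main care needed is the bookkeeping in expanding the composites involving $\nabla \tensor \id$ and $\id \tensor \delta$, and verifying that each supremum really does reduce to a single term before cancellativity is applied; there is no deep conceptual obstacle, but one must consistently restrict attention to a single group summand and invoke commutativity (C) whenever the three arguments of $M$ need to be reshuffled.
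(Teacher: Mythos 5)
Your proof is correct and follows essentially the same route as the paper: part (a) is the associativity law evaluated pointwise at $(a,b,e)$ plus cancellativity, and part (b) uses the Frobenius law to get $M(a,b,ab)^2 = q_e^2$ and then (M) to force the constant supremum to equal $1$. The only (cosmetic) differences are that you conclude (a) by setting $b=e$ where the paper takes $b=a^{-1}c$, and you evaluate (M) at $c=e$ where the paper uses a general $c$.
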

\begin{proof}
  For (a), consider the diagram
  \[
    
 \begin{tikzpicture}[font=\small]%
  \matrix[matrix of math nodes]
{    |(A)[label=above:ab]|  &            & |(B)[label=above:e]| \\
	     & |(C)[dot]|                     \\ \\
	     & |(D)[dot]|                         \\
     |(E)[label=below:a]| &            & |(F)[label=below:b]|  \\
};
\begin{scope}[auto]
  \draw (A) to [out=90, in=180, bend right] (C);
  \draw (C) to [out=0, in=90, bend right] (B);
  \draw (C) to node{$ab$} (D);
  \draw (E) to [bend left] (D);
  \draw (D) to [bend left] (F);
\end{scope}
 \end{tikzpicture}%

    \; = \;
    
 \begin{tikzpicture}[font=\small]%
  \matrix[matrix of math nodes]
{          & |(A)[label=above:ab]| &&& |(B)[label=above:e]| \\
    & |(C)[dot]| &       &                    \\
     |(D)| &            & |(E)| &            & |(F)| \\
    &            &       & |(G)[dot]|         \\
     |(H)[label=below:a]| &&& |(I)[label=below:b]| \\
};
\begin{scope}[auto]
  \draw (C) to [bend right] (D) -- (H);
  \draw (C) to (A);
  \draw (C) to [bend left] node{$b$} (E) to [bend right] (G);
  \draw (G) to (I);
  \draw (G) to [bend right] (F) -- (B);
\end{scope}
 \end{tikzpicture}%

  \]
  This implies the equation
  \[ 
      M(a, b, ab) \cdot M(ab, e, ab) 
    = M(a, b, ab) \cdot M(b, e, b) 
  \]
  and hence, by cancellativity, $M(ab, e, ab) = M(b, e, b)$. Hence for
  any $c$, taking $b = a^{-1}c$, $M(c, e, c) = M(b, e, b)$. 

  For (b), consider the diagram
  \[
    
 \begin{tikzpicture}[font=\small]%
  \matrix[matrix of math nodes]
{    |(A)[label=above:a]|  &            & |(B)[label=above:b]| \\
	     & |(C)[dot]|                     \\ \\
	     & |(D)[dot]|                         \\
     |(E)[label=below:a]| &            & |(F)[label=below:b]|  \\
};
\begin{scope}[auto]
  \draw (A) to [out=90, in=180, bend right] (C);
  \draw (C) to [out=0, in=90, bend right] (B);
  \draw (C) to node{$ab$} (D);
  \draw (E) to [bend left] (D);
  \draw (D) to [bend left] (F);
\end{scope}
 \end{tikzpicture}%

    \; = \;
    
 \begin{tikzpicture}[font=\small]%
  \matrix[matrix of math nodes]
{          & |(A)[label=above:a]| &&& |(B)[label=above:b]| \\
    & |(C)[dot]| &       &                    \\
     |(D)| &            & |(E)| &            & |(F)| \\
    &            &       & |(G)[dot]|         \\
     |(H)[label=below:a]| &&& |(I)[label=below:b]| \\
};
\begin{scope}[auto]
  \draw (C) to [bend right] (D) -- (H);
  \draw (C) to (A);
  \draw (C) to [bend left] node{$e$} (E) to [bend right] (G);
  \draw (G) to (I);
  \draw (G) to [bend right] (F) -- (B);
\end{scope}
 \end{tikzpicture}%

  \]
  This implies the equation $M(a, b, ab)^2 = q_e^2$.
  Now applying (M), for each $c$ we obtain that
  \[ 
    \bigvee \{ M(a, b, ab)^2  \mid ab = c \} = 1. 
  \]
  As all the terms in this supremum are the same, $M(a, b, ab)^2 = 1$.  
\end{proof}

Thus if $q^2 = 1$ implies $q = 1$ in $Q$, the matrix $M$ is in fact
valued in $\cat{2}$. Otherwise, we can choose square roots of unity for
the entries. 

\begin{theorem} 
  Let $Q$ be a cancellative quantale. Suppose that $q^2 = 1$ implies
  $q = 1$ in $Q$. Then every Frobenius algebra in $\Rel(Q)$
  satisfies (H). 
  \qed
\end{theorem}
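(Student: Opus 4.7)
The plan is to reduce to the $\Rel$ case of Theorem~\ref{thm:characterisationFinRel} by first pinning down the matrix $M$ of $\Delta$, and then promoting the group-inverse involution from $\Rel$ to the $Q$-valued points of $A$ in $\Rel(Q)$.

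First, I would combine Proposition~\ref{quantvalprop}(b) with the hypothesis $q^2 = 1 \Rightarrow q = 1$ to conclude that $M(a,b,ab) = 1$ for every pair $(a,b)$ lying in a common disjoint summand of the reduct $h^*\Delta$ in $\Rel$, where $h \colon Q \to \cat{2}$ is the terminal homomorphism of Proposition~\ref{qtermprop}. Together with the fact that $M(a,b,c) = 0$ whenever $(a,b)$ is unrelated to $c$ in the reduct, this shows $M$ is $\{0,1\}$-valued and has precisely the support of the reduct. Applying Theorem~\ref{lem:group} to the reduct then exhibits $A$ as a disjoint union of abelian groups $\{G_i\}_{i \in I}$ on which $\nabla$ computes the group product.

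Second, a point $\varphi \colon I \to A$ in $\Rel(Q)$ is a function $\varphi \colon A \to Q$, so following the pattern of Theorem~\ref{thm:characterisationFinRel} I would set
\[
  \varphi^*(a) = \varphi(a^{-1}),
\]
with $a^{-1}$ the inverse of $a$ in its group summand. To verify (H), I would compute the two sides $\mu \after (\varphi^* \tensor \id)$ and $(\varphi^\dag \tensor \id) \after \mu^\dag$ as matrices $A \to A$: because $M$ is $\{0,1\}$-valued and $\nabla$ is group multiplication, the $(y,z)$-entry of the left-hand side collapses to $\varphi^*(zy^{-1})$ and that of the right-hand side to $\varphi(yz^{-1})$ whenever $y$ and $z$ share a summand, and both vanish otherwise. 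The defining identity $\varphi^*(u) = \varphi(u^{-1})$ then makes the two matrices agree.

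I do not expect a serious obstacle: the quantale-theoretic hypothesis does all of its work in the preliminary step, collapsing $M$ to a $\{0,1\}$-valued matrix, after which the verification of (H) is a direct computation mirroring the $\Rel$ argument of Theorem~\ref{thm:characterisationFinRel}. The only point requiring minor care is that points of $A$ in $\Rel(Q)$ are $Q$-valued functions rather than plain subsets, so the involution must be specified pointwise as a function rather than by inverting a subset as in $\Rel$.
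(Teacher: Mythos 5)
Your proposal is correct and follows essentially the same route as the paper: the theorem is stated there with no separate proof precisely because the preceding discussion (Proposition~\ref{quantvalprop}(b) plus the hypothesis $q^2=1\Rightarrow q=1$) collapses $M$ to a $\cat{2}$-valued matrix supported on the group structure from Theorem~\ref{lem:group}, after which the involution $a\mapsto a^{-1}$ from Theorem~\ref{thm:characterisationFinRel} transfers verbatim. Your explicit entrywise verification of (H) and the remark about points being $Q$-valued functions rather than subsets are just a slightly more detailed write-up of the same argument.
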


\subsection{Positive $\ell^2$ matrices}

We now consider the case of matrices in $\LMat$ valued in
the non-negative reals. These form a monoidal dagger subcategory of
$\LMat$, which we denote by $\LMpos$. 
Note that the semiring $(\Rpos, +, 0, {\times}, 1)$ has a
unique 0-reflecting semiring homomorphism to $\cat{2}$. Hence a
Frobenius algebra in $\LMpos$ has a reduct to one in $\Rel$ via this
homomorphism. Just as before, we can apply Theorem~\ref{lem:group} to
this reduct. 

We have the following analogue to Proposition~\ref{quantvalprop},
where $M \colon (A \times A) \times A \to \Rpos$ is the matrix realizing
the Frobenius algebra structure. 

\begin{proposition}
\label{Mconstprop}
  The function $M$ is constant on each disjoint summand of $A$.
\end{proposition}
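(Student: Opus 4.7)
The strategy is to adapt the proof of Proposition~\ref{quantvalprop} to the setting of $\LMpos$: the positive reals are cancellative with respect to multiplication of nonzero elements, and the Frobenius diagrams translate to matrix equations in essentially the same way. First I would apply the unique 0-reflecting semiring homomorphism $\Rpos \to \cat{2}$ (sending $0$ to $0$ and every positive real to $1$) to obtain a monoidal dagger functor $\LMpos \to \Rel$ which transports our Frobenius algebra to a Frobenius algebra in $\Rel$. By Theorem~\ref{lem:group} the reduct decomposes as a disjoint union of abelian groups, so each disjoint summand $G$ of $A$ inherits an abelian group structure with identity $e$, and $M(a, b, c) > 0$ if and only if $c = ab$ in $G$.

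Next I would apply the two diagrams from the proof of Proposition~\ref{quantvalprop}, now interpreted at points of $G$ in $\LMpos$. Because group multiplication is single-valued in the $\Rel$-reduct, each matrix-composition sum has exactly one nonzero term, just as in the quantale case. The first diagram yields the scalar equation
\[ M(a, b, ab) \cdot M(ab, e, ab) = M(a, b, ab) \cdot M(b, e, b), \]
and since $M(a, b, ab) > 0$ and $\Rpos$ is cancellative under multiplication of nonzero elements, we may cancel to conclude $M(ab, e, ab) = M(b, e, b)$. Varying $a, b \in G$, the value $q_e := M(c, e, c)$ is seen to be constant in $c \in G$. The second diagram then yields $M(a, b, ab)^2 = q_e^2$, and taking positive square roots (legitimate since $M$ is valued in $\Rpos$) gives $M(a, b, ab) = q_e$ for all $a, b \in G$. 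Hence $M$ is constant on $G$.

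The main subtlety, rather than a genuine obstacle, is the reduction of matrix-composition sums to single scalars; this is guaranteed by the group structure of the $\Rel$-reduct, exactly as in the quantale argument. Note that, unlike the quantale case, we do not need to invoke axiom (M): the conclusion is only constancy, not that the constant equals $1$, so axioms (A), (C), and (F) alone suffice. Any statement about the precise magnitude of $q_e$ (and hence finiteness of $G$, which is forced by $\ell^2$-summability) would require (M) in addition, but that is a separate matter not asserted here.
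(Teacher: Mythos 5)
Your proof is correct and follows the paper's argument exactly: the paper likewise passes to the $\Rel$-reduct via the $0$-reflecting homomorphism $\Rpos \to \cat{2}$, reuses the two diagrams of Proposition~\ref{quantvalprop} to obtain $M(ab,e,ab)=M(b,e,b)$ by cancellation and then $M(a,b,ab)^2 = r_e^2$, and concludes by taking positive square roots in $\Rpos$. The only quibble is your closing aside that (A), (C), and (F) alone suffice: the decomposition into group summands on which the whole argument rests (Theorem~\ref{lem:group}, via the single-valuedness and surjectivity of $\nabla$ and Lemma~\ref{lem:reflexivity}) does use (M), even though the final cancellation and square-root steps do not.
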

\begin{proof}
  We can use the same reasoning as in
  Proposition~\ref{quantvalprop}(a) to show that, if $e$ is an
  identity element in one of the disjoint summands, then for all $a$,
  $b$ in that disjoint summand, $M(a, e, a) = M(b, e, b)$. We write
  $r_e$ for this common value. 

  Using the same reasoning as in  Proposition~\ref{quantvalprop}(b)
  one finds $M(a, b, ab)^2 = r_e^2$. Since we are in $\Rpos$, this implies
  $M(a, b, ab) = r_e$, so that $M$ is constant on each disjoint
  summand.  
\end{proof}

\begin{proposition}
  Each disjoint summand is finite, and the common value of $M$ on that
  summand is $1/\sqrt{d}$, where $d$ is the cardinality of the
  summand. 
\end{proposition}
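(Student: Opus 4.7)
The plan is to extract a single numerical identity $d r^2 = 1$ from axiom (M), where $r \geq 0$ is the common value of $M$ on the summand supplied by Proposition~\ref{Mconstprop} and $d$ is the cardinality of that summand. The conclusion then falls out immediately.

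Fix a disjoint summand $A' \subseteq A$, with common $M$-value $r$ and cardinality $d \in \{1, 2, \ldots\} \cup \{\infty\}$. In $\LMpos$ composition is matrix multiplication and the dagger is transpose, so $\nabla = \delta^\dag$ is carried by the matrix $M$ and $\delta$ itself is carried by $(a,(b,c)) \mapsto M((b,c), a)$. Read at the diagonal entry $(a,a)$ for $a \in A'$, the equation $\delta^\dag \after \delta = \id[A]$ therefore becomes
\[
  \sum_{(b,c) \in A \times A} M((b,c), a)^2 \;=\; 1.
\]
By Theorem~\ref{lem:group}, a term $M((b,c), a)$ is nonzero precisely when $b, c \in A'$ with $bc = a$ in the group. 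The parametrization $b \mapsto (b, b^{-1} a)$ is a bijection from $A'$ onto this set of pairs, so the sum has exactly $d$ nonzero contributions, each equal to $r^2$ by Proposition~\ref{Mconstprop}. Hence $d r^2 = 1$.

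Since $r = 0$ would give $0 = 1$, we must have $r > 0$; and if $d$ were infinite the sum would diverge. Therefore $d$ is finite and $r = 1/\sqrt{d}$. No serious obstacle arises: the only care needed is to translate (M) correctly through the matrix-indexing conventions of $\LMpos$ and to invoke Theorem~\ref{lem:group} to cut the support of the sum down to pairs from the summand under consideration.
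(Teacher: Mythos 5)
Your proposal is correct and follows essentially the same route as the paper: read (M) at a diagonal entry to get $\sum_{bc=a} M(b,c,a)^2 = 1$, use the group structure to parametrize the nonzero terms bijectively by the summand, and invoke the constancy of $M$ from the previous proposition to conclude $d r^2 = 1$. Your explicit remarks that $r>0$ and that an infinite summand would make the sum diverge are just slightly more spelled-out versions of the paper's ``thus the sum must be finite.''
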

\begin{proof}
  Applying (M), for each $c$ in the summand we obtain that
  \[ 
    \sum_{ab = c} M(a, b, c)^2  = 1. 
  \]
  Since the summand is a group, for each $c$ and $a$ there is a unique
  $b$ such that $ab = c$. 
  Moreover, by Proposition~\ref{Mconstprop}, all the terms in this sum
  are equal. Thus the sum must be finite, with the number of terms $d$
  the cardinality of the summand. We can therefore rewrite the
  equation as $d r_e^2 = 1$, and hence $r_e = 1/\sqrt{d}$.
\end{proof}

\begin{theorem}
  Every Frobenius algebra in $\LMpos$ satisfies (H).
\end{theorem}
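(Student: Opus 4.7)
The plan is to imitate the definition of $a^*$ from the proof of Theorem~\ref{thm:characterisationFinRel}: there, the decomposition of $A$ into a disjoint union of abelian groups allowed one to set $a^* := a^{-1}$ pointwise. The two preceding propositions establish exactly the extra structure needed to lift this idea to $\LMpos$: each disjoint summand is a finite abelian group of some cardinality $d$, and the multiplication matrix $M$ is constant on each summand with value $1/\sqrt{d}$.

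Concretely, a point $a \colon I \to A$ in $\LMpos$ is an $\ell^2$-summable function $a \colon A \to \Rpos$, and I would define
\[ a^*(z) := a(z^{-1}), \]
where $z^{-1}$ denotes the inverse of $z$ in the finite abelian group summand containing it. Because inversion is a bijection on each summand, $a^*$ is again $\ell^2$-summable, hence a well-defined point $I \to A$.

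To verify axiom (H), I would unfold both sides as matrix functions $A \to A$ and evaluate at an arbitrary input $y$ and output $w$. Since $M(u, v, w) \neq 0$ exactly when $u$ and $v$ share a summand and $uv = w$, each side collapses to a single term: the LHS $\mu \after (a^* \tensor \id)$ gives $a^*(wy^{-1})/\sqrt{d_y}$, while the RHS $(a^\dag \tensor \id) \after \mu^\dag$ gives $a(yw^{-1})/\sqrt{d_w}$, both zero unless $y$ and $w$ lie in the same summand. In that case $d_y = d_w$, and $a^*(wy^{-1}) = a((wy^{-1})^{-1}) = a(yw^{-1})$ by definition, so the two sides agree.

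There is no conceptual obstacle beyond what has already been resolved by the structural propositions; the remaining work is a bookkeeping calculation. The only place to be careful is in tracking the dagger/transpose conventions through each tensor factor, so that the two distinct `inversions'---the one defining $a^*$ and the one coming from $\mu^\dag$---line up to produce matching formulas on both sides.
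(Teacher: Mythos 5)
Your proposal is correct and follows essentially the same route as the paper: the paper's proof is the one-line observation that the pointwise assignment $x^* = x^{-1}$ from the $\Rel$ case still works, with the weight $1/\sqrt{d}$ constant on each summand. Your explicit matrix verification that the $1/\sqrt{d_y}$ and $1/\sqrt{d_w}$ factors match because $y$ and $w$ lie in the same summand is exactly the bookkeeping the paper leaves implicit.
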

\begin{proof}
  We make the same pointwise assignment  $x^* = x^{-1}$ on the
  elements of $A$ as in the proof of
  Theorem~\ref{thm:characterisationFinRel}, with the weight
  $1/\sqrt{d}$ determined by the summand. 
\end{proof}

In the case when the matrix represents a bounded linear map in
$\Hilb$, we can apply Theorem~\ref{thm:whenFimpliesH}, and obtain the
following.

\begin{proposition}
  If a Frobenius algebra in $\Hilb$ can be represented by a non-negative
  real matrix, then it corresponds to a direct sum of one-dimensional
  algebras, and hence to an orthonormal basis. 
  \qed
\end{proposition}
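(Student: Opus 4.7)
The plan is to transfer the problem from $\Hilb$ to $\LMpos$, apply the result just established there, and then pull the conclusion back to $\Hilb$ via Theorem~\ref{thm:whenFimpliesH}.

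First I would invoke the embedding of $\Hilb$ as a (nonfull) subcategory of $\LMat$ recalled in the Examples section. If the Frobenius algebra $(A,\delta)$ in $\Hilb$ is represented by a non-negative real matrix with respect to some orthonormal basis $X$ of $A$, then under the $\ell^2$-matrix representation the comultiplication $\delta$ becomes a matrix $X \times (X \times X) \to \Rpos$, and dually $\mu = \delta^\dag$ is an $\Rpos$-valued matrix too. Since composition, tensor product, and dagger in $\LMat$ restrict to $\LMpos$, the whole Frobenius structure lives inside the monoidal dagger subcategory $\LMpos$.

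Next I would apply the preceding theorem, which says that every Frobenius algebra in $\LMpos$ satisfies axiom (H). The witness $x \mapsto x^*$ constructed there is the pointwise group inversion weighted by $1/\sqrt{d}$ on each disjoint summand, and axiom (H) is an equation between morphisms of $\LMpos$. Since $\LMpos \hookrightarrow \LMat$ and $\Hilb$ sits inside $\LMat$ as well (with matrix-level equations reflected back to morphism equations in $\Hilb$ under the equivalence), this equation transfers to an equation in $\Hilb$: one reads the same matrix $x^*$ as a bounded linear map $\field{C} \to A$ (it is $\ell^2$-summable because each summand is finite by the previous analysis), and the equation (H) holds in $\Hilb$ by the same matrix identity.

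Having established axiom (H) in $\Hilb$ for our Frobenius algebra, I would conclude by applying Theorem~\ref{thm:whenFimpliesH}, specifically the implication (d)$\Rightarrow$(a): a Frobenius algebra in $\Hilb$ satisfying (H) is induced by an orthonormal basis, equivalently admits the structure theorem, and is hence a Hilbert space direct sum of one-dimensional algebras. The only step requiring any care is the transfer of axiom (H) between $\LMpos$ and $\Hilb$; but since (H) is purely equational and the functor $\Hilb \to \LMat$ is a faithful monoidal dagger functor in which the relevant matrices lie in the $\Rpos$-valued subcategory, no genuine obstacle arises.
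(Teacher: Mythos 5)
Your proposal is correct and follows essentially the same route as the paper: the paper's (omitted) proof is precisely to observe that a non-negative matrix representation places the algebra in $\LMpos$, where the preceding theorem yields axiom (H), and then to invoke Theorem~\ref{thm:whenFimpliesH} to conclude that it corresponds to an orthonormal basis. Your additional care about transferring the equational axiom (H) back along the faithful monoidal dagger embedding of $\Hilb$ into $\LMat$ is a sound elaboration of what the paper leaves implicit.
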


Conversely, if a Frobenius algebra in $\Cat{Hilb}$ satisfies (H), it
is induced by an orthonormal basis, and hence has a matrix
representation with nonnegative entries. Therefore we have found
another equivalent characterization of when (F) implies (H) in
$\Cat{Hilb}$ to add to our list in Theorem~\ref{thm:whenFimpliesH}.

\begin{proposition}
  A Frobenius algebra $A$ in $\Cat{Hilb}$ satisfies (H) if
  and only if there is a basis of $A$ such that the matrix of the
  comultiplication has nonnegative entries when represented on that basis.
\end{proposition}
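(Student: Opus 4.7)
The plan is to derive both directions by invoking results already established in the paper rather than reworking any structural arguments from scratch.

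For the forward direction, suppose $(A,\delta)$ satisfies (H). By Theorem \ref{thm:whenFimpliesH}, the Frobenius algebra is induced by an orthonormal basis $\{\ket{i}\}_{i \in I}$ of $A$ via $\delta\ket{i} = \ket{ii}$. Computing the matrix entries of $\delta$ in this basis gives
\[
  \langle jk \mid \delta \mid i\rangle \;=\; \langle jk \mid ii \rangle \;\in\; \{0,1\},
\]
so in particular the entries are nonnegative. Thus the basis of copyables witnesses the matrix condition.

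For the reverse direction, suppose there is an orthonormal basis of $A$ with respect to which $\delta$ has a nonnegative matrix $M$. Because $\delta$ is bounded in $\Hilb$, the rows and columns of $M$ are $\ell^2$-summable, so $M$ is a legitimate morphism in $\LMpos$ via the embedding of $\Hilb$ into $\LMat$ recalled in Section \ref{sec:background}. Since the embedding is a dagger monoidal functor (on operators expressed in the fixed basis), the Frobenius axioms transfer, giving a Frobenius algebra in $\LMpos$. The proposition immediately preceding this one then asserts that such a Frobenius algebra corresponds to a direct sum of one-dimensional algebras, i.e.\ to an orthonormal basis, and hence by Theorem \ref{thm:whenFimpliesH} satisfies (H).

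Since both directions reduce to cited results, there is no substantive obstacle; the only point requiring care is the bookkeeping in the reverse direction, namely checking that the matrix interpretation in the chosen orthonormal basis really lands in $\LMpos$ and that the Frobenius structure on $A$ agrees with the one inherited from $M$ under the embedding. Both are immediate from the $\ell^2$ boundedness of $\delta$ and the compatibility of the embedding with $\tensor$ and $\dag$ on basis-indexed matrix elements.
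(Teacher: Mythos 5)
Your proposal is correct and matches the paper's intended argument: the forward direction follows from Theorem~\ref{thm:whenFimpliesH} together with the explicit form $\delta\ket{i}=\ket{ii}$ on the basis of copyables, and the reverse direction transfers the structure to $\LMpos$ and invokes the preceding proposition. The paper leaves these steps implicit in the surrounding discussion, and your extra bookkeeping about $\ell^2$-summability of rows and columns and compatibility of the embedding with $\tensor$ and $\dag$ is exactly the right detail to check.
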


\subsection{Discussion}

How different is the situation with Frobenius algebras in these
matrix categories from $\Hilb$? In fact, it is not as different as it
might at first appear.
\begin{itemize}
  \item The category $\Hilb$ is equivalent to a full subcategory of
    the dagger monoidal category of complex matrices with
    $\ell^2$-summable rows and columns. The `only' assumption needed
    but not satisfied is positivity. 
  \item The result `looks' different, but beware. Consider \emph{group
    rings} (or algebras) over the complex numbers, for finite
    abelian groups. They can easily be set up to fulfil all our
    axioms, including (U), so that Theorem~\ref{cpvth} applies, and they
    decompose as direct sums of one-dimensional algebras.
    But the isomorphism which gives this decomposition may be quite
    non-obvious.\footnote{It would be interesting, for example, to know the computational complexity of determining this isomorphism, given a presentation of the group. As far as we know, this question has not been studied.} Note that copyable elements are idempotents, so the
    only group element which is copyable is the identity. 
  \item This decomposition result indeed shows that group rings over
    the complex numbers are very weak invariants of the groups. The
    group rings of two finite abelian groups will be isomorphic if the
    groups have the same order~\cite{strichartz:groupalgebras}! 
  \item However, this is highly sensitive to which field we are
    over. Group algebras over the \emph{rationals} are isomorphism
    invariants of groups~\cite{ingelstam:realhstar}. 
\end{itemize}

\section{Outlook}
\label{sec:outlook}

We are still investigating our main question, of whether (F) implies
(H), in $\Hilb$ and elsewhere. 

Beyond this, we see the following main lines for continuing a
development of categorical quantum mechanics applicable to
infinite-dimensional situations. 

\begin{itemize}
  \item We are now able to consider observables with infinite discrete
    spectra. Beyond this lie continuous observables and
    projection-valued measures; it remains to be seen how these can be
    analyzed in the setting of categorical quantum mechanics. 
  \item Complementary observables should be studied in this
    setting. The bialgebra approach studied in
    \cite{coeckeduncan:observables} is based on axiomatizing
    \emph{mutually unbiased bases}, and does not extend directly to
    the infinite-dimensional case. However, complementary observables
    are studied from a much more general perspective in works such as
    \cite{buschetal:operational}, and this should provide a good basis
    for suitable categorical axiomatizations. 
  \item This leads on to another point.  There may be other means,
    within the setting of categorical quantum mechanics, of
    representing observables, measurements and complementarity, which
    may be more flexible than the Frobenius algebra approach, and in a
    sense more natural, since tensor product structure is not inherent
    in the basic notion of measurement. Methodologically, one should
    beware  of concluding over-hastily that a particular approach is
    canonical, simply on the grounds that it captures the standard
    notion in finite-dimensional Hilbert spaces. There may be several
    ways of doing this, and some more definitive characterization
    would be desirable. 
  \item A related investigation to the present one is the work on
    nuclear and traced ideals in \cite{abramskyetal:nuclear}. 
    It seems likely that some combination of the ideas developed
    there, and those we have studied in this paper, will prove
    fruitful. 
\end{itemize}

\subsubsection*{Acknowledgement}

We thank Rick Blute for stimulating the early stages of the
research that led to this article.

\bibliographystyle{amsalpha}
\bibliography{frobcliff}

\end{document}